\theoremstyle{plain}
\newtheorem{theorem}{Theorem}[section]
\newtheorem{proposition}[theorem]{Proposition}
\newtheorem{lemma}[theorem]{Lemma}
\newtheorem{corollary}[theorem]{Corollary}
\theoremstyle{definition}
\newtheorem{definition}[theorem]{Definition}
\newtheorem{assumption}[theorem]{Assumption}
\theoremstyle{remark}
\let\oldemptyset\emptyset
\DeclareMathOperator{\R}{\mathbb{R}}
\DeclareMathOperator{\E}{\mathbb{E}}
\newcommand{\Tau}{\mathcal{T}}
\newcommand\abs[1]{\left\lvert#1\right\rvert}
\newcommand\norm[1]{\left\lVert#1\right\rVert}
\DeclareMathOperator*{\argmax}{arg\,max}
\icmltitlerunning{Convergence and Price of Anarchy Guarantees of the Softmax Policy Gradient in Markov Potential Games}
\begin{document}

\twocolumn[
\icmltitle{Convergence and Price of Anarchy Guarantees of the \\ Softmax Policy Gradient in Markov Potential Games}



\icmlsetsymbol{equal}{*}

\begin{icmlauthorlist}
\icmlauthor{Dingyang Chen}{aiisc}
\icmlauthor{Qi Zhang}{aiisc}
\icmlauthor{Thinh T. Doan}{vt}
\end{icmlauthorlist}

\icmlaffiliation{aiisc}{Artificial Intelligence Institute, University of South Carolina}
\icmlaffiliation{vt}{Department of Electrical and Computer Engineering, Virginia Tech}

\icmlcorrespondingauthor{Dingyang Chen}{dingyang@email.sc.edu}
\icmlcorrespondingauthor{Qi Zhang}{qz5@cse.sc.edu}


\vskip 0.3in
]



\printAffiliationsAndNotice{}  

\begin{abstract}
We study the performance of policy gradient methods for the subclass of Markov games known as Markov potential games (MPGs), which extends the notion of normal-form potential games to the stateful setting and includes the important special case of the fully cooperative setting where the agents share an identical reward function. Our focus in this paper is to study the convergence of the policy gradient method for solving MPGs under softmax policy parameterization, both tabular and parameterized with general function approximators such as neural networks. We first show the asymptotic convergence of this method to a Nash equilibrium of MPGs for tabular softmax policies. Second, we derive the finite-time performance of the policy gradient in two settings: 1)  using the log-barrier regularization, and 2) using the natural policy gradient under the best-response dynamics (NPG-BR). Finally, extending the notion of price of anarchy (POA) and smoothness in normal-form games, we introduce the POA for MPGs and provide a POA bound for NPG-BR. To our knowledge, this is the first POA bound for solving MPGs. To support our theoretical results, we empirically compare the convergence rates and POA of policy gradient variants for both tabular and neural softmax policies.
\end{abstract}

\section{Introduction}
The framework of multi-agent sequential decision making is often formulated as (variants of) Markov games (MGs) \cite{shapley1953stochastic}, which finds a wide range of real-world applications such as coordination of multi-robot systems \citep{corke2005networked}, traffic control \citep{chu2019multi}, power grid management \citep{callaway2010achieving}, etc.
Perhaps the most well-known solution concept for MGs is the Nash policy, which is also known as the Nash equilibrium in the special case of stateless Markov games (i.e., normal-form games).
In a Nash policy, every agent selects its actions independently of any other agent given the state and plays a best response to all other agents.
In the special case of single-agent Markov games, aka Markov decision processes (MDPs), Nash policies reduce to the agent's optimal policies.
Most existing algorithms seeking to find Nash policies are value-based (i.e., computing only value functions related to the MG), with examples including Nash Q-learning \cite{hu2003nash}, Hyper-Q Learning \cite{tesauro2003extending}, and Nash-VI for the special case of zero-sum MGs \cite{zhang2020model}.
Policy-based algorithms, including multi-agent actor-critic algorithms, have recently gained attention with impressive empirical success \cite{lowe2017multi,foerster2017counterfactual} as well as provable guarantees \cite{zhang2018fully,leonardos2021global,zhang2021gradient}.

This paper focuses on the MG subclass of {\em Markov potential games} (MPGs) \cite{macua2018learning,leonardos2021global,zhang2021gradient}, which is extended from the notion of (normal-form) potential game and also incorporates as a special case the fully cooperative MGs where all agents share the same reward to optimize.
The MPG structure allows for exploiting recent advances in single-agent policy gradient methods (e.g., \cite{agarwal2019theory}) to establish the convergence of policy gradient to (near-)Nash policies in MPGs.
Specifically, existing work has established finite-time convergence guarantees under the {\em direct} policy parameterization. 
In this paper, we are interested in the alternative {\em softmax} policy parameterization, both tabularly and with neural networks for learnable state representations.
For tabular softmax, we establish several convergence guarantees to (near-)Nash policies in MPGs in Section \ref{sec:Convergence of the tabular softmax policy gradient in MPGs}, extending their counterpart from the single-agent setting \cite{agarwal2019theory}.
We then empirically compare tabular softmax with neural network-based softmax parameterization in terms of their convergence rates.

MPGs can model many problems where outcomes of high social welfare, measured by the sum of all agents' values, are most desirable.
In these scenarios, the solution concept of the Nash policy is inadequate.
The price of anarchy (POA) of a policy, firstly studied in normal-form games \cite{roughgarden2015intrinsic}, is accordingly defined as the ratio between the sum of all agents' value under this policy and the maximum-possible value sum.
In this sense, the POA further measures the quality of a Nash policy.   
In Section \ref{sec:Bounding the price of anarchy in smooth Markov (potential) games}, we  extend the notion of POA to the stateful MGs and provide first POA bounds for near-Nash policies in MGs and for an approximate best-response dynamics in MPGs.
We empirically compare the POA of Nash policies achieved by variants of softmax policy gradient dynamics.

\subsection{Related work}
\textbf{Single-agent policy gradient convergence.}
Agarwal et al. firstly established the policy gradient convergence of to global optima in the single-agent setting under tabular softmax parameterization, specifically, asymptotic convergence of policy gradient ascent, finite-time convergence with log barrier regularization, and finite-time convergence with natural policy gradient.
Agarwal et al. also established finite-time convergence for direct policy parameterization \cite{agarwal2019theory}.
Mei et al. later established finite-time convergence of (regularized) policy gradient ascent under tabular softmax parameterization, with a convergence rate depending on a problem-specific variable \cite{mei2020global}.
This problem-specific variable in some sense is necessary, as Li et al. have shown that softmax policy gradient can take exponential time to converge \cite{li2021softmax}.

\textbf{Policy gradient convergence in MPGs.}
Extending the work by Agarwal et al. \cite{agarwal2019theory} from the single-agent setting, Leonardos et al. \cite{leonardos2021global} and Zhang et al. \cite{zhang2021gradient} both established finite-time convergence of projected gradient ascent under tabular softmax parameterization to near-Nash policies in MPGs.   
Fox et al. \cite{fox2022independent} established the asymptotic convergence of natural policy gradient to Nash policies in MPGs. 

\textbf{POA bounds in normal-form games.}
Mirrokni and Vetta \cite{mirrokni2004convergence} initiated the discussion on the importance of POA bounds beyond Nash equilibria.
Roughgarden \cite{roughgarden2015intrinsic} defined the smoothness of (normal-form) games and then established the first POA bounds of on near-Nash equilibria in smooth games.
Roughgarden \cite{roughgarden2015intrinsic} provided POA bounds for the maximum-gain best-response dynamics in smooth (normal-form) potential games.


\section{Preliminaries}
\paragraph{Markov game.}
We consider a Markov game (MG) $\langle \mathcal{N},\mathcal{S},\mathcal{A}, P, \vec{r} \rangle$ with
$N$ agents indexed by $i\in\mathcal{N}=\{1,...,N\}$,
state space $\mathcal{S}$,
action space $\mathcal{A} = \mathcal{A}^1\times\cdots\times\mathcal{A}^N$,
transition function $P: \mathcal{S}\times\mathcal{A}\to\Delta(\mathcal{S})$,
reward functions $\vec{r}=\{r^i\}_{i\in\mathcal{N}}$ with $r^i: \mathcal{S}\times\mathcal{A}\to\R$ for each $i\in\mathcal{N}$,
and initial state distribution $\mu \in \Delta(\mathcal{S})$.
We assume full observability for simplicity, i.e., each agent observes the state $s\in\mathcal{S}$.
Under full observability, we consider {\em product policies}, $\pi:\mathcal{S}\to\times_{i\in\mathcal{N}}\Delta(\mathcal{A}^i)$, that is factored as the product of individual policies $\pi^i:\mathcal{S}\to\Delta(\mathcal{A}^i)$, $\pi(a|s) = \prod_{i\in\mathcal{N}}\pi^i(a^i|s)$.
Define the discounted return for agent $i$ from time step $t$ as $G^i_t = \sum_{l=0}^{\infty} \gamma^l r^i_{t+l}$, where $r^i_t:=r^i(s_t,a_t)$ is the reward at time step $t$ for agent $i$. For agent $i$, product policy $\pi=(\pi^1,...,\pi^N)$ induces a value function defined as $V^i_\pi(s_t) = \E_{s_{t+1:\infty}, a_{t:\infty}\sim\pi}[G^i_t|s_t]$, and action-value function $Q^i_\pi(s_t, a_t) = \E_{s_{t+1:\infty}, a_{t+1:\infty}\sim\pi}[G^i_t|s_t, a_t]$.
Following policy $\pi$, agent $i$'s cumulative reward starting from $s_0\sim\mu$ is denoted as $V^i_\pi(\mu):=\E_{s_0\sim\mu}[V^i_\pi(s_0)]$.

It will be useful to define the (unnormalized) {\em discounted state visitation measure} by following policy $\pi$ after starting at $s_0\sim\mu$:
\begin{align*}
    d^\pi_\mu(s):=\E_{s_0\sim\mu}\left[\sum_{t=0}^{\infty}\gamma^t{\rm Pr}^\pi(s_t=s|s_0)\right]
\end{align*}
where ${\rm Pr}^\pi(s_t=s|s_0)$ is the probability that $s_t=s$ after starting at state $s_0$ and following $\pi$ thereafter.
We make a standard assumption for the discounted state visitation distribution to be positive for every state under any policy, as formally stated in Assumption \ref{assumption:discounted state visitation distribution}.
\begin{assumption}
\label{assumption:discounted state visitation distribution}
For any $\pi$ and any state $s$ of the Markov game, $d^\pi_\mu(s) > 0$.
\end{assumption}

\paragraph{Markov potential game.}
\begin{definition}[Markov potential game]
\label{definition:Markov potential game}
A Markov game is called a {\em Markov potential game} (MPG) if there exists a potential function $\phi:\mathcal{S}\times\mathcal{A} \to \mathbb{R}$ such that for any agent $i$, any pair of product policies $(\pi^i,\pi^{-i}), (\bar{\pi}^{i},\pi^{-i})$, and any state $s$:
\begin{align*}
    &\E_{s_{t+1:\infty}, a_{t:\infty}\sim(\bar{\pi}^{i},\pi^{-i})}\left[\sum_{t=0}^{\infty} \gamma^t r^i(s_t,a_t)|s_0=s\right] \\
    &-\E_{s_{t+1:\infty}, a_{t:\infty}\sim(\pi^i,\pi^{-i})}\left[\sum_{t=0}^{\infty} \gamma^t r^i(s_t,a_t)|s_0=s\right]\\
    =& \E_{s_{t+1:\infty}, a_{t:\infty}\sim(\bar{\pi}^{i},\pi^{-i})}\left[\sum_{t=0}^{\infty} \gamma^t \phi(s_t,a_t)|s_0=s\right] \\
    &- \E_{s_{t+1:\infty}, a_{t:\infty}\sim(\pi^i,\pi^{-i})}\left[\sum_{t=0}^{\infty} \gamma^t \phi(s_t,a_t)|s_0=s\right]
    .
\end{align*}
\end{definition}
Given a product policy $\pi$, we define the {\em total potential function} as $\Phi_\pi(s) := \E_{s_{t+1:\infty}, a_{t:\infty}\sim\pi}\left[\sum_{t=0}^{\infty} \gamma^t \phi(s_t,a_t)|s_0=s\right]$, and we can obtain that, for any agent $i$,
\begin{align}\label{eq:total potential function}
V^i_{\bar{\pi}^{i},\pi^{-i}}(s) -  V^i_{\pi^i,\pi^{-i}} (s)  =& \Phi_{\bar{\pi}^{i},\pi^{-i}}(s) -  \Phi_{\pi^i,\pi^{-i}}(s) \\
\text{giving}\quad 
\nabla_{\theta^i} V^i_\theta(s) =& \nabla_{\theta^i} \Phi_\theta(s). \nonumber
\end{align}
We also similarly define $\Phi_\pi(\mu) := \E_{s_0\sim\mu}[\Phi_\pi(s_0)]$.

As formally stated in Assumption \ref{assumption:Potential function is bounded}, we assume that $\phi$, and therefore $\Phi$, are bounded.
\begin{assumption} [Potential function is bounded]
\label{assumption:Potential function is bounded}
The potential function $\phi$ is bounded, such that the total potential function $\Phi$ is bounded as  $\Phi_{\rm min}\leq\Phi_\pi(s)\leq\Phi_{\rm max}~\forall s, \pi$.
\end{assumption}

\paragraph{Nash policy.}
We focus on the solution concept of ($\epsilon$-)Nash policy, as formally defined below.
\begin{definition}[$\epsilon$-Nash policy]
\label{definition:epsilon-Nash policy}
The {\em Nash-gap} of a policy $\pi$ is defined as \begin{align*}
    \mbox{Nash-gap}(\pi) := \max_{i}\left(\max_{\bar{\pi}^{i}} V^i_{\bar{\pi}^{i},\pi^{-i}}(\mu) - V^i_{\pi}(\mu)\right) 
\end{align*}
A product policy $\pi=(\pi_1,...,\pi_N)$ is an {\em $\epsilon$-Nash policy} if  $\mbox{Nash-gap}(\pi)\leq\epsilon$.
\end{definition}


\section{Convergence of the tabular softmax policy gradient in MPGs}
\label{sec:Convergence of the tabular softmax policy gradient in MPGs}
In this section, we consider individual policies $(\pi_1,...,\pi_N)$ to be independently parameterized in the softmax tabular manner from the global state, i.e., we have, for each agent $i$, its policy parameter $\theta^i=\{\theta^i_{s,a^i}: s\in\mathcal{S}, a^i\in\mathcal{A}^i\}$ and policy 
\begin{align*}
\pi^i_{\theta^i}(a^i|s)=\frac{\exp(\theta^i_{s,a^i})}{\sum_{\bar{a}^i\in\mathcal{A}^i}\exp(\theta^i_{s,\bar{a}^i})}.
\end{align*}
For the rest of this paper, we will abbreviate $\Phi_{\pi_\theta}$, $V^i_{\pi_\theta}$, $A^i_{\pi_\theta}$ as
$\Phi_{\theta}$, $V^i_{\theta}$, $A^i_{\theta}$, respectively.
Lemmas \ref{lemma:state-based tabular softmax multi-agent policy gradient} and \ref{lemma:smoothness_tabular_softmax} formally states the policy gradient form and the smoothness under the tabular softmax parameterization, respectively, which will be used to establish the convergence results in this section.

\begin{lemma}[Multi-agent tabular softmax policy gradient form, proof in Appendix \ref{sec:Proof of Lemma lemma:state-based tabular softmax multi-agent policy gradient}]
\label{lemma:state-based tabular softmax multi-agent policy gradient}
For the state-based tabular softmax multi-agent policy parameterization, we have:
\begin{align}\label{eq:state-based tabular softmax multi-agent policy gradient}
    \frac{\partial \Phi_\theta(\mu)}{\partial \theta^i_{s,a^i}} = \frac{\partial V^i_\theta(\mu)}{\partial \theta^i_{s,a^i}} = d^{\pi_\theta}_\mu(s) \pi^i_{\theta^i}(a^i|s)A^i_\theta(s, a^i)
\end{align}
where $A^i_\theta(s, a^i):=\E_{a^{-i}\sim\pi^{-i}_{\theta^{-i}}(\cdot|s)}[A^i_\theta(s, a^i, a^{-i})]$.
\end{lemma}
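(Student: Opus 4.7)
The plan is to combine the identity $\nabla_{\theta^i}V^i_\theta(s)=\nabla_{\theta^i}\Phi_\theta(s)$ from Equation~\eqref{eq:total potential function} with a multi-agent specialization of the classical policy gradient theorem, and then unpack the softmax log-derivative. Since $\Phi_\theta(\mu)=\E_{s_0\sim\mu}[\Phi_\theta(s_0)]$ and the same averaging is used to lift $V^i_\theta(s)$ to $V^i_\theta(\mu)$, it suffices to compute $\partial V^i_\theta(\mu)/\partial \theta^i_{s,a^i}$.

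Next, I would exploit the product-policy structure: because $\pi_\theta(a|s)=\prod_j \pi^j_{\theta^j}(a^j|s)$, only the $i$-th factor depends on $\theta^i$, so fixing $\pi^{-i}_{\theta^{-i}}$ induces a single-agent MDP for agent $i$ with transition $P^i(s'|s,a^i)=\E_{a^{-i}\sim\pi^{-i}_{\theta^{-i}}(\cdot|s)}[P(s'|s,a^i,a^{-i})]$ and reward $\bar r^i(s,a^i)=\E_{a^{-i}\sim\pi^{-i}_{\theta^{-i}}(\cdot|s)}[r^i(s,a^i,a^{-i})]$, whose discounted state visitation measure under $\pi^i_{\theta^i}$ is exactly $d^{\pi_\theta}_\mu$ and whose (single-agent) advantage coincides with the $A^i_\theta(s,a^i)$ defined in the lemma. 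Applying the standard policy gradient theorem in this induced MDP then gives
\begin{align*}
\frac{\partial V^i_\theta(\mu)}{\partial \theta^i_{s,a^i}} = \sum_{s'}d^{\pi_\theta}_\mu(s')\sum_{a'}\frac{\partial \pi^i_{\theta^i}(a'|s')}{\partial \theta^i_{s,a^i}}\,A^i_\theta(s',a').
\end{align*}

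Finally, I would substitute the standard softmax identity
\begin{align*}
\frac{\partial \pi^i_{\theta^i}(a'|s')}{\partial\theta^i_{s,a^i}} = \mathbbm{1}[s'=s]\,\pi^i_{\theta^i}(a'|s)\bigl(\mathbbm{1}[a'=a^i] - \pi^i_{\theta^i}(a^i|s)\bigr),
\end{align*}
and invoke the zero-mean baseline property $\sum_{a'}\pi^i_{\theta^i}(a'|s)A^i_\theta(s,a')=0$ (which follows from $\E_{a\sim\pi_\theta(\cdot|s)}[Q^i_\theta(s,a)-V^i_\theta(s)]=0$) to cancel the $-\pi^i_{\theta^i}(a^i|s)$ contribution. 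What remains is precisely $d^{\pi_\theta}_\mu(s)\,\pi^i_{\theta^i}(a^i|s)\,A^i_\theta(s,a^i)$, which matches both expressions in Equation~\eqref{eq:state-based tabular softmax multi-agent policy gradient} once Equation~\eqref{eq:total potential function} is used to equate the $\Phi_\theta$ and $V^i_\theta$ gradients.

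The main obstacle is conceptual rather than computational: one must verify that the product-policy assumption genuinely reduces agent $i$'s problem to a single-agent MDP with the claimed visitation and advantage, so that the single-agent policy gradient theorem imports verbatim. Once that reduction is recorded, the rest is a short algebraic exercise essentially identical to the single-agent softmax derivation of Agarwal et al.
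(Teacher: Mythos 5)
Your proposal is correct and follows essentially the same route as the paper's proof: both reduce to the single-agent softmax policy gradient computation of Agarwal et al.\ (policy gradient theorem, the softmax Jacobian $\partial \pi^i_{\theta^i}(a'|s')/\partial\theta^i_{s,a^i}=\mathbbm{1}[s'=s]\,\pi^i_{\theta^i}(a'|s)(\mathbbm{1}[a'=a^i]-\pi^i_{\theta^i}(a^i|s))$, and the zero-mean advantage baseline), with the remaining expectation over $a^{-i}\sim\pi^{-i}_{\theta^{-i}}$ producing the marginal advantage $A^i_\theta(s,a^i)$. Your explicit induced-single-agent-MDP reduction is simply a cleaner packaging of what the paper handles implicitly by taking the joint-action expectation and letting the indicator $\mathbbm{1}[a'^{,i}=a^i]$ select the $a^{-i}$-marginal.
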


\begin{lemma}[Smoothness of $\Phi$ under tabular softmax, proof in Appendix \ref{sec:Proof of Lemma lemma:smoothness_tabular_softmax}]
\label{lemma:smoothness_tabular_softmax}
Under tabular softmax $\pi_\theta$, $\Phi_\theta(s)$ is $\frac{41N}{4(1-\gamma)^3}$-smooth for any state $s$ (hence for any initial state distribution $\mu$).
\end{lemma}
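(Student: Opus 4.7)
The plan is to adapt the single-agent softmax smoothness argument of Agarwal et al.\ (their Lemma~7) to the multi-agent product-policy setting. Fix an arbitrary state $s$, base parameter $\theta$, and unit direction $u=(u^1,\dots,u^N)$ in the joint parameter space satisfying $\sum_i\|u^i\|^2=1$. By a standard reduction, smoothness of $\Phi_\theta(s)$ follows from the uniform bound
\begin{align*}
\left|\frac{d^2}{d\alpha^2}\Phi_{\theta+\alpha u}(s)\Big|_{\alpha=0}\right|\leq \frac{41N}{4(1-\gamma)^3}.
\end{align*}

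First I would write the potential value in resolvent form $\Phi_{\pi_\theta}=(I-\gamma P_{\pi_\theta})^{-1}\phi_{\pi_\theta}$, where $P_{\pi_\theta}(s'|s):=\sum_a\pi_\theta(a|s)P(s'|s,a)$ and $\phi_{\pi_\theta}(s):=\sum_a\pi_\theta(a|s)\phi(s,a)$, then differentiate twice in $\alpha$. Using the resolvent identity together with $\|(I-\gamma P_\alpha)^{-1}\|_\infty\leq 1/(1-\gamma)$, the problem reduces to controlling the first and second $\alpha$-derivatives of $P_{\pi_\theta}$ and $\phi_{\pi_\theta}$, which in turn reduces to controlling those of $\pi_{\theta+\alpha u}(a|s)$ itself. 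The three compositions of the resolvent supply the factor $(1-\gamma)^{-3}$.

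The essential new ingredient beyond the single-agent case is the product structure $\pi_\theta(a|s)=\prod_i\pi^i_{\theta^i}(a^i|s)$, which yields the additive score
\begin{align*}
\frac{d}{d\alpha}\log\pi_{\theta+\alpha u}(a|s)\Big|_{\alpha=0}=\sum_i\xi^i(a^i),\quad \xi^i(a^i):=u^i_{s,a^i}-\E_{\bar a^i\sim\pi^i_{\theta^i}}[u^i_{s,\bar a^i}].
\end{align*}
Squaring introduces cross-agent interaction terms; applying Cauchy--Schwarz and $\Var_{\pi^i}(u^i_{s,a^i})\leq \|u^i_{s,\cdot}\|^2$ gives
\begin{align*}
\sum_a\pi_\theta(a|s)\left(\sum_i\xi^i(a^i)\right)^2\leq N\sum_i\Var_{\pi^i}(u^i_{s,a^i})\leq N\sum_i\|u^i_{s,\cdot}\|^2\leq N,
\end{align*}
which is the origin of the factor $N$. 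Meanwhile, $\frac{d^2}{d\alpha^2}\log\pi_\theta$ stays block diagonal across agents (since $\log\pi_\theta=\sum_i\log\pi^i_{\theta^i}$ and each $\theta^i$ appears in only one term), so the Hessian-of-log contribution remains $O(1)$ per direction.

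Combining these $O(N)$ bounds on the first two $\alpha$-derivatives of $P_{\pi_\theta}$ and $\phi_{\pi_\theta}$ with the three resolvent factors of $(1-\gamma)^{-1}$ and the boundedness of $\phi$ from Assumption~\ref{assumption:Potential function is bounded} yields the claimed constant. The main obstacle is the careful bookkeeping of constants across the two terms in the resolvent second derivative and the cross-agent expansion of $\bigl(\sum_i\xi^i\bigr)^2$, so that the contributions collapse precisely to the numerator $41/4$ rather than a loose multiple; a secondary care point is keeping the per-state bound free of any $|\mathcal{A}^i|$ or $|\mathcal{S}|$ dependence, which works because the $\xi^i$-based bound above only uses $\|u\|=1$ and a variance estimate.
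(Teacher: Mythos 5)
Your argument is correct in outline and does reach the stated constant, but it takes a genuinely different route from the paper's. The paper never works with a joint direction: it views $\nabla^2_\theta\Phi_\theta$ as an $N\times N$ block matrix, bounds each block $\nabla^2_{\theta^i\theta^j}\Phi_\theta$ separately by $\frac{41}{4(1-\gamma)^3}$ via second directional derivatives that perturb only one agent's parameters (diagonal blocks) or two agents' parameters (off-diagonal blocks), and then imports the factor $N$ from the block-matrix spectral-norm inequality $\norm{H}_2\leq N\max_{i,j}\norm{H_{ij}}_2$ (Claim C.2 of \cite{leonardos2021global}). In your version the factor $N$ instead comes from Cauchy--Schwarz applied to the additive score $\sum_i\xi^i$ of the product policy along a single joint unit direction, and the diagonal/off-diagonal case split disappears. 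Your route buys a self-contained argument (no block-norm lemma) and, if pushed, a strictly better constant: since the $a^i$ are independent under a product policy and $\E_{\pi^i}[\xi^i]=0$, the cross terms in $\E_{\pi}\bigl[\bigl(\sum_i\xi^i\bigr)^2\bigr]$ vanish exactly, so that quantity is $\sum_i\Var_{\pi^i}(u^i_{s,\cdot})\leq 1$ rather than $N$; the unavoidable $N$-dependence then enters only through $\bigl(\sum_a\abs{d\pi_\alpha(a|s)/d\alpha}\bigr)^2\leq\bigl(2\sum_i\norm{u^i}\bigr)^2\leq 4N$ in the dynamics terms. Concretely, plugging $C_1\leq 2\sqrt N$ and $C_2\leq N+1$ into the standard softmax machinery $\frac{C_2}{(1-\gamma)^2}+\frac{2\gamma C_1^2}{(1-\gamma)^3}$ gives at most $\frac{8N}{(1-\gamma)^3}\leq\frac{41N}{4(1-\gamma)^3}$, so the constant bookkeeping you deferred does close. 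The one point you should make explicit --- a caveat the paper's own proof shares --- is the normalization of $\phi$: the argument uses $0\leq Q_\phi\leq\frac{1}{1-\gamma}$ (reward-style boundedness), not merely the two-sided bound of Assumption~\ref{assumption:Potential function is bounded}.
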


We next present our convergence results for the standard policy gradient dynamics without and with log barrier regularization in Sections \ref{sec:Asymptotic convergence of the policy gradient dynamics} and \ref{sec:Policy gradient dynamics with log-barrier regularization}, respectively, where Assumptions \ref{assumption:discounted state visitation distribution} and \ref{assumption:Potential function is bounded} hold.

\subsection{Asymptotic convergence of the policy gradient dynamics}
\label{sec:Asymptotic convergence of the policy gradient dynamics}
In Theorem \ref{theorem:Asymptotic convergence to Nash with gradient ascent}, we establish, under the tabular softmax policy parameterization, the asymptotic convergence to a Nash policy in a MPG of the standard policy gradient dynamics:
\begin{align}\label{eq:PG}
    \theta^i_{t+1} = \theta^i_{t} + \eta\nabla_{\theta^i} V^i_{\theta_{t}}(\mu) 
    = \theta^i_{t} + \eta\nabla_{\theta^i} \Phi_{\theta_{t}}(\mu) 
\end{align}
where $\eta$ is the fixed stepsize and the update is performed by every agent $i\in\mathcal{N}$. Theorem \ref{theorem:Asymptotic convergence to Nash with gradient ascent} relies on the assumption on the asymptotic convergence of the policy parameters, formally stated as follows.
\begin{assumption} 
\label{assumption:PG}
Following the policy gradient dynamics \eqref{eq:PG}, the policy parameter of every agent $i$ converges asymptotically, i.e., $\theta^i_t\to\theta^i_*$ as $t\to\infty,~\forall i$.
\end{assumption}

We remark here that the assumption that $\theta^i$ converges is made to ensure the convergence of $\{Q^i(s,a^i)\}_i$, which is then used to prove the theorem in a similar manner to \cite{agarwal2019theory}.
Note that, since the gradient is as Equation \eqref{eq:state-based tabular softmax multi-agent policy gradient}, the gradient converging to zero cannot directly imply the parameters converging to zero.
A sufficient condition for Assumption \ref{assumption:PG} to hold is that the stationary points of are {\em isolated}, which is originally assumed in Fox et al. \cite{fox2022independent} to establish the asymptotic convergence of natural policy gradient to Nash policies. 

\begin{theorem}[Asymptotic convergence of policy gradient, proof in Appendix \ref{sec:Proof of Theorem Asymptotic convergence to Nash with gradient ascent}]
\label{theorem:Asymptotic convergence to Nash with gradient ascent}
Suppose every agent $i\in\mathcal{N}$ follows the policy gradient dynamics \eqref{eq:PG} with $\eta\leq \text{min}(\frac{1-\gamma}{N\text{max}(5,\sqrt{N})(\Phi_{\text{max}}-\Phi_{\text{min}})},\frac{4(1-\gamma)^3}{41N})$ and Assumption \ref{assumption:PG} holds such that $\theta^i_t\to\theta^i_*$ for every agent $i$, then the product policy defined by $\theta_*=\{\theta^i_*\}_{i\in\mathcal{N}}$ is a Nash policy.
\end{theorem}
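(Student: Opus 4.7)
The plan is to exploit the MPG structure to turn the multi-agent update into single-agent gradient ascent on the potential $\Phi$, show that the limit $\theta_*$ is a stationary point of $\Phi_\theta(\mu)$, and then translate that stationarity into zero multi-agent advantages, from which the Nash property follows via the performance difference lemma.

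First I would note that, by Lemma~\ref{lemma:state-based tabular softmax multi-agent policy gradient}, the joint update is $\theta_{t+1}-\theta_t=\eta\,\nabla_\theta\Phi_{\theta_t}(\mu)$, i.e.\ ordinary gradient ascent on $\Phi$. The smoothness bound in Lemma~\ref{lemma:smoothness_tabular_softmax}, together with the choice $\eta\le 4(1-\gamma)^3/(41N)$, yields the descent lemma
\begin{align*}
\Phi_{\theta_{t+1}}(\mu)\;\ge\;\Phi_{\theta_t}(\mu)+\tfrac{\eta}{2}\,\|\nabla_\theta\Phi_{\theta_t}(\mu)\|^2,
\end{align*}
so $\Phi_{\theta_t}(\mu)$ is non-decreasing and, by Assumption~\ref{assumption:Potential function is bounded}, convergent; telescoping gives $\sum_t\|\nabla_\theta\Phi_{\theta_t}(\mu)\|^2<\infty$, hence $\nabla_\theta\Phi_{\theta_t}(\mu)\to 0$. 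Alternatively, Assumption~\ref{assumption:PG} directly gives $\theta_{t+1}-\theta_t\to 0$ and therefore $\eta\,\nabla_\theta\Phi_{\theta_t}(\mu)\to 0$. Either way, continuity of $\nabla\Phi_\theta(\mu)$ in $\theta$ combined with $\theta_t\to\theta_*$ yields $\nabla_\theta\Phi_{\theta_*}(\mu)=0$.

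Plugging this into Lemma~\ref{lemma:state-based tabular softmax multi-agent policy gradient} gives, for every agent $i$, state $s$, and action $a^i$,
\begin{align*}
d^{\pi_{\theta_*}}_\mu(s)\;\pi^i_{\theta^i_*}(a^i|s)\;A^i_{\theta_*}(s,a^i)\;=\;0.
\end{align*}
Assumption~\ref{assumption:discounted state visitation distribution} yields $d^{\pi_{\theta_*}}_\mu(s)>0$, while the softmax form guarantees $\pi^i_{\theta^i_*}(a^i|s)>0$, so $A^i_{\theta_*}(s,a^i)=0$ for all $i,s,a^i$. The multi-agent performance difference lemma then gives, for any agent $i$ and any deviation $\bar\pi^i$,
\begin{align*}
V^i_{\bar\pi^i,\pi^{-i}_{\theta_*}}(\mu)-V^i_{\theta_*}(\mu)\;=\;\tfrac{1}{1-\gamma}\,\E_{s\sim d^{(\bar\pi^i,\pi^{-i}_{\theta_*})}_\mu}\E_{a^i\sim\bar\pi^i(\cdot|s)}\!\left[A^i_{\theta_*}(s,a^i)\right]\;=\;0,
\end{align*}
so no unilateral deviation helps and $\pi_{\theta_*}$ is a Nash policy.

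The main obstacle is less any single calculation than handling the multi-agent semantics of the advantage carefully: in the PDL step $A^i_{\theta_*}$ is measured against the fixed opponent policies $\pi^{-i}_{\theta_*}$, which is exactly the right object for a unilateral deviation by agent $i$, and this is what makes the chain ``stationary point $\Rightarrow$ zero advantages $\Rightarrow$ Nash'' work here just as ``stationary point $\Rightarrow$ optimal'' does in the single-agent analysis of \cite{agarwal2019theory}. A secondary subtlety is that Assumption~\ref{assumption:PG} is genuinely needed, because softmax parameters can drift to infinity while the induced policy converges, so without it $\theta_t$ need not have a limit even when $\nabla\Phi_{\theta_t}(\mu)\to 0$; with the assumption in hand, one reads off stationarity from continuity and the rest of the argument is purely algebraic.
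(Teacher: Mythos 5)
Your argument is clean up to the point where you conclude $A^i_{\theta_*}(s,a^i)=0$ from $d^{\pi_{\theta_*}}_\mu(s)\,\pi^i_{\theta^i_*}(a^i|s)\,A^i_{\theta_*}(s,a^i)=0$, and that is exactly where the real difficulty of this theorem lives. Your step ``the softmax form guarantees $\pi^i_{\theta^i_*}(a^i|s)>0$'' is valid only if $\theta^i_*$ is a finite vector. The paper's proof is constructed precisely so as \emph{not} to rely on that: it uses Assumption \ref{assumption:PG} only to guarantee that the local action-values $Q^{t,i}_\phi(s,a^i)$ converge, and its core argument partitions actions into $I^{s,i}_+,I^{s,i}_0,I^{s,i}_-$ by the sign of the limiting advantage, shows that probabilities of actions in $I_+\cup I_-$ vanish, that the corresponding parameters drift to $\pm\infty$ (its Lemmas 7 and 11 explicitly derive $\min_{a^i}\theta^{t,i}_{s,a^i}\to-\infty$ and $\max_{a^i}\theta^{t,i}_{s,a^i}\to\infty$), and then obtains a contradiction from $I^{s,i}_+\neq\emptyset$ via a delicate comparison of gradient magnitudes along the trajectory. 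The remark the authors place right after Assumption \ref{assumption:PG} flags exactly the failure mode you step over: since the gradient has the form \eqref{eq:state-based tabular softmax multi-agent policy gradient}, the gradient tending to zero does not imply the advantages tend to zero, because the softmax factor $\pi^i_{\theta^i}(a^i|s)$ can itself tend to zero. This is the generic asymptotic behavior of softmax policy gradient (in the single-agent case of Agarwal et al., the optimal action's probability tends to $1$ and the parameters diverge), so reading the assumption as ``$\theta^i_*$ is finite'' renders the theorem nearly vacuous: a finite interior stationary point already has all advantages zero almost by definition, and the dynamics essentially never converges to one.

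Two smaller points. First, your descent-lemma route to $\nabla_\theta\Phi_{\theta_t}(\mu)\to 0$ is fine and matches the role of the step size condition $\eta\le 4(1-\gamma)^3/(41N)$ together with Lemma \ref{lemma:smoothness_tabular_softmax} (the paper instead proves the stronger per-state monotonicity $V^{(t+1)}_\phi(s)\ge V^{(t)}_\phi(s)$, which it needs so that $Q^t_\phi(s,a)$ converges for every state, not just $\Phi_{\theta_t}(\mu)$). Second, your final performance-difference step is the same as the paper's, except that the paper only needs $\E_{a^i\sim\bar\pi^i(\cdot|s)}[A^{\infty,i}_\phi(s,a^i)]\le 0$, i.e., $I^{s,i}_+=\emptyset$, rather than all advantages being exactly zero; keeping that weaker target is what makes the boundary case tractable. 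To repair your proof you would need to either (a) state explicitly that you are assuming $\theta_*\in\R^{\sum_i|\mathcal{S}||\mathcal{A}^i|}$ is finite and accept the much weaker theorem, or (b) replace the step ``$\pi_{\theta_*}>0$ hence $A=0$'' with an argument ruling out positive limiting advantages on actions whose probabilities vanish, which is the content of the paper's Lemmas 7--12 and the concluding contradiction.
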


\subsection{Policy gradient dynamics with log-barrier regularization}
\label{sec:Policy gradient dynamics with log-barrier regularization}
Inspired by \cite{agarwal2019theory} for the single-agent setting, we consider the log barrier regularized objective as defined below to establish finite-time convergence guarantees for the policy gradient dynamics:
\begin{align*}
    L_\lambda(\theta) :=& \Phi_\theta(\mu) -\lambda\textstyle\sum_{i=1}^N \E_{s\sim {\rm Unif}_\mathcal{S}}\left[{\rm KL}({\rm Unif}_{\mathcal{A}^i}, \pi_\theta(\cdot|s))\right] \\
    =& \Phi_\theta(\mu) +
    \lambda\textstyle\sum_{i=1}^N\left(\frac{\sum_{s,a^i}\log\pi^i_{\theta^i}(a^i|s)}{|\mathcal{S}||\mathcal{A}^i|}+\log{|\mathcal{A}^i|}\right)
\end{align*}
where the log barrier regularization, i.e., the KL divergence with respect to the uniform action-selection distribution, is applied to each agent's policy independently.
Lemma \ref{lemma:Log barrier regularization's approximate first-order stationary points are near-Nash} extends the results in \cite{agarwal2019theory} to the multi-agent setting, stating that, with the log barrier regularization, approximate first-order stationary points are near-Nash.
\begin{lemma}
[Log barrier regularization's approximate first-order stationary points are near-Nash, proof in Appendix \ref{sec:Proof of Lemma lemma:Log barrier regularization's approximate first-order stationary points are near-Nash}]
\label{lemma:Log barrier regularization's approximate first-order stationary points are near-Nash}
Suppose $\theta$ is such that $\norm{\nabla_\theta L_\lambda(\theta)}_2\leq\lambda/(2|\mathcal{S}|\max_i|\mathcal{A}^i|)$. Then the product policy $\pi_\theta=(\pi^1_{\theta^1},...,\pi^N_{\theta^N})$ is a $2\lambda M$-Nash policy where 
$M:= \max_{\pi, \pi'}\norm{\frac{d^{\pi}_\mu}{d^{\pi'}_\mu}}_{\infty}$, which is well-defined by Assumption \ref{assumption:discounted state visitation distribution}.
\end{lemma}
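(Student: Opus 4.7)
The overall strategy mirrors the single-agent log-barrier argument of \cite{agarwal2019theory}, transported to MPGs via the per-agent gradient identity in Lemma \ref{lemma:state-based tabular softmax multi-agent policy gradient}. First, I would compute $\nabla_\theta L_\lambda$ coordinatewise. Using the softmax derivative $\partial \log \pi^i_{\theta^i}(b|s')/\partial \theta^i_{s,a^i} = \mathbb{1}[s'=s]\bigl(\mathbb{1}[b=a^i]-\pi^i_{\theta^i}(a^i|s)\bigr)$, the contribution of the regularizer to $\partial L_\lambda/\partial\theta^i_{s,a^i}$ collapses to $\frac{\lambda}{|\mathcal{S}||\mathcal{A}^i|}\bigl(1-|\mathcal{A}^i|\pi^i_{\theta^i}(a^i|s)\bigr)$, so by Lemma \ref{lemma:state-based tabular softmax multi-agent policy gradient},
\begin{align*}
\frac{\partial L_\lambda(\theta)}{\partial \theta^i_{s,a^i}} = d^{\pi_\theta}_\mu(s)\,\pi^i_{\theta^i}(a^i|s)\,A^i_\theta(s,a^i) + \tfrac{\lambda}{|\mathcal{S}||\mathcal{A}^i|}\bigl(1-|\mathcal{A}^i|\pi^i_{\theta^i}(a^i|s)\bigr).
\end{align*}

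Second, the hypothesis $\|\nabla_\theta L_\lambda(\theta)\|_2 \leq \lambda/(2|\mathcal{S}|\max_i|\mathcal{A}^i|)$ controls every individual coordinate by the same quantity. Rearranging the coordinate bound for a given $(i,s,a^i)$ yields, after dividing by $d^{\pi_\theta}_\mu(s)\pi^i_{\theta^i}(a^i|s)>0$ (valid by Assumption \ref{assumption:discounted state visitation distribution}),
\begin{align*}
A^i_\theta(s,a^i) \;\leq\; \frac{1}{d^{\pi_\theta}_\mu(s)\,\pi^i_{\theta^i}(a^i|s)}\!\left(\tfrac{\lambda}{2|\mathcal{S}|\max_i|\mathcal{A}^i|} - \tfrac{\lambda}{|\mathcal{S}||\mathcal{A}^i|}\right) + \frac{\lambda}{|\mathcal{S}|\,d^{\pi_\theta}_\mu(s)}.
\end{align*}
The carefully chosen constant in the hypothesis makes the coefficient of the $1/\pi^i_{\theta^i}(a^i|s)$ term nonpositive, so the potentially explosive $1/\pi^i_{\theta^i}(a^i|s)$ factor is eliminated, leaving the clean bound $A^i_\theta(s,a^i) \leq \lambda/(|\mathcal{S}|\,d^{\pi_\theta}_\mu(s))$ uniformly in $(i,s,a^i)$.

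Third, I would invoke the multi-agent unilateral-deviation performance difference lemma: for any agent $i$ and any alternative $\bar\pi^i$,
\begin{align*}
V^i_{\bar\pi^i,\pi^{-i}_\theta}(\mu) - V^i_\theta(\mu) \;=\; \tfrac{1}{1-\gamma}\,\mathbb{E}_{s\sim d^{\bar\pi^i,\pi^{-i}_\theta}_\mu}\,\mathbb{E}_{a^i\sim\bar\pi^i(\cdot|s)}\bigl[A^i_\theta(s,a^i)\bigr].
\end{align*}
Plugging in the advantage bound from Step 2 and factoring the visitation measure as $d^{\bar\pi^i,\pi^{-i}_\theta}_\mu(s) = \bigl(d^{\bar\pi^i,\pi^{-i}_\theta}_\mu(s)/d^{\pi_\theta}_\mu(s)\bigr)\cdot d^{\pi_\theta}_\mu(s)$ bounds that ratio by $M$ (well-defined by Assumption \ref{assumption:discounted state visitation distribution}), which cancels the $d^{\pi_\theta}_\mu(s)$ in the denominator and leaves $\sum_s 1/|\mathcal{S}|=1$. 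Maximizing the resulting uniform bound over $i$ and $\bar\pi^i$ gives Nash-gap$(\pi_\theta)\leq 2\lambda M$ after tracking constants.

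The main obstacle is Step 2: the log-barrier coefficient $\lambda/(|\mathcal{S}||\mathcal{A}^i|)$ must exactly dominate the assumed gradient-norm tolerance $\lambda/(2|\mathcal{S}|\max_i|\mathcal{A}^i|)$ so that the problematic $1/\pi^i_{\theta^i}(a^i|s)$ factor cancels to give a state-dependent but action-independent advantage bound; without this cancellation a deterministic deviation $\bar\pi^i$ concentrated on a low-probability action could blow up the Nash-gap estimate. The remaining multi-agent adjustments (replacing the single-agent advantage with $A^i_\theta(s,a^i)$ averaged over $a^{-i}\sim\pi^{-i}_{\theta^{-i}}$, and introducing $M$ in place of the single-agent $\|d^{\pi^\star}_\mu/\mu\|_\infty$) are straightforward once the coordinatewise gradient identity is in hand.
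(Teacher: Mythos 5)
Your proposal is correct and follows essentially the same route as the paper's proof: the coordinatewise gradient identity from Lemma \ref{lemma:state-based tabular softmax multi-agent policy gradient} plus the softmax derivative of the regularizer, a uniform upper bound on $A^i_\theta(s,a^i)$, and the multi-agent performance difference lemma with the distribution-mismatch coefficient $M$. The one place you genuinely diverge is in how the $1/\pi^i_{\theta^i}(a^i|s)$ factor is neutralized: the paper first deduces $\pi^i_{\theta^i}(a^i|s)\geq 1/(2|\mathcal{A}^i|)$ for every action with $A^i_\theta(s,a^i)\geq 0$ and then bounds the two resulting terms separately, arriving at $A^i_\theta(s,a^i)\leq 2\lambda/(|\mathcal{S}|\,d^{\pi_\theta}_\mu(s))$, whereas you observe that the coefficient multiplying $1/\pi^i_{\theta^i}(a^i|s)$, namely $\lambda/(2|\mathcal{S}|\max_j|\mathcal{A}^j|)-\lambda/(|\mathcal{S}||\mathcal{A}^i|)$, is already nonpositive and can simply be dropped; this is cleaner, needs no case split on the sign of the advantage, and yields the tighter bound $A^i_\theta(s,a^i)\leq \lambda/(|\mathcal{S}|\,d^{\pi_\theta}_\mu(s))$ and hence a $\lambda M$ Nash-gap rather than $2\lambda M$. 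The only point to be careful about is the $\tfrac{1}{1-\gamma}$ prefactor in your performance difference lemma: the paper's $d^\pi_\mu$ is the \emph{unnormalized} visitation measure (summing to $1/(1-\gamma)$), so that prefactor belongs only with the normalized occupancy; since $M$ is a ratio this does not affect it, and with consistent normalization the final sum over states gives $\lambda M\leq 2\lambda M$ as claimed.
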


With Lemma \ref{lemma:Log barrier regularization's approximate first-order stationary points are near-Nash}, we establish the convergence rate as stated in Theorem \ref{theorem:Convergence rate with log barrier regularization}.
\begin{theorem}
[Convergence rate of the policy gradient with log barrier regularization, proof in Appendix \ref{sec:Proof of Theorem theorem:Convergence rate with log barrier regularization}]
\label{theorem:Convergence rate with log barrier regularization}
Letting $\beta_\lambda:=\frac{41N}{4(1-\gamma)^3}+\frac{2\lambda N}{|\mathcal{S}|}$, then $\beta_\lambda$ is an upper bound on the smoothness of $L_\lambda(\theta)$. 
Starting from $\theta_0 = 0$, consider the updates $\theta_{t+1} = \theta_t + \eta \nabla_\theta L_\lambda(\theta_t)$ with $\lambda = \epsilon/2M $ and $\eta=1/\beta_\lambda$. Then, for any initial distribution $\mu$,
we have $\min_{t<T} \mbox{Nash-gap}_t \leq \epsilon $ whenever
\begin{align*}
    T\geq&\frac{328NM^2|\mathcal{S}|^2\max_i|\mathcal{A}^i|^2(\Phi_{\rm max}-\Phi_{\rm min})}{(1-\gamma)^3\epsilon^2} 
    \\ & + \frac{32 NM|\mathcal{S}|\max_i|\mathcal{A}^i|^2(\Phi_{\rm max}-\Phi_{\rm min})}{\epsilon}.
\end{align*}
\end{theorem}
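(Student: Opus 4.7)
The plan is to deduce the convergence rate by a standard smooth-ascent argument applied to $L_\lambda$, and then convert the resulting small-gradient guarantee into a Nash-gap guarantee via Lemma \ref{lemma:Log barrier regularization's approximate first-order stationary points are near-Nash}.

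First, I would verify the claimed smoothness constant $\beta_\lambda=\frac{41N}{4(1-\gamma)^3}+\frac{2\lambda N}{|\mathcal{S}|}$. The first summand is immediate from Lemma \ref{lemma:smoothness_tabular_softmax} applied to $\Phi_\theta(\mu)$. For the log-barrier term $\lambda\sum_{i}\E_{s\sim{\rm Unif}_\mathcal{S}}[{\rm KL}({\rm Unif}_{\mathcal{A}^i},\pi^i_{\theta^i}(\cdot|s))]$, the parameters of different agents decouple, so the Hessian is block-diagonal in $(\theta^1,\dots,\theta^N)$. A direct computation of the softmax Hessian (as in the single-agent analysis of log-barrier regularization) bounds each per-agent block in operator norm by $\tfrac{2\lambda}{|\mathcal{S}|}$, and summing these contributions with the potential's smoothness yields the stated $\beta_\lambda$.

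Second, since $L_\lambda$ is $\beta_\lambda$-smooth and $\eta=1/\beta_\lambda$, the standard ascent lemma gives
\begin{equation*}
L_\lambda(\theta_{t+1})\geq L_\lambda(\theta_t)+\frac{1}{2\beta_\lambda}\norm{\nabla_\theta L_\lambda(\theta_t)}_2^2.
\end{equation*}
Telescoping from $t=0$ to $T-1$, together with the bound $\sup_\theta L_\lambda(\theta)\leq\Phi_{\rm max}$ (because the KL penalty is nonnegative) and $L_\lambda(\theta_0)\geq\Phi_{\rm min}$ (because $\theta_0=0$ makes every $\pi^i_{\theta^i_0}$ uniform, so the KL term vanishes at initialization), yields
\begin{equation*}
\min_{t<T}\norm{\nabla_\theta L_\lambda(\theta_t)}_2^2\leq\frac{2\beta_\lambda(\Phi_{\rm max}-\Phi_{\rm min})}{T}.
\end{equation*}

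Third, I would force this minimum below the threshold of Lemma \ref{lemma:Log barrier regularization's approximate first-order stationary points are near-Nash}, namely $\lambda/(2|\mathcal{S}|\max_i|\mathcal{A}^i|)$. Squaring and solving for $T$ gives the requirement
\begin{equation*}
T\geq\frac{8\beta_\lambda(\Phi_{\rm max}-\Phi_{\rm min})|\mathcal{S}|^2\max_i|\mathcal{A}^i|^2}{\lambda^2}.
\end{equation*}
Substituting the two pieces of $\beta_\lambda$ and $\lambda=\epsilon/(2M)$ splits this into the two summands displayed in the theorem: the $(1-\gamma)^{-3}\epsilon^{-2}$ term comes from the $\Phi$-smoothness contribution, while the $\epsilon^{-1}$ term comes from the $\tfrac{2\lambda N}{|\mathcal{S}|}$ piece (one power of $\lambda$ in the numerator cancels one power in the denominator). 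Having produced some $\theta_t$ with gradient norm at most $\lambda/(2|\mathcal{S}|\max_i|\mathcal{A}^i|)$, Lemma \ref{lemma:Log barrier regularization's approximate first-order stationary points are near-Nash} certifies that $\pi_{\theta_t}$ is a $2\lambda M=\epsilon$-Nash policy, completing the proof.

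The main obstacle is the smoothness bookkeeping for the log-barrier term across independently parameterized agents; once the block-diagonal structure of its Hessian is handled correctly, the remainder is essentially the classical descent-then-threshold argument combined with the MPG-specific Lemma \ref{lemma:Log barrier regularization's approximate first-order stationary points are near-Nash}.
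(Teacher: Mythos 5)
Your proposal is correct and follows essentially the same route as the paper: establish $\beta_\lambda$-smoothness of $L_\lambda$ from Lemma \ref{lemma:smoothness_tabular_softmax} plus the per-agent $\tfrac{2\lambda}{|\mathcal{S}|}$-smooth regularizer, apply the standard ascent lemma with $\eta=1/\beta_\lambda$ and telescope against $L_\lambda(\theta^*)-L_\lambda(\theta_0)\leq\Phi_{\rm max}-\Phi_{\rm min}$, then force the minimum gradient norm below the threshold of Lemma \ref{lemma:Log barrier regularization's approximate first-order stationary points are near-Nash} and substitute $\lambda=\epsilon/2M$. You even supply the justification for $L_\lambda(\theta^*)\leq\Phi_{\rm max}$ and $L_\lambda(\theta_0)\geq\Phi_{\rm min}$ (nonnegativity of the KL penalty and the uniform initialization at $\theta_0=0$) that the paper's write-up leaves implicit.
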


\subsection{Approximate best-response natural policy gradient dynamics}
\label{sec:Approximate best-response natural policy gradient dynamics}
In this subsection, we consider the natural policy gradient (NPG) dynamics extended from the single-agent setting to Markov potential games.
The NPG dynamics is defined as 
\begin{align}\label{eq:NPG}
    \theta^i_{t+1} = \theta^i_{t} + \eta (F^i_{\theta_t})^\dagger \nabla_{\theta^i} V^i_{\theta_{t}}(\mu) 
    = \theta^i_{t} + \eta(F^i_{\theta_t})^\dagger\nabla_{\theta^i} \Phi_{\theta_{t}}(\mu),
\end{align}
where $A^\dagger$ denotes the Moore–Penrose inverse of a matrix $A$ and $F^i_{\theta}$ is the Fisher information matrix for agent $i$ under product policy $\pi_\theta$:
\begin{align*}
    F^i_{\theta} = 
    \E_{s\sim d^{\pi_\theta}_\mu, a^i \sim \pi^i(s)} \left[\nabla_{\theta^i}\log\pi^i_{\theta^i}(a^i|s) \nabla_{\theta^i}\log\pi^i_{\theta^i}(a^i|s)^\top \right].
\end{align*}

\begin{lemma}[NPG is effectively soft policy iteration, proof in Appendix \ref{sec:Proof of Lemma lemma:NPG is effectively soft policy iteration}]
\label{lemma:NPG is effectively soft policy iteration}
For any agent $i$, the NPG update \eqref{eq:NPG} is effectively:
\begin{align*}
    \theta^i_{t+1} =& \theta^i_{t} + \eta A^i_{\theta_t} 
    \quad \text{ and } \\
    \pi^i_{\theta^i_{t+1}}(a^i|s) =& \pi^i_{\theta^i_{t}}(a^i|s) \frac{\exp\left(\eta A^i_{\theta_t}(s,a^i)\right)} {Z^i_t(s)}
\end{align*}
where $Z^i_t(s)=\sum_{a^i}\pi^i_{\theta^i_{t}}(a^i|s)\exp\left(\eta A^i_{\theta_t}(s,a^i)\right)$ is the normalization constant for the softmax.
\end{lemma}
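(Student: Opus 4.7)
The plan is to reduce the multi-agent softmax NPG update, agent by agent, to the familiar single-agent softmax NPG calculation, exploiting the fact that conditional on $\pi^{-i}$, each per-agent Fisher and gradient look just like the single-agent quantities with the ordinary advantage replaced by $A^i_\theta(s,a^i)$.

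First I would compute the softmax score function: $\nabla_{\theta^i_{s',b}}\log\pi^i_{\theta^i}(a^i|s) = \mathbbm{1}[s=s']\bigl(\mathbbm{1}[a^i=b]-\pi^i_{\theta^i}(b|s)\bigr)$. This makes the Fisher information matrix $F^i_\theta$ block-diagonal across states, with the block at state $s$ equal to $d^{\pi_\theta}_\mu(s)\bigl(\mathrm{diag}(\pi^i_{\theta^i}(\cdot|s))-\pi^i_{\theta^i}(\cdot|s)\pi^i_{\theta^i}(\cdot|s)^\top\bigr)$. Each block is singular; its null space is spanned by the all-ones vector $\mathbf{1}_{|\mathcal{A}^i|}$ because $\pi^i_{\theta^i}(\cdot|s)^\top\mathbf{1}=1$. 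Next, invoking the policy-gradient form of Lemma~\ref{lemma:state-based tabular softmax multi-agent policy gradient}, I would verify by direct multiplication that $F^i_\theta\,A^i_{\theta} = \nabla_{\theta^i}\Phi_\theta(\mu)$ when $A^i_\theta$ is viewed as a vector indexed by $(s,a^i)$. The factor $d^{\pi_\theta}_\mu(s)$ matches on both sides, and the rank-one piece $\pi^i(\cdot|s)\pi^i(\cdot|s)^\top$ drops out by the identity $\sum_{a^i}\pi^i_{\theta^i}(a^i|s)\,A^i_\theta(s,a^i)=0$, which follows from $A^i_\theta(s,a^i)=Q^i_\theta(s,a^i)-V^i_\theta(s)$. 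Hence $A^i_\theta$ is a particular solution of the linear system $F^i_\theta w = \nabla_{\theta^i}\Phi_\theta(\mu)$.

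By the defining property of the Moore--Penrose inverse, $(F^i_\theta)^\dagger\nabla_{\theta^i}\Phi_\theta(\mu)$ is the unique solution lying in the row space of $F^i_\theta$, so it differs from $A^i_\theta$ only by a vector $v$ in the null space of $F^i_\theta$. From the block structure, any such $v$ assigns a single state-wise constant $c_s$ to every coordinate $(s,a^i)$. Because the softmax parameterization is invariant under adding a state-wise constant to all logits at a given state, applying the update $\theta^i_{t+1}=\theta^i_t+\eta\bigl((F^i_\theta)^\dagger\nabla_{\theta^i}\Phi_\theta(\mu)\bigr)$ or the update $\theta^i_{t+1}=\theta^i_t+\eta A^i_{\theta_t}$ produces the same policy $\pi^i_{\theta^i_{t+1}}$. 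Substituting the latter into the softmax definition then immediately yields the multiplicative form $\pi^i_{\theta^i_{t+1}}(a^i|s)=\pi^i_{\theta^i_t}(a^i|s)\exp(\eta A^i_{\theta_t}(s,a^i))/Z^i_t(s)$, as claimed. The one subtlety is the singularity of $F^i_\theta$, which forces the pseudoinverse interpretation and the null-space argument; once that is handled the rest is bookkeeping, and the multi-agent aspect plays no role beyond having each agent independently marginalize over $a^{-i}\sim\pi^{-i}$ inside $A^i_\theta$.
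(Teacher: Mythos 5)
Your proof is correct, and it reaches the same conclusion by a somewhat more elementary and self-contained route than the paper. The paper's proof mirrors Lemma 5.1 of Agarwal et al.: it introduces the compatible function approximation error $L^i_\theta(w)=\norm{D^i_\theta((\nabla_{\theta^i}\log\pi^i_{\theta^i})w-A^i_\theta)}_2^2$, identifies its minimum-norm minimizer via the least-squares property of the Moore--Penrose inverse, verifies by pseudoinverse algebra that this minimizer coincides with $(F^i_\theta)^\dagger\nabla_{\theta^i}V^i_\theta(\mu)$, and then defers the remaining steps (that $A^i_\theta$ achieves zero error and that the residual per-state constants are absorbed by the softmax) to the single-agent argument. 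You instead work directly with the linear system: you compute the score function, exhibit $F^i_\theta$ explicitly as block-diagonal over states with blocks $d^{\pi_\theta}_\mu(s)(\mathrm{diag}(\pi^i_{\theta^i}(\cdot|s))-\pi^i_{\theta^i}(\cdot|s)\pi^i_{\theta^i}(\cdot|s)^\top)$, check by hand that $F^i_\theta A^i_\theta=\nabla_{\theta^i}\Phi_\theta(\mu)$ using $\sum_{a^i}\pi^i_{\theta^i}(a^i|s)A^i_\theta(s,a^i)=0$, and identify the null space exactly as the span of per-state constant vectors (which is where positivity of the softmax probabilities is used), so that the pseudoinverse solution and $A^i_{\theta_t}$ induce the same policy. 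Both arguments hinge on the same two facts --- the advantage is an exact solution of $F^i_\theta w=\nabla_{\theta^i}\Phi_\theta(\mu)$, and the ambiguity is a per-state logit shift that the softmax ignores --- but yours makes every step explicit rather than citing the single-agent proof, and it correctly flags that the identity $\theta^i_{t+1}=\theta^i_t+\eta A^i_{\theta_t}$ holds only up to such a shift, which is exactly what ``effectively'' means in the statement. One minor point to keep consistent: the absence of a $1/(1-\gamma)$ factor in the conclusion requires that the same (unnormalized) visitation measure $d^{\pi_\theta}_\mu$ be used in both the Fisher matrix and the gradient formula of Lemma \ref{lemma:state-based tabular softmax multi-agent policy gradient}; your verification implicitly assumes this, which matches the paper's convention.
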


In the single-agent setting with the tabular softmax parameterization, we know that the NPG update (soft policy iteration) can achieve $O(1/\epsilon)$ convergence rate with $\epsilon$ being the single-agent optimality gap, compared with the  $O(1/{\epsilon^2})$ convergence rates achieved by (projected) gradient ascent methods for the direct parameterization and for the tabular softmax parameterization with the log barrier regularization \cite{agarwal2019theory}.
For MPGs, Fox et al. \cite{fox2022independent} established the asymptotic convergence of natural policy gradient. However, deriving the finite-time convergence with the tabular softmax parameterization when the agents concurrently perform the soft policy iteration is challenging, primarily due to the technical difficulty of relating the potential function value and the Nash-gap.
Here, we take a step back and consider the non-concurrent soft policy iteration, where an agent will perform a number of soft policy iterations with fixing other agents' policies:
letting $\theta^i_{t, 0} = \theta^i_t$, for $k=1,...,K$:
\begin{align}\label{eq:NPG-BR inner}
    \theta^i_{t, k} =& \theta^i_{t, k} + A^i_{t, k-1} 
    \text{ with }\\ 
    A^i_{t, k-1} (s,a^i) =& \E_{a^{-i}\sim\pi^{-i}_{\theta^{-i}_{t}}(\cdot|s)}\left[A^i_{\theta^i_{t, k-1},\theta^{-i}_{t}}(s,a^i,a^{-i})\right]\nonumber,
\end{align}
where $A^i_{t, k}$ is agent $i$'s local advantage of its policy currently parameterized by $\theta^{-i}_t$ with respect to the other agents' policies parameterized by $\theta^{-i}_t$, and $K$ is a hyperparameter that controls how close agent $i$ will get to its best response to $\theta^{-i}_t$.

The above update is performed independently for all agents,
and for the next iteration $t+1$ we only keep the change of the agent that induces the maximum gain in its own value and, equivalently, in the total potential function:
\begin{align}\label{eq:NPG-BR outer}
    i^*_t =& \argmax_i \Phi_{\theta^i_{t,K},\theta^{-i}_t}(\mu)-\Phi_{\theta_t}(\mu), \nonumber\\
    \theta^{i^*_t}_{t+1} =& \theta^{i^*_t}_{t,K} \quad\text{and}\quad \theta^{i}_{t+1} = \theta^{i}_{t} \text{ for } i\neq i^*_t
\end{align}
which ensembles the standard maximum-gain best-response dynamics for normal-form games \cite{roughgarden2016twenty}.

Suppose we aim to converge to a $\epsilon$-Nash policy.
We can set $K$ large enough (specifically $K\geq\frac{4}{(1-\gamma)^2 \epsilon}$ \cite{agarwal2019theory}), such that every agent's inner-loop update (indexed by $k$ \eqref{eq:NPG-BR inner}) achieves at least $\epsilon/2$-near-best-response.
Therefore, if no agent's improvement in their local value or, equivalently, in the total potential function as computed in \eqref{eq:NPG-BR outer} is no larger than $\epsilon/2$, then the product policy is already a $\epsilon$-Nash policy; otherwise, we can significantly improve the total potential function such that the total number of outer-loop updates, indexed by $t$ in \eqref{eq:NPG-BR outer}, can be bounded.
This establishes the convergence rate of our approximate-best-response NPG dynamics (\ref{eq:NPG-BR inner},\ref{eq:NPG-BR outer}), as formally stated in Theorem \ref{theorem:Convergence of approximate-best-response NPG}.

\begin{theorem}[Convergence of the approximate-best-response NPG, proof in Appendix \ref{sec:Proof of Theorem theorem:Convergence of approximate-best-response NPG}]
\label{theorem:Convergence of approximate-best-response NPG}
Setting $K\geq\frac{4}{(1-\gamma)^2 \epsilon}$ for as the iteration complexity of the inner-loop \eqref{eq:NPG-BR inner}, then the approximate-best-response NPG dynamics (\ref{eq:NPG-BR inner},\ref{eq:NPG-BR outer}) converges to a $\epsilon$-Nash policy within 
$O(\frac{\Phi_{\rm max} - \Phi_{\rm min}}{(1-\gamma)^2 \epsilon^2})$ inner-loop steps.
\end{theorem}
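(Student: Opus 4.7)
The plan is to decompose the argument into an inner-loop analysis (showing $K \geq 4/((1-\gamma)^2 \epsilon)$ suffices to get each agent within $\epsilon/2$ of its best response against the currently fixed others) and an outer-loop potential-ascent analysis (showing that each non-terminating outer step increases $\Phi$ by more than $\epsilon/2$).

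First I would observe that, with $\theta^{-i}_t$ held fixed, the inner update \eqref{eq:NPG-BR inner} becomes ordinary single-agent NPG in the induced MDP with transitions $\bar P(s'|s,a^i) = \sum_{a^{-i}} \pi^{-i}_{\theta^{-i}_t}(a^{-i}|s) P(s'|s,a^i,a^{-i})$ and rewards $\bar r(s,a^i) = \sum_{a^{-i}} \pi^{-i}_{\theta^{-i}_t}(a^{-i}|s) r^i(s,a^i,a^{-i})$. In this induced MDP, agent $i$'s value and advantage coincide with $V^i_{\theta^i_{t,k},\theta^{-i}_t}$ and $A^i_{t,k-1}$ used in \eqref{eq:NPG-BR inner}, and by Lemma \ref{lemma:NPG is effectively soft policy iteration} the inner iteration is exactly soft policy iteration. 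Invoking the single-agent NPG convergence bound of Agarwal et al., soft policy iteration satisfies $V^\star(\mu) - V^{(K)}(\mu) \leq 2/((1-\gamma)^2 K)$, so our choice $K \geq 4/((1-\gamma)^2 \epsilon)$ gives, for every agent $i$ and every outer step $t$,
\begin{align*}
    \max_{\bar\pi^i} V^i_{\bar\pi^i,\pi^{-i}_{\theta^{-i}_t}}(\mu) - V^i_{\theta^i_{t,K},\theta^{-i}_t}(\mu) \leq \epsilon/2.
\end{align*}

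For the outer loop I would do a case split at iteration $t$. Define $\Delta_t := \Phi_{\theta^{i^*_t}_{t,K},\theta^{-i^*_t}_t}(\mu) - \Phi_{\theta_t}(\mu)$, which by the MPG identity \eqref{eq:total potential function} equals $\max_i \bigl(V^i_{\theta^i_{t,K},\theta^{-i}_t}(\mu) - V^i_{\theta_t}(\mu)\bigr)$. If $\Delta_t \leq \epsilon/2$, then for each $i$, adding and subtracting $V^i_{\theta^i_{t,K},\theta^{-i}_t}(\mu)$ gives
\begin{align*}
\max_{\bar\pi^i} V^i_{\bar\pi^i,\pi^{-i}_{\theta^{-i}_t}}(\mu) - V^i_{\theta_t}(\mu)
&\leq \underbrace{\max_{\bar\pi^i} V^i_{\bar\pi^i,\pi^{-i}_{\theta^{-i}_t}}(\mu) - V^i_{\theta^i_{t,K},\theta^{-i}_t}(\mu)}_{\leq \epsilon/2 \text{ (inner)}} \\
&\quad + \underbrace{V^i_{\theta^i_{t,K},\theta^{-i}_t}(\mu) - V^i_{\theta_t}(\mu)}_{\leq \Delta_t \leq \epsilon/2},
\end{align*}
so $\mbox{Nash-gap}(\pi_{\theta_t}) \leq \epsilon$ and the process may halt. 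Otherwise $\Delta_t > \epsilon/2$, and the outer update \eqref{eq:NPG-BR outer} increases the total potential by more than $\epsilon/2$.

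Combining the two: by Assumption \ref{assumption:Potential function is bounded} the total potential lives in $[\Phi_{\min},\Phi_{\max}]$, so the number of non-terminating outer iterations is at most $\lceil 2(\Phi_{\max}-\Phi_{\min})/\epsilon \rceil$; multiplying by the $K = O(1/((1-\gamma)^2\epsilon))$ inner steps per outer iteration yields the claimed $O((\Phi_{\max}-\Phi_{\min})/((1-\gamma)^2\epsilon^2))$ total inner-step count. I expect the main obstacle to be the clean reduction of the inner loop to a single-agent NPG problem on the induced MDP—specifically, checking that the induced MDP preserves the discount factor $\gamma$, that $\bar A(s,a^i)$ matches $A^i_{t,k-1}(s,a^i)$, and that the step-size convention of Lemma \ref{lemma:NPG is effectively soft policy iteration} lines up with the single-agent rate from Agarwal et al.; a minor secondary subtlety is that the maximum-gain rule in \eqref{eq:NPG-BR outer} is defined via $\Phi$, which by \eqref{eq:total potential function} equals the locally observable $V^i_{\theta^i_{t,K},\theta^{-i}_t}(\mu) - V^i_{\theta_t}(\mu)$, so the dynamics remain implementable without global access to $\Phi$.
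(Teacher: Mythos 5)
Your proposal is correct and follows essentially the same route as the paper's (much terser) proof: invoke the single-agent NPG rate of Agarwal et al.\ to get each agent within $\epsilon/2$ of its best response after $K\geq 4/((1-\gamma)^2\epsilon)$ inner steps, then argue that each non-halting outer iteration raises $\Phi$ by more than $\epsilon/2$, bounding the outer iterations by $O((\Phi_{\max}-\Phi_{\min})/\epsilon)$. Your write-up actually supplies details the paper omits (the induced-MDP reduction, the explicit triangle-inequality case split showing the Nash-gap is at most $\epsilon$ at termination), so there is nothing to correct.
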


\section{Bounding the price of anarchy in smooth Markov (potential) games}
\label{sec:Bounding the price of anarchy in smooth Markov (potential) games}
In Definition \ref{definition:Price of anarchy in Markov games}, we formally define the price of anarchy in Markov games, which directly extends the notion in normal-from games that measures the quality of a product policy in terms of maximizing the sum of all agents' values.

\begin{definition}[Price of anarchy in Markov games]
\label{definition:Price of anarchy in Markov games}
The {\em price of anarchy} (POA) of a product policy $\pi$ is defined as $\frac{\sum_i V^i_\pi(\mu)}{\max_{\bar{\pi}}\sum_i V^i_{\bar{\pi}}(\mu)}$, i.e., the ratio between the values summed over all agents and the largest summed values achieved by any product policy $\bar{\pi}$.
\end{definition}
For the rest of this section, we formally extend the notion of smoothness from normal-form games \cite{roughgarden2015intrinsic} to Markov games in Section \ref{sec:Definition and sufficient conditions of smooth Markov game}, and present our POA bounds in smooth Markov (potential) games in Sections \ref{sec:POA bound for near-Nash policies} and \ref{sec:POA bound for the approximate best-response dynamics}.

\subsection{Definition and sufficient conditions of smooth Markov game.}
\label{sec:Definition and sufficient conditions of smooth Markov game}
Definition \ref{definition:Smooth Markov game} extends the notion of smooth normal-form game to its counterpart in Markov games.
\begin{definition}[Smooth Markov game]
\label{definition:Smooth Markov game}
A Markov game is $(\alpha,\beta)$-smooth if
\begin{align*}
    \textstyle\sum_i V^i_{\pi^i_\prime,\pi^{-i}}(s) \geq \alpha V_{\pi_\prime}(s)  - \beta V_{\pi}(s) 
\end{align*}
for any $s$ and any pair of product policies $\pi,\pi_\prime$, where $V_\pi(s):=\sum_i V^i_\pi(s)$.
\end{definition}

Intuitively, in a smooth Markov game, the externality imposed by one agent on the value of the others is limited.
Therefore, we conjecture that a sufficient condition is that both the transition and reward functions of the Markov game are ``smooth''. Proposition \ref{proposition:Transition and reward smoothness as a sufficient condition for Markov game smoothness} verifies this conjecture, which formally defines the smoothness of the transition and reward functions and establishes it as a sufficient condition for the smoothness of the Markov game.
\begin{proposition}[Transition and reward smoothness as a sufficient condition for Markov game smoothness]
\label{proposition:Transition and reward smoothness as a sufficient condition for Markov game smoothness}
The reward functions $\{r^i\}_{i\in\mathcal{N}}$ of a Markov game is said to be {\em $(\lambda,\mu)$-smooth} if 
\begin{align*}
    \lambda r_{\pi_\prime}(s) \leq \textstyle\sum_i r^i_{\pi^i_\prime, \pi^{-i}}(s) \leq \mu r_{\pi}(s)
\end{align*}
for any state $s$ and any pair of product policies $\pi, \pi_\prime$, where $r^i_{\pi}(s):=\E_{a\sim\pi(s)}[r^i(s,a)]$ and $r_\pi(s):=\sum_i r^i_\pi(s)$. 
Letting $M_\pi := (I-\gamma P_\pi)^{-1}$, the transition function $P$ of a Markov game is said to be  {\em $(\kappa,\nu)$-smooth} if
\begin{align*}
    M_{\pi^i_\prime, \pi^{-i}} r \geq \kappa M_{\pi_\prime}r - \nu M_{\pi}r
\end{align*}
for any $r\in\mathbb{R}^{|\mathcal{S}|}$ and any pair of product policies $\pi, \pi_\prime$.
For a Markov game, if its reward functions are $(\lambda,\mu)$-smooth and its transition function is $(\kappa,\nu)$-smooth, then the Markov game is $(\alpha=\kappa\lambda,\beta=\mu\nu)$-smooth.
\end{proposition}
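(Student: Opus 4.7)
The plan is to write each agent's value in resolvent form, apply transition smoothness inside each summand, pull the $i$-independent operators outside the sum, and finally invoke reward smoothness on the aggregated expected reward vector.

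First, I would record that the Bellman identity $(I - \gamma P_\pi) V^i_\pi = r^i_\pi$ gives $V^i_\pi = M_\pi r^i_\pi$ as a vector in $\mathbb{R}^{|\mathcal{S}|}$, and summing over $i$ yields $V_\pi = M_\pi r_\pi$. The key positivity fact I would rely on is that the Neumann series $M_\pi = \sum_{t\geq 0} \gamma^t P_\pi^t$ is entry-wise non-negative, so multiplying a componentwise inequality of vectors in $\mathbb{R}^{|\mathcal{S}|}$ by $M_\pi$ (or by $M_{\pi_\prime}$) preserves its direction.

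Second, fixing agent $i$, I would apply the $(\kappa,\nu)$-smoothness of $P$ with the free vector $r := r^i_{\pi^i_\prime, \pi^{-i}}$, which is allowed because the definition ranges over arbitrary $r\in\mathbb{R}^{|\mathcal{S}|}$, to obtain
\begin{equation*}
V^i_{\pi^i_\prime,\pi^{-i}} = M_{\pi^i_\prime,\pi^{-i}} r^i_{\pi^i_\prime,\pi^{-i}} \geq \kappa M_{\pi_\prime} r^i_{\pi^i_\prime,\pi^{-i}} - \nu M_\pi r^i_{\pi^i_\prime,\pi^{-i}}.
\end{equation*}
Summing over $i$ and factoring out the $i$-independent operators gives $\sum_i V^i_{\pi^i_\prime,\pi^{-i}} \geq \kappa M_{\pi_\prime} R - \nu M_\pi R$, where $R := \sum_i r^i_{\pi^i_\prime,\pi^{-i}}$.

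Third, I would apply the $(\lambda,\mu)$-smoothness of the rewards componentwise, $\lambda r_{\pi_\prime} \leq R \leq \mu r_\pi$: the lower bound feeds into the positive-coefficient term $\kappa M_{\pi_\prime} R$, while the upper bound feeds into the negative-coefficient term $-\nu M_\pi R$ (the sign flip being legitimate by entry-wise non-negativity of $M_\pi$ and the implicit $\nu \geq 0$). Substituting back $M_{\pi_\prime} r_{\pi_\prime} = V_{\pi_\prime}$ and $M_\pi r_\pi = V_\pi$ then produces $\sum_i V^i_{\pi^i_\prime,\pi^{-i}} \geq \kappa\lambda V_{\pi_\prime} - \mu\nu V_\pi$, which is exactly $(\kappa\lambda, \mu\nu)$-smoothness of the Markov game. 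The main (and essentially only) subtlety is the \emph{order} of the two reductions: transition smoothness must be applied before reward smoothness, because the resolvent $M_{\pi^i_\prime, \pi^{-i}}$ depends on $i$ and does not factor out of the sum directly, whereas $M_{\pi_\prime}$ and $M_\pi$ do. Once the ordering is fixed, the argument reduces to routine entry-wise manipulation of inequalities between non-negative linear operators.
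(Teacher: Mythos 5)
Your proof is correct and follows essentially the same route as the paper: write each $V^i_{\pi^i_\prime,\pi^{-i}}$ in resolvent form, apply transition smoothness with $r = r^i_{\pi^i_\prime,\pi^{-i}}$, factor the $i$-independent operators $M_{\pi_\prime}$ and $M_\pi$ out of the sum, and then invoke reward smoothness (lower bound on the $\kappa$-term, upper bound on the $\nu$-term). Your explicit remark about the entry-wise non-negativity of the Neumann series and the necessity of applying transition smoothness first is a welcome clarification of details the paper leaves implicit, but it does not change the argument.
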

\begin{proof}
We can establish
\begin{align*}
  \textstyle\sum_i V^i_{\pi^i_\prime,\pi^{-i}}(s) =& \textstyle\sum_i M_{\pi^i_\prime,\pi^{-i}} r^i_{\pi^i_\prime,\pi^{-i}} \\
  \geq& \textstyle\sum_i \kappa M_{\pi_\prime}r^i_{\pi^i_\prime,\pi^{-i}} - \nu M_{\pi}r^i_{\pi^i_\prime,\pi^{-i}} \\
  =& \kappa M_{\pi_\prime} \textstyle\sum_i r^i_{\pi^i_\prime,\pi^{-i}} - \mu M_{\pi} \textstyle\sum_i r^i_{\pi^i_\prime,\pi^{-i}}\\
  \geq& \kappa M_{\pi_\prime}\lambda r_{\pi_\prime} - \mu M_{\pi} \mu r_{\pi}  = \kappa\lambda V_{\pi_\prime} - \mu\nu V_{\pi}
\end{align*}
where the two inequalities are due to the smoothness of the transition function and the reward functions, respectively, which completes the proof.
\end{proof}

\subsection{POA bound for near-Nash policies}
\label{sec:POA bound for near-Nash policies}
We here derive our POA bound in Theorem \ref{theorem:POA of Nash in smooth Markov games} for near-Nash policies in smooth Markov games, generalizing from smooth normal-form games \cite{roughgarden2016twenty} to smooth Markov games.
Similar to the normal-form game counterpart, we describe the result for {\em $\epsilon$-ratio}-Nash policies as defined in Definition \ref{definition:epsilon-ratio-Nash policy} to ease presentation.

\begin{definition}[$\epsilon$-ratio-Nash policy]
\label{definition:epsilon-ratio-Nash policy}
A product policy $\pi=(\pi_1,...,\pi_N)$ is an {\em $\epsilon$-ratio-Nash policy} if, for any agent $i$,
$\max_{\pi^i_\prime} V^i_{\pi^i_\prime,\pi^{-i}}(\mu)\leq(1-\epsilon) V^i_{\pi}(\mu)$.
\end{definition}
\begin{theorem}[POA of $\epsilon$-ratio-Nash in smooth Markov games]
\label{theorem:POA of Nash in smooth Markov games}
In any $(\alpha,\beta)$-smooth Markov game, the POA of any $\epsilon$-ratio-Nash policy is at least $\frac{(1-\epsilon)\alpha}{1+(1-\epsilon)\beta}$.  
\end{theorem}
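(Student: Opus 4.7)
The plan is to mirror the classical POA argument of Roughgarden for smooth normal-form games, but replacing the per-stage reward smoothness with the value-level smoothness from Definition \ref{definition:Smooth Markov game}. The one-line intuition is: a Nash policy does at least as well as unilateral deviations to the social optimum, and smoothness bounds the value of those deviations from below in terms of the optimum and the Nash itself.

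More concretely, I would let $\pi^\star \in \arg\max_{\bar\pi} V_{\bar\pi}(\mu)$ denote any social-welfare-maximizing product policy, so that $\mathrm{POA}(\pi) = V_\pi(\mu)/V_{\pi^\star}(\mu)$. Next I would unpack the $\epsilon$-ratio-Nash condition agent by agent, applied to the deviation $\pi^{i,\star}$ (the $i$th component of $\pi^\star$): this yields $V^i_{\pi^{i,\star},\pi^{-i}}(\mu) \le \tfrac{1}{1-\epsilon}V^i_\pi(\mu)$ (equivalently $(1-\epsilon)V^i_{\pi^{i,\star},\pi^{-i}}(\mu)\le V^i_\pi(\mu)$), and summing over $i$ gives
\begin{equation*}
V_\pi(\mu) \;\ge\; (1-\epsilon)\sum_{i} V^i_{\pi^{i,\star},\pi^{-i}}(\mu).
\end{equation*}

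Then I would invoke $(\alpha,\beta)$-smoothness in the form of Definition \ref{definition:Smooth Markov game} with $\pi_\prime = \pi^\star$, first pointwise in $s$ and then integrating against $\mu$, to obtain
\begin{equation*}
\sum_i V^i_{\pi^{i,\star},\pi^{-i}}(\mu) \;\ge\; \alpha\, V_{\pi^\star}(\mu) \;-\; \beta\, V_\pi(\mu).
\end{equation*}
Chaining the two inequalities gives $V_\pi(\mu) \ge (1-\epsilon)\bigl[\alpha V_{\pi^\star}(\mu) - \beta V_\pi(\mu)\bigr]$, and rearranging produces
\begin{equation*}
\bigl(1+(1-\epsilon)\beta\bigr) V_\pi(\mu) \;\ge\; (1-\epsilon)\alpha\, V_{\pi^\star}(\mu),
\end{equation*}
from which the stated bound $\mathrm{POA}(\pi) \ge \frac{(1-\epsilon)\alpha}{1+(1-\epsilon)\beta}$ follows by division.

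There is no substantial technical obstacle; the argument is essentially a two-line calculation. The only care points are (i) correctly reading the $\epsilon$-ratio-Nash inequality so that the $(1-\epsilon)$ factor attaches to the summed deviation values rather than to $V_\pi$ (which would flip the direction), and (ii) justifying that the smoothness inequality, originally stated pointwise in $s$, can be integrated against $\mu$ without loss, and that the division step at the end is valid, which requires $V_{\pi^\star}(\mu)>0$ so that the ratio defining POA is meaningful (the bound is trivial otherwise). Assumption \ref{assumption:discounted state visitation distribution} is not needed here, and boundedness of $\Phi$ plays no role in this proof.
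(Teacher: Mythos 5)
Your proof is correct and follows essentially the same route as the paper: apply the $\epsilon$-ratio-Nash condition to each agent's deviation to its component of the optimal policy, sum over agents, invoke $(\alpha,\beta)$-smoothness with $\pi_\prime=\pi_*$, and rearrange. Your reading of the $\epsilon$-ratio-Nash inequality as $(1-\epsilon)V^i_{\pi^{i}_*,\pi^{-i}}(\mu)\le V^i_\pi(\mu)$ is exactly how the paper's proof uses Definition \ref{definition:epsilon-ratio-Nash policy}, and your care points (working at the level of $\mu$, positivity of $V_{\pi_*}(\mu)$ for the division) are minor tightenings of the same argument.
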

\begin{proof}
Consider setting $\pi_\prime=\pi_*$ in Definition \ref{definition:Smooth Markov game} where $\pi_*$ is a policy that achieves the optimal joint value, we have 
\begin{align*}
    V_{\pi}(s) = \textstyle\sum_i V^i_{\pi}(s) \geq& \textstyle\sum_i (1-\epsilon) V^i_{\pi^i_*,\pi^{-i}}(s) \\
    \geq& (1-\epsilon)\left(\alpha V_{\pi_*}(s)  - \beta V_{\pi}(s) \right)
\end{align*}
where the first inequality is by the definition of $\pi$ being $\epsilon$-ratio-Nash and the second inequality is by the definition of smooth Markov game.
Rearranging the terms completes the proof.
\end{proof}

\subsection{POA bound for the approximate best-response dynamics}
\label{sec:POA bound for the approximate best-response dynamics}
Inspired by the POA bounds for the best-response dynamics in smooth (normal-form) potential games \cite{roughgarden2015intrinsic}, we here derive the counterpart for smooth MPGs in Theorem \ref{theorem:POA bound of maximum-gain epsilon-ratio-best-response in smooth MPGs}, which bounds the number of policies generated from the {\em maximum-gain $\epsilon$-ratio-best-response} dynamics: Until product policy $\pi$ is $\epsilon$-ratio-Nash, update the maximum-gain agent to its best response, where the maximum-gain agent is the agent that induces the maximum increase in value after its best response. 

\begin{figure*}[ht]
    \centering
    \begin{subfigure}{\textwidth}
    \centering
    \includegraphics[width=.91\textwidth]{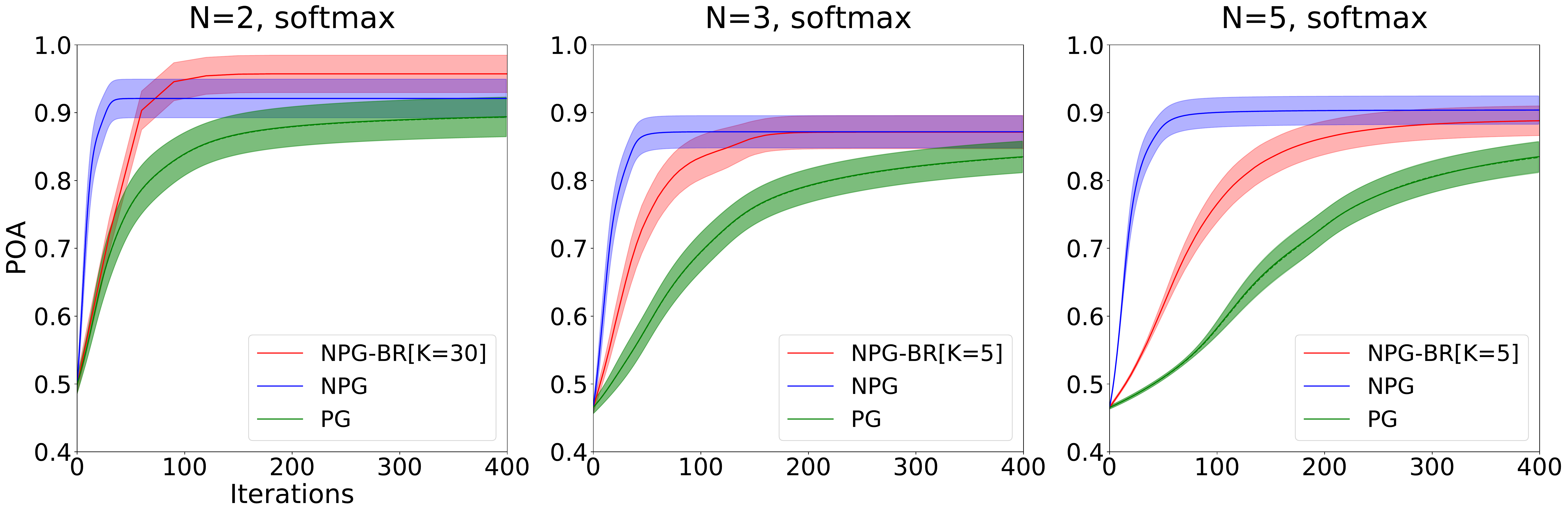}
    \end{subfigure}
    \begin{subfigure}{\textwidth}
    \centering
    \includegraphics[width=.91\textwidth]{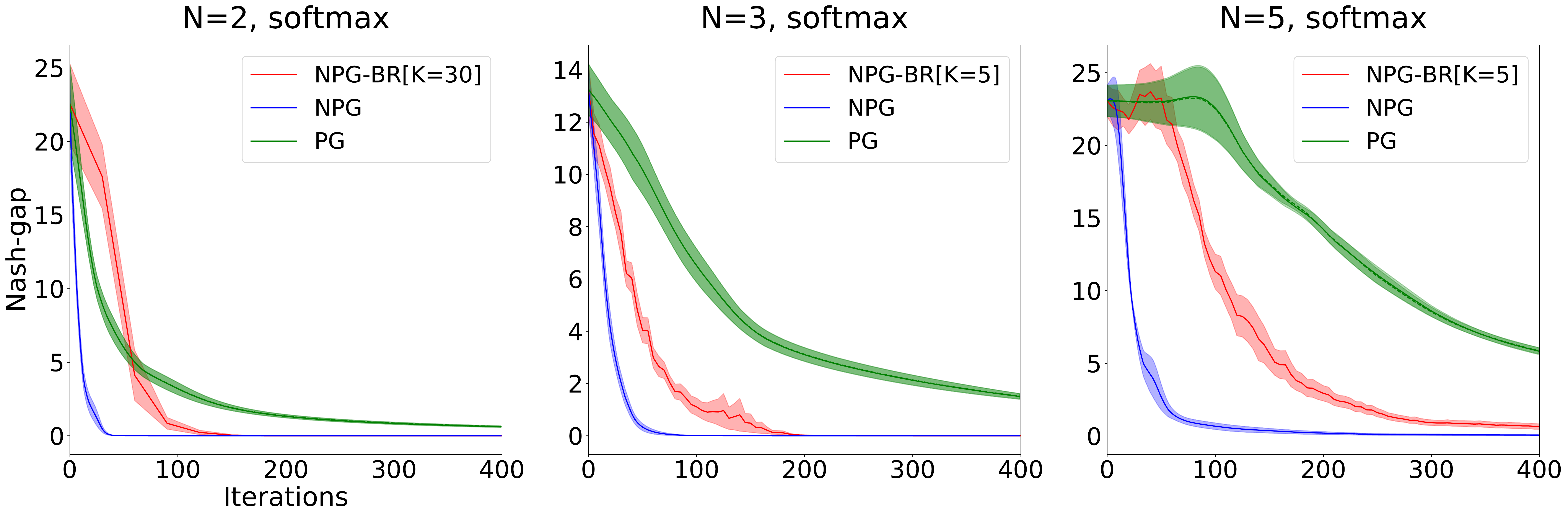}
    \end{subfigure}
    \caption{POA (top) and Nash-gap (bottom) under the tabular softmax parameterization (means and standard errors over 10 random initializations).The dashed lines are the curves of the log barrier regularized version of the algorithms with the same color. }
    \label{fig:Softmax-POA-Nash-gap}
\end{figure*}

We make Assumption \ref{assumption:potential leq V} that also appears in the normal-form game setting \cite{roughgarden2015intrinsic}. 
\begin{assumption}
\label{assumption:potential leq V}
We have $0<\Phi_\pi(s) \leq V_\pi(s)$ for any product policy $\pi$ and any state $s$.
\end{assumption}

\begin{theorem}[POA bound of maximum-gain $\epsilon$-ratio-best-response in smooth MPGs, proof in Appendix \ref{sec:Proof of Theorem theorem:POA bound of maximum-gain epsilon-ratio-best-response in smooth MPGs}]
\label{theorem:POA bound of maximum-gain epsilon-ratio-best-response in smooth MPGs}
Consider a $(\alpha,\beta)$-smooth MPG where Assumption \ref{assumption:potential leq V} holds. Let $\pi_*=\argmax_{\pi}V_\pi(\mu)$ be a globally optimal policy and $\sigma>0$ be a constant for analysis.
Consider the sequence of maximum-gain $\epsilon$-ratio-best-response policies $\pi_0, ...,\pi_T$.
Then, all but at most
\begin{align} \label{eq:bound the number of bad, epsilon-ratio}
    \log_{\rho} {\frac{\Phi_{\rm max}}{\Phi_0}} - T \log_{\rho} {\frac{1}{1-\epsilon}}
\end{align}
policies $\pi_t$ in the sequence satisfy
\begin{align}\label{eq:good policies}
    V_{\pi_t}(\mu) \geq \frac{\alpha}{(1+\beta)(1+\sigma)} V_{\pi_*}(\mu)
\end{align}
where $\rho = (1-\epsilon)(1+\frac{\sigma(1+\beta)}{N})$ and $\Phi_t := \Phi_{\pi_t}(\mu)$.
\end{theorem}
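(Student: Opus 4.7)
The plan is to adapt Roughgarden's smooth-potential-game framework to the Markov setting, using the MPG identity \eqref{eq:total potential function} and Assumption \ref{assumption:potential leq V} to transport smoothness estimates from the joint value $V_\pi(\mu)$ into the potential $\Phi_\pi(\mu)$. Call a policy $\pi_t$ \emph{bad} if it violates \eqref{eq:good policies} and \emph{good} otherwise; let $B$ denote the number of bad policies in the sequence $\pi_0,\ldots,\pi_{T-1}$. The goal is to upper bound $B$ and then rearrange to match \eqref{eq:bound the number of bad, epsilon-ratio}.

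The core technical step is to show $\Phi_{t+1}\ge\rho\,\Phi_t$ at every bad $\pi_t$. First, apply $(\alpha,\beta)$-smoothness with $\pi'=\pi_*$ to obtain $\sum_i V^i_{\pi_*^i,\pi_t^{-i}}(\mu)\ge \alpha V_{\pi_*}(\mu)-\beta V_{\pi_t}(\mu)$. Because the selected agent $i_t^*$ is the maximum-gain one, its exact best-response gain against $\pi_t^{-i_t^*}$ is at least the average across agents, so
\begin{align*}
\max_{\pi'^{i_t^*}} V^{i_t^*}_{\pi'^{i_t^*},\pi_t^{-i_t^*}}(\mu)-V^{i_t^*}_{\pi_t}(\mu)\;\ge\;\tfrac{1}{N}\bigl[\alpha V_{\pi_*}(\mu)-(1+\beta)V_{\pi_t}(\mu)\bigr].
\end{align*}
Substituting the bad-state hypothesis $\alpha V_{\pi_*}(\mu)>(1+\beta)(1+\sigma)V_{\pi_t}(\mu)$ and invoking Assumption \ref{assumption:potential leq V} to replace $V_{\pi_t}(\mu)$ by $\Phi_t$ on the right yields an exact-best-response gain of at least $\tfrac{\sigma(1+\beta)}{N}\Phi_t$. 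Finally, the MPG identity $\Phi_{t+1}-\Phi_t=V^{i_t^*}_{\pi_{t+1}}(\mu)-V^{i_t^*}_{\pi_t}(\mu)$ combined with the $\epsilon$-ratio guarantee $V^{i_t^*}_{\pi_{t+1}}(\mu)\ge(1-\epsilon)\max_{\pi'^{i_t^*}} V^{i_t^*}_{\pi'^{i_t^*},\pi_t^{-i_t^*}}(\mu)$ and one more application of Assumption \ref{assumption:potential leq V} should collapse to $\Phi_{t+1}\ge(1-\epsilon)(1+\tfrac{\sigma(1+\beta)}{N})\Phi_t=\rho\,\Phi_t$.

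Next, I would establish a uniform per-step lower bound $\Phi_{t+1}\ge(1-\epsilon)\Phi_t$ at \emph{every} step (good or bad), to account for potential decay at non-bad steps under the $\epsilon$-approximation. This uses $V^{i_t^*}_{\pi_{t+1}}(\mu)\ge(1-\epsilon)\max V^{i_t^*}_{\cdot,\pi_t^{-i_t^*}}(\mu)\ge(1-\epsilon)V^{i_t^*}_{\pi_t}(\mu)$ together with the MPG identity and Assumption \ref{assumption:potential leq V}. Telescoping across the $T$ transitions (with the stronger factor $\rho$ at each of the $B$ bad steps and the floor $1-\epsilon$ at the remaining $T-B$ steps, which is valid because $\rho\ge 1-\epsilon$ whenever $\sigma\ge 0$) yields $\Phi_T/\Phi_0\ge\rho^B(1-\epsilon)^{T-B}$. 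Applying $\Phi_T\le\Phi_{\max}$, taking $\log_\rho$, and solving for $B$ then produces the bound in \eqref{eq:bound the number of bad, epsilon-ratio}.

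The main obstacle is the algebraic reduction in the $\rho$-growth step: after injecting the $(1-\epsilon)$ factor into the smoothness-plus-bad-state gain bound, the intermediate expression contains cross-terms of the form $(1-\epsilon)\alpha V_{\pi_*}(\mu)-(1+(1-\epsilon)\beta)V_{\pi_t}(\mu)$ that must be reorganized together with the bad-state inequality and Assumption \ref{assumption:potential leq V} to land cleanly on $\rho=(1-\epsilon)(1+\sigma(1+\beta)/N)$, rather than on a nearby but messier constant. A secondary difficulty is verifying the per-step $(1-\epsilon)$ floor uniformly across good steps, where the individual agent's value $V^{i_t^*}_{\pi_t}$ must be related to $\Phi_t$ without an explicit upper bound on $V/\Phi$ beyond what Assumption \ref{assumption:potential leq V} provides.
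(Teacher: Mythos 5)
Your bad-step analysis (smoothness with $\pi'=\pi_*$, max-gain $\ge$ average, then Assumption \ref{assumption:potential leq V} to pass from $V_{\pi_t}$ to $\Phi_t$) is essentially the paper's argument. But there is a genuine gap in how you handle the good steps, and it prevents you from reaching the stated bound. You read the dynamics as performing $\epsilon$-approximate best responses at every step, so you only claim a floor $\Phi_{t+1}\ge(1-\epsilon)\Phi_t$ at good steps, allowing the potential to \emph{decay}. The dynamics in Section \ref{sec:POA bound for the approximate best-response dynamics} is different: the updated agent plays an \emph{exact} best response, and $\epsilon$ enters only through the termination test (the sequence runs \emph{until} the policy is $\epsilon$-ratio-Nash). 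Hence at every non-terminal step, good or bad, the max-gain agent's value — and by \eqref{eq:total potential function} the potential — \emph{grows} by a multiplicative factor exceeding $\tfrac{1}{1-\epsilon}$. The paper's telescoping is $\Phi_0\,(1+\tfrac{\sigma(1+\beta)}{N})^{B}(\tfrac{1}{1-\epsilon})^{T-B}\le\Phi_{\max}$, and the identity $\log_\rho(1+\tfrac{\sigma(1+\beta)}{N})=1+\log_\rho\tfrac{1}{1-\epsilon}$ then yields exactly $B\le\log_\rho\tfrac{\Phi_{\max}}{\Phi_0}-T\log_\rho\tfrac{1}{1-\epsilon}$.

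Your telescoping $\Phi_T/\Phi_0\ge\rho^{B}(1-\epsilon)^{T-B}$ instead gives
\begin{align*}
B \;\le\; \log_\rho\frac{\Phi_{\max}}{\Phi_0} \;+\; (T-B)\log_\rho\frac{1}{1-\epsilon},
\end{align*}
where the $T$-dependent term enters with the \emph{wrong sign}: your bound is never smaller than $\log_\rho\tfrac{\Phi_{\max}}{\Phi_0}$, whereas the theorem's bound is never larger than it (they agree only at $\epsilon=0$). So "solving for $B$" cannot produce \eqref{eq:bound the number of bad, epsilon-ratio}. The secondary difficulty you flag is also real but moot once the dynamics is read correctly: with exact best responses there is no $-\epsilon V^{i_t^*}_{\pi_t}$ cross-term to absorb, so no upper bound on $V^{i}/\Phi$ is needed, and the bad-step growth factor is $1+\tfrac{\sigma(1+\beta)}{N}=\rho/(1-\epsilon)$ rather than your weaker $\rho$. (Your reading of the dynamics is closer to Corollary \ref{corollary:Bounding the POA of NPG-BR dynamics in smooth MPGs}, but note that even there the good-step factor is a strict growth factor $1+\tfrac{\epsilon}{2(1-\gamma)}$ coming from the halting condition, not a decay floor.)
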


Since our NPG dynamics (\ref{eq:NPG-BR inner},\ref{eq:NPG-BR outer}) described in Section \ref{sec:Approximate best-response natural policy gradient dynamics}  is an instance of maximum-gain approximate-best-response dynamics, we have Corollary \ref{corollary:Bounding the POA of NPG-BR dynamics in smooth MPGs} directly induced by Theorem \ref{theorem:POA bound of maximum-gain epsilon-ratio-best-response in smooth MPGs}.

\begin{corollary}[POA bound of the approximate-best-response NPG dynamics (\ref{eq:NPG-BR inner},\ref{eq:NPG-BR outer}) in smooth MPGs, proof in Appendix \ref{sec:Proof of Corollary corollary:Bounding the POA of NPG-BR dynamics in smooth MPGs}]
\label{corollary:Bounding the POA of NPG-BR dynamics in smooth MPGs}
Consider a $(\alpha,\beta)$-smooth MPG where Assumption \ref{assumption:potential leq V} holds. Let $\pi_*=\argmax_{\pi}V_\pi(\mu)$ be a globally optimal policy and $\sigma>0$ be a constant for analysis.
Consider the sequence of policies $\pi_0, ...,\pi_T$ generated from the approximate-best-response NPG dynamics (\ref{eq:NPG-BR inner},\ref{eq:NPG-BR outer}) with $K\geq\frac{4}{(1-\gamma)^2 \epsilon}$.
Then, all but at most
\begin{align} \label{eq:bound the number of bad, NPG-BR}
    \log_{\rho} {\frac{\Phi_{\rm max}}{\Phi_0}} - T \log_{\rho} {\left(1+\frac{\epsilon}{2(1-\gamma)}\right)}
\end{align}
policies $\pi_t$ in the sequence satisfy \eqref{eq:good policies}, where 
$\rho = (1+\frac{\sigma(1+\beta)}{N})/(1+\frac{\epsilon}{2(1-\gamma)})$.
\end{corollary}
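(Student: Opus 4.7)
The plan is to derive the corollary as a direct instantiation of Theorem~\ref{theorem:POA bound of maximum-gain epsilon-ratio-best-response in smooth MPGs}, by identifying the effective best-response ratio parameter achieved by the NPG-BR dynamics~\eqref{eq:NPG-BR inner}--\eqref{eq:NPG-BR outer} and then verifying algebraically that the theorem's bounds reduce to the stated forms. Since the outer loop in \eqref{eq:NPG-BR outer} commits to the single agent producing the largest potential improvement, the overall dynamics is structurally a maximum-gain approximate-best-response dynamics; the only task is to translate the inner-loop accuracy produced by the NPG-BR update schedule into the multiplicative (``$\epsilon$-ratio'') form demanded by Theorem~\ref{theorem:POA bound of maximum-gain epsilon-ratio-best-response in smooth MPGs}.

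First, I would quantify the inner loop quality. By the single-agent NPG analysis of Agarwal et al.\ invoked in Section~\ref{sec:Approximate best-response natural policy gradient dynamics}, the choice $K\geq\frac{4}{(1-\gamma)^2\epsilon}$ guarantees that, for every agent $i$, the inner-loop output $\pi^i_{t,K}$ satisfies
\begin{align*}
\max_{\pi^i_\prime} V^i_{(\pi^i_\prime,\pi^{-i}_t)}(\mu) - V^i_{(\pi^i_{t,K},\pi^{-i}_t)}(\mu) \leq \frac{\epsilon}{2}.
\end{align*}
Using the standard lower bound $V^i_\pi(\mu)\geq 1-\gamma$ (available under the usual reward normalization, combined with Assumption~\ref{assumption:potential leq V} for strict positivity), I would upgrade this additive inequality to the multiplicative form
\begin{align*}
\max_{\pi^i_\prime}V^i_{(\pi^i_\prime,\pi^{-i}_t)}(\mu) \leq \left(1+\frac{\epsilon}{2(1-\gamma)}\right) V^i_{(\pi^i_{t,K},\pi^{-i}_t)}(\mu),
\end{align*}
so that each inner loop effectively produces an $\epsilon'$-ratio best response with $1-\epsilon':=1/(1+\frac{\epsilon}{2(1-\gamma)})$.

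Second, I would substitute this $\epsilon'$ into Theorem~\ref{theorem:POA bound of maximum-gain epsilon-ratio-best-response in smooth MPGs}. Plugging $1-\epsilon'=1/(1+\frac{\epsilon}{2(1-\gamma)})$ into $\rho=(1-\epsilon')(1+\frac{\sigma(1+\beta)}{N})$ gives precisely $\rho=(1+\frac{\sigma(1+\beta)}{N})/(1+\frac{\epsilon}{2(1-\gamma)})$, and $\log_\rho(1/(1-\epsilon'))$ simplifies to $\log_\rho(1+\frac{\epsilon}{2(1-\gamma)})$, both matching the expressions asserted in the corollary. The good-policy condition~\eqref{eq:good policies} and the prefactor $\log_\rho(\Phi_{\rm max}/\Phi_0)$ carry over from the theorem unchanged, completing the reduction.

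The delicate part is the conversion from additive to multiplicative approximation in the first step: the factor $(1-\gamma)^{-1}$ appearing in the corollary essentially forces a uniform lower bound on $V^i_\pi(\mu)$ of order $1-\gamma$, whereas the paper does not state such a bound explicitly beyond Assumption~\ref{assumption:potential leq V}. Pinning down the precise normalization or problem-dependent lower bound under which the conversion is tight is where the genuine work of the proof lies; once that is granted, the rest is a mechanical substitution into Theorem~\ref{theorem:POA bound of maximum-gain epsilon-ratio-best-response in smooth MPGs}.
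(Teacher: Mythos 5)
Your high-level plan --- view NPG-BR as a maximum-gain approximate-best-response dynamics and recover the corollary from Theorem~\ref{theorem:POA bound of maximum-gain epsilon-ratio-best-response in smooth MPGs} --- is the right intuition, and your algebra for $\rho$ and for $\log_\rho\bigl(1+\frac{\epsilon}{2(1-\gamma)}\bigr)$ matches the statement exactly. But the route is not the paper's, and the step you yourself flag as ``delicate'' is a genuine gap rather than a detail to be pinned down later. To upgrade the additive guarantee $\max_{\pi^i_\prime}V^i_{\pi^i_\prime,\pi^{-i}_t}(\mu)-V^i_{\pi^i_{t,K},\pi^{-i}_t}(\mu)\leq\epsilon/2$ to the multiplicative one $\max_{\pi^i_\prime}V^i_{\pi^i_\prime,\pi^{-i}_t}(\mu)\leq\bigl(1+\frac{\epsilon}{2(1-\gamma)}\bigr)V^i_{\pi^i_{t,K},\pi^{-i}_t}(\mu)$, you need $V^i_{\pi^i_{t,K},\pi^{-i}_t}(\mu)\geq 1-\gamma$ for \emph{each agent individually}. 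Nothing in the paper supplies this: Assumption~\ref{assumption:potential leq V} only yields $\sum_i V^i_\pi\geq\Phi_\pi>0$ for the aggregate, and with rewards in $[0,1]$ a single agent's value can be arbitrarily close to zero, so the claimed per-agent ratio best response can fail. A second obstruction to the black-box reduction is that the dynamics analyzed in Theorem~\ref{theorem:POA bound of maximum-gain epsilon-ratio-best-response in smooth MPGs} updates the chosen agent to its \emph{exact} best response (the ``$\epsilon$-ratio'' there is the stopping rule, not the update accuracy), whereas NPG-BR only reaches an approximate one; so even granting your ratio guarantee you would still need to re-verify the bad-policy growth bound \eqref{eq:Phi increase ratio} under approximate updates before citing the theorem.

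The paper's proof avoids the per-agent reduction entirely and instead re-runs the counting argument: for bad policies it asserts \eqref{eq:Phi increase ratio} as in the theorem, and for good policies it converts the additive $\epsilon/2$ gain per outer iteration into a multiplicative gain of the potential in one step, via $\frac{\Phi_{t+1}-\Phi_t}{\Phi_t}=\frac{V^{i}_{\pi_{t+1}}-V^{i}_{\pi_t}}{\Phi_t}\geq\frac{V_{\pi_{t+1}}-V_{\pi_t}}{V_{\pi_t}}>\frac{\epsilon/2}{1-\gamma}$, using $\Phi_t\leq V_{\pi_t}$ from Assumption~\ref{assumption:potential leq V} together with boundedness of the values, and then counts $\Phi_0\bigl(1+\frac{\sigma(1+\beta)}{N}\bigr)^m\bigl(1+\frac{\epsilon}{2(1-\gamma)}\bigr)^{T-m}\leq\Phi_{\rm max}$. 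The additive-to-multiplicative conversion thus happens once, at the level of $\Phi$ and the aggregate value, which is what lets the argument get by on Assumption~\ref{assumption:potential leq V} alone (it still leans on a normalization of $V_{\pi_t}$, but not on a per-agent lower bound). To salvage your version you would have to add the uniform lower bound $V^i_\pi(\mu)\geq 1-\gamma$ as an explicit extra hypothesis; otherwise, follow the paper and establish the two growth-rate estimates directly for the NPG-BR sequence.
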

\
\begin{figure*}[ht]
    \centering
    \begin{subfigure}{\textwidth}
    \centering
    \includegraphics[width=.91\textwidth]{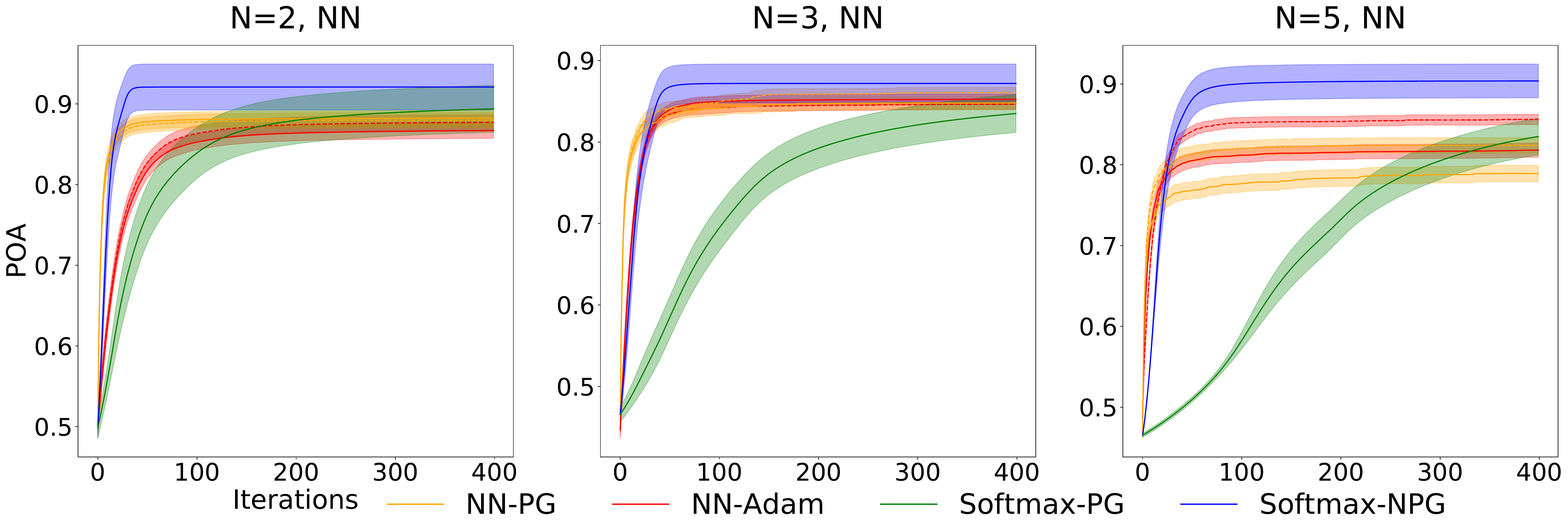}
    \end{subfigure}
    \begin{subfigure}{\textwidth}
    \centering
    \includegraphics[width=.91\textwidth]{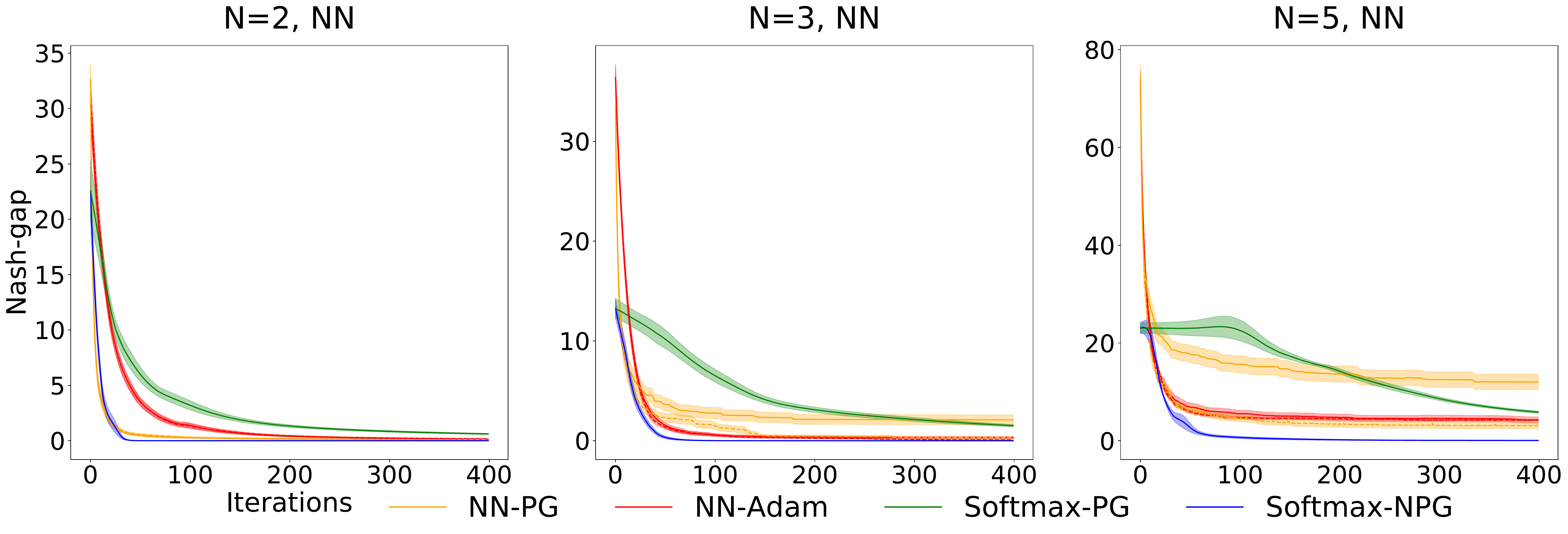}
    \end{subfigure}
    \caption{POA (top) and Nash-gap (bottom) under the NN parameterization (means and standard errors over 20 random initializations).The dashed lines are the curves of the log barrier regularized version of the algorithms with the same color. }
    \label{fig:NN-POA-Nash-gap}
\end{figure*}

\section{Experiments}
\label{sec:Experiments}
\textbf{Environment.} We evaluate the algorithms on Coordination Game, which extends the two players version in \cite{zhang2021gradient} to multiple players $N=2,3,5$. The state space and action space are $\mathcal{S} = \mathcal{S}^1\times\cdots\times\mathcal{S}^N, \mathcal{A} = \mathcal{A}^1\times\cdots\times\mathcal{A}^N$, respectively, where $\forall i\leq N, \mathcal{S}^i \in\{0,1\},\mathcal{A}^i \in\{0,1\}$. The reward is shared by all the agents (cooperative setting, a special case of Markov Potential Games), and it encourages agents to be in the same local state. To have rewards with more different levels, we design the reward in the way that when the number of  agents occupy local state $0$ or $1$, whichever the maximum, to be the same, the state with more local states of $0$s is larger than the one with more $1$s.  The transition function for each agent $i$'s local state is $P(s^i=0|a^i=0)=1-\epsilon$, $P(s^i=0|a^i=1)=\epsilon$, where $\epsilon=0.1$.

\textbf{Algorithms.} We exhaustively evaluate the performance of policy gradient \textbf{PG}, natural policy gradient \textbf{NPG}, and best response natural policy gradient \textbf{NPG-BR} with softmax parameterization under the tabular setting, w/wo log barrier regularizer. Besides softmax parameterization for PG, we also consider the neural network parameterization with softmax activation in the last layer \textbf{NN-PG}. Precisely, the policy gradient update rule for agent $i$'s neural network policy $\pi^i_{\theta^i}\colon \mathcal{S}\to \Delta(\mathcal{A}^i)$ is 
\begin{align*}
\nabla_{\theta^i} \Phi_{\theta}(\mu)=
\sum_s\sum_{a^i}d_\mu^{\pi_\theta}(s)\pi^i_{\theta^i}(a^i|s)A^i_\theta(s,a^i)\nabla_{\theta^i}\pi_{\theta^i}^i(a^i|s)
.
\end{align*}

We run each algorithm in Coordination Game with $N=2,3,5$ agents and plot the Nash-gap and POA as the evaluation metrics.
The algorithms, both the tabular softmax and the neural network parameterizations, share the same initial policy parameters, which are sampled from the normal distribution of mean 0 and standard deviation 1.
For each log barrier regularized algorithm, we performed a grid search for its coefficient $\lambda\in\{0.01, 0.1, 1.0, 10.0, 100.0\}$ and picked the one with the best POA.
Additional details of our experiment are presented in Appendix \ref{sec:Experiment details}.

\subsection{Results under the tabular softmax parameterization}

Figure \ref{fig:Softmax-POA-Nash-gap} presents the POA and the Nash-gap of the algorithms under the tabular softmax parameterization.
The results help address the following questions:

\textit{How fast do the algorithms converge?}
In terms of both the POA and the Nash-gap, NPG converges fastest, with NPG-BR the second and PG the slowest.
This result demonstrates the improvement in the convergence rate of using the natural policy gradient over the policy gradient.

\textit{What is the effect of $K$ for NPG-BR?}
We did a grid search of $K\in\{1,5,10,20,50\}$ for NPG-BR (details in Appendix \ref{sec:Effect of $K$ for the NPG-BR dynamics}), we show the results for the best-performing $K$ in terms of the POA for $N=2,3,5$ separately in Figure \ref{fig:Softmax-POA-Nash-gap}.
We observe that $K=5$ is the best for $N=3,5$ and $K=50$, the largest value we searched, is the best for $N=2$.

\textit{How do the algorithms compare in terms of the POA?}
Consistent with the converge rate, NPG enjoys the overall highest POA, with NPG-BR the second and PG the lowest.

\subsection{Results under the neural network parameterization}
\label{sec:Results under the neural network parameterization}
Figure \ref{fig:NN-POA-Nash-gap} presents the POA and the Nash-gap of the algorithms under the neural network (NN) parameterization.
The results help address the following questions:

\textit{Does NN help improve the convergence/POA from tabular softmax?}
With the NN parameterization, the PG algorithm (``NN-PG'') significantly outperforms its tabular softmax counterpart (``Softmax-PG'') in terms of both the convergence rate and the POA.
NN-PG even outperforms Softmax-NPG in terms of POA at the beginning of the training, although and eventually the POA of Softmax-NPG is the highest among all.
This demonstrates the significant improvement of the NN parameterization over the tabular softmax.

\textit{What is the effect of the NN regularization?}
Compared with the results under tabular softmax, the log barrier regularization under NN has a significantly larger impact: it both improves the POA and reduces the Nash-gap at convergence, especially when $N$ is large (e.g., $N=5$).

\textit{What is the effect of the NN optimizer?}
Among all NN variants, NN-PG is the best in terms of POA when $N$ is small, and the regularized NN-Adam is the best when $N$ is large.
When $N=5$, the POA of the best NN variant, the regularized NN-Adam, is still significantly smaller than Softmax-NPG.

\section{Conclusion and discussion}
\label{sec:Discussion}
To conclude, we have established in Section \ref{sec:Convergence of the tabular softmax policy gradient in MPGs} convergence to (near-)Nash policies in Markov potential games of several policy gradient-based dynamics under tabular softmax parameterization, including 
asymptotic convergence of the standard policy gradient dynamics (Section \ref{sec:Asymptotic convergence of the policy gradient dynamics}),
its finite-time convergence with log-barrier regularization (Section \ref{sec:Policy gradient dynamics with log-barrier regularization}),
and finite-time convergence of the approximate best-response natural policy gradient dynamics (Section \ref{sec:Approximate best-response natural policy gradient dynamics}).
In Section \ref{sec:Bounding the price of anarchy in smooth Markov (potential) games}, we have extended the notion of smoothness in normal-form games to Markov games and established the price-of-anarchy bounds of near-Nash policies in smooth Markov games and of the approximate maximum-gain best-response dynamics in smooth Markov potential games.

\paragraph{Future work.}
(i) Our theoretical guarantee for the NPG dynamics is limited to the (approximate) best-response variant, although our empirical results imply that the standard NPG dynamics where all agents get updated per iteration should also converge. This suggests that a future direction is to establish the convergence of the standard NPG dynamics.
(ii) Our POA bound is also limited to the (approximate) best-response NPG dynamics, and a future direction is to provide POA bounds for other learning dynamics.
(iii) Both the theoretical and the empirical parts of this paper are limited to exact gradient computation, and therefore an immediate future direction is to explore sample-based learning dynamics.

\bibliography{ref}

\begin{thebibliography}{20}
\providecommand{\natexlab}[1]{#1}
\providecommand{\url}[1]{\texttt{#1}}
\expandafter\ifx\csname urlstyle\endcsname\relax
  \providecommand{\doi}[1]{doi: #1}\else
  \providecommand{\doi}{doi: \begingroup \urlstyle{rm}\Url}\fi

\bibitem[Agarwal et~al.(2019)Agarwal, Kakade, Lee, and
  Mahajan]{agarwal2019theory}
Agarwal, A., Kakade, S.~M., Lee, J.~D., and Mahajan, G.
\newblock On the theory of policy gradient methods: Optimality, approximation,
  and distribution shift.
\newblock \emph{arXiv preprint arXiv:1908.00261}, 2019.

\bibitem[Callaway \& Hiskens(2010)Callaway and Hiskens]{callaway2010achieving}
Callaway, D.~S. and Hiskens, I.~A.
\newblock Achieving controllability of electric loads.
\newblock \emph{Proceedings of the IEEE}, 99\penalty0 (1):\penalty0 184--199,
  2010.

\bibitem[Chu et~al.(2019)Chu, Wang, Codec{\`a}, and Li]{chu2019multi}
Chu, T., Wang, J., Codec{\`a}, L., and Li, Z.
\newblock Multi-agent deep reinforcement learning for large-scale traffic
  signal control.
\newblock \emph{IEEE Transactions on Intelligent Transportation Systems},
  21\penalty0 (3):\penalty0 1086--1095, 2019.

\bibitem[Corke et~al.(2005)Corke, Peterson, and Rus]{corke2005networked}
Corke, P., Peterson, R., and Rus, D.
\newblock Networked robots: Flying robot navigation using a sensor net.
\newblock In \emph{Robotics research. The eleventh international symposium},
  pp.\  234--243. Springer, 2005.

\bibitem[Foerster et~al.(2017)Foerster, Farquhar, Afouras, Nardelli, and
  Whiteson]{foerster2017counterfactual}
Foerster, J., Farquhar, G., Afouras, T., Nardelli, N., and Whiteson, S.
\newblock Counterfactual multi-agent policy gradients.
\newblock \emph{arXiv preprint arXiv:1705.08926}, 2017.

\bibitem[Fox et~al.(2022)Fox, Mcaleer, Overman, and
  Panageas]{fox2022independent}
Fox, R., Mcaleer, S.~M., Overman, W., and Panageas, I.
\newblock Independent natural policy gradient always converges in markov
  potential games.
\newblock In \emph{International Conference on Artificial Intelligence and
  Statistics}, pp.\  4414--4425. PMLR, 2022.

\bibitem[Hu \& Wellman(2003)Hu and Wellman]{hu2003nash}
Hu, J. and Wellman, M.~P.
\newblock Nash q-learning for general-sum stochastic games.
\newblock \emph{Journal of machine learning research}, 4\penalty0
  (Nov):\penalty0 1039--1069, 2003.

\bibitem[Leonardos et~al.(2021)Leonardos, Overman, Panageas, and
  Piliouras]{leonardos2021global}
Leonardos, S., Overman, W., Panageas, I., and Piliouras, G.
\newblock Global convergence of multi-agent policy gradient in markov potential
  games.
\newblock \emph{arXiv preprint arXiv:2106.01969}, 2021.

\bibitem[Li et~al.(2021)Li, Wei, Chi, Gu, and Chen]{li2021softmax}
Li, G., Wei, Y., Chi, Y., Gu, Y., and Chen, Y.
\newblock Softmax policy gradient methods can take exponential time to
  converge.
\newblock In \emph{Conference on Learning Theory}, pp.\  3107--3110. PMLR,
  2021.

\bibitem[Lowe et~al.(2017)Lowe, Wu, Tamar, Harb, Abbeel, and
  Mordatch]{lowe2017multi}
Lowe, R., Wu, Y.~I., Tamar, A., Harb, J., Abbeel, O.~P., and Mordatch, I.
\newblock Multi-agent actor-critic for mixed cooperative-competitive
  environments.
\newblock In \emph{Advances in neural information processing systems}, pp.\
  6379--6390, 2017.

\bibitem[Macua et~al.(2018)Macua, Zazo, and Zazo]{macua2018learning}
Macua, S.~V., Zazo, J., and Zazo, S.
\newblock Learning parametric closed-loop policies for markov potential games.
\newblock \emph{arXiv preprint arXiv:1802.00899}, 2018.

\bibitem[Mei et~al.(2020)Mei, Xiao, Szepesvari, and Schuurmans]{mei2020global}
Mei, J., Xiao, C., Szepesvari, C., and Schuurmans, D.
\newblock On the global convergence rates of softmax policy gradient methods.
\newblock In \emph{International Conference on Machine Learning}, pp.\
  6820--6829. PMLR, 2020.

\bibitem[Mirrokni \& Vetta(2004)Mirrokni and Vetta]{mirrokni2004convergence}
Mirrokni, V.~S. and Vetta, A.
\newblock Convergence issues in competitive games.
\newblock In \emph{Approximation, randomization, and combinatorial
  optimization. algorithms and techniques}, pp.\  183--194. Springer, 2004.

\bibitem[Roughgarden(2015)]{roughgarden2015intrinsic}
Roughgarden, T.
\newblock Intrinsic robustness of the price of anarchy.
\newblock \emph{Journal of the ACM (JACM)}, 62\penalty0 (5):\penalty0 1--42,
  2015.

\bibitem[Roughgarden(2016)]{roughgarden2016twenty}
Roughgarden, T.
\newblock \emph{Twenty lectures on algorithmic game theory}.
\newblock Cambridge University Press, 2016.

\bibitem[Shapley(1953)]{shapley1953stochastic}
Shapley, L.~S.
\newblock Stochastic games.
\newblock \emph{Proceedings of the national academy of sciences}, 39\penalty0
  (10):\penalty0 1095--1100, 1953.

\bibitem[Tesauro(2003)]{tesauro2003extending}
Tesauro, G.
\newblock Extending q-learning to general adaptive multi-agent systems.
\newblock \emph{Advances in neural information processing systems}, 16, 2003.

\bibitem[Zhang et~al.(2018)Zhang, Yang, Liu, Zhang, and
  Ba{\c{s}}ar]{zhang2018fully}
Zhang, K., Yang, Z., Liu, H., Zhang, T., and Ba{\c{s}}ar, T.
\newblock Fully decentralized multi-agent reinforcement learning with networked
  agents.
\newblock \emph{arXiv preprint arXiv:1802.08757}, 2018.

\bibitem[Zhang et~al.(2020)Zhang, Kakade, Basar, and Yang]{zhang2020model}
Zhang, K., Kakade, S., Basar, T., and Yang, L.
\newblock Model-based multi-agent rl in zero-sum markov games with near-optimal
  sample complexity.
\newblock \emph{Advances in Neural Information Processing Systems},
  33:\penalty0 1166--1178, 2020.

\bibitem[Zhang et~al.(2021)Zhang, Ren, and Li]{zhang2021gradient}
Zhang, R., Ren, Z., and Li, N.
\newblock Gradient play in stochastic games: stationary points, convergence,
  and sample complexity.
\newblock \emph{arXiv preprint arXiv:2106.00198}, 2021.

\end{thebibliography}
\bibliographystyle{icml2022}

\newpage
\appendix
\onecolumn
\numberwithin{theorem}{section} 

\section{Proof of Lemma \ref{lemma:state-based tabular softmax multi-agent policy gradient}}
\label{sec:Proof of Lemma lemma:state-based tabular softmax multi-agent policy gradient}
Note that
\begin{align*}
    \frac{\partial\log{\pi^i_{\theta^i}(a^i_\prime|s')}}{\partial \theta^i_{s, a^i}} = 
    \mathbbm{1}[s = s'](\mathbbm{1}[a^i=a^i_\prime]-\pi^i_{\theta^i}(a^i|s)).
\end{align*}
Plugging it and by similar derivations in the proof of Lemma C.1 in \cite{agarwal2019theory}, we have:
\begin{align*}
    \frac{\partial V^i_\theta(\mu)}{\partial \theta^i_{s,a^i} } =& \E_{s'\sim d^{\pi_\theta}_\mu} \E_{a'\sim\pi_\theta(\cdot|s)}\left[\mathbbm{1}[(s',a^i_\prime)=(s,a^i)]A^i_\theta(s',a')\right] \\
    =& d^{\pi_\theta}_\mu(s) \pi^i_{\theta^i}(a^i|s)\E_{a^{-i}\sim\pi^{-i}_{\theta^{-i}}(\cdot|s)}\left[A^i_\theta(s, a^i, a^{-i})\right].
\end{align*}
This concludes the proof.

\section{Proof of Lemma \ref{lemma:smoothness_tabular_softmax}}
\label{sec:Proof of Lemma lemma:smoothness_tabular_softmax}
Since $\Phi_\theta(s_0)$, abbreviated as $\Phi_\theta$ in this proof, is (assumed to be) twice-differentialble, as an equivalent condition for smoothness, we will bound the spectral norm of its Hessian $\nabla^2_\theta \Phi_\theta$.
Similar to the proof of Lemma 4.4 in \cite{leonardos2021global}, we view Hessian
\begin{align*}
  \nabla^2_\theta \Phi_\theta=\left[
  \frac{\partial^2 \Phi_\theta}{\partial\theta^i_{s,a^i}\partial\theta^j_{s',a^j}}\right]_{i,s,a^i,j,s',a^j}
\end{align*}
as a symmetric $N \times N$ block matrix with submatrices
\begin{align*}
    \nabla^2_{\theta^i\theta^j} \Phi_\theta=\left[
  \frac{\partial^2 \Phi_\theta}{\partial\theta^i_{s,a^i}\partial\theta^j_{s',a^j}}\right]_{s,a^i,s',a^j}
\end{align*}
for all $i,j\in\mathcal{N}$. Claim C.2 in \cite{leonardos2021global} shows that if we can bound the spectral norm of any submatrix as $\norm{\nabla^2_{\theta^i\theta^j} \Phi_\theta}_2\leq L$, then the spectral norm of the block matrix is bounded as $\norm{\nabla^2_\theta \Phi_\theta}_2\leq NL$. We then next bound the spectural norm (i.e., the largest absolute eigenvalue) of matrix $\nabla^2_{\theta^j\theta^i} \Phi_\theta$.
Noting $\nabla^2_{\theta^j\theta^i} \Phi_\theta=\nabla^2_{\theta^j\theta^i} V^j_\theta=\nabla^2_{\theta^j\theta^i} V^i_\theta$ due to \eqref{eq:total potential function}, it suffices to define $U(t):=V_{\theta^i + t\cdot u,\theta^{-i}}$ and $W(t,s):=V_{\theta^i + t\cdot u,\theta^j + s\cdot v,\theta^{-i,-j}}$ for scalars $t,s\geq0$ and unit vectors $u,v$, and to show
\begin{align*}
    \max_{\norm{u}_2=1} \abs{\left.\frac{d^2U(t)}{dt^2}\right|_{t=0}} \leq \frac{41}{4(1-\gamma)^3}
    \text{ and }
    \max_{\norm{u}_2=\norm{v}_2=1} \abs{\left.\frac{d^2W(t,s)}{dtds}\right|_{t=0, s=0}} \leq \frac{41}{4(1-\gamma)^3}
    .
\end{align*}

For $U(t)$, we decompose it as $U(t) = \sum_{a^i}\sum_{a^{-i}}\pi^i_{\theta^i + t\cdot u}(a^i|s_0)\cdot\pi^{-i}_{\theta^{-i}}(a^{-i}|s_0)\cdot Q_{\theta^i + t\cdot u,\theta^{-i}}(s_0,a^i,a^{-i})$.
Abbreviating $\pi^i_{\theta^i + t\cdot u}$ as $\pi^i_t$, $\pi^{-i}_{\theta^{-i}}$ as $\pi^{-i}$, and $Q_{\theta^i + t\cdot u,\theta^{-i}}$ as $Q_t$, we have
\begin{align*}
    \frac{d^2U(t)}{dt^2} = \sum_{a^i}\sum_{a^{-i}} \bigg(
    &\frac{d^2\pi^i_t(a^i|s_0)}{dt^2} \cdot \pi^{-i}(a^{-i}|s_0) \cdot Q_t(s_0,a^i,a^{-i}) \\
    +& 2\frac{d\pi^i_t(a^i|s_0)}{dt} \cdot \pi^{-i}(a^{-i}|s_0) \cdot \frac{dQ_t(s_0,a^i,a^{-i})}{dt} \\
    +& \pi^i_t(a^i|s_0) \cdot \pi^{-i}(a^{-i}|s_0) \cdot \frac{d^2Q_t(s_0,a^i,a^{-i})}{dt^2}\bigg)
\end{align*}
We then bound $\abs{\left.\frac{d^2U(t)}{dt^2}\right|_{t=0}}$ for any unit vector $u$ by bounding the three terms, respectively.
For the first term, we have $\sum_{a^i}\abs{\left.\frac{d^2\pi^i_t(a^i|s_0)}{dt^2}\right|_{t=0}}\leq6=:C_2$ as proved in Lemma D.4 in \cite{agarwal2019theory}, $0 \leq Q_t(s_0,a^i,a^{-i})\leq\frac{1}{1-\gamma}$ assuming the reward is bounded in $[0,1]$, and $\sum_{a^{-i}}\pi^{-i}(a^{-i}|s_0)=1$.
For the second term, we have $\sum_{a^i}\abs{\left.\frac{d\pi^i_t(a^i|s_0)}{dt}\right|_{t=0}}\leq2=:C_1$ as proved in Lemma D.4 in \cite{agarwal2019theory}, and $\abs{\left.\frac{dQ_t(s_0,a^i,a^{-i})}{dt}\right|_{t=0}}\leq\frac{\gamma C_1}{(1-\gamma)^2}$ as proved in Lemma D.2 in \cite{agarwal2019theory} and Lemma 4.4 in \cite{leonardos2021global}.
For the third term, we have $\abs{\left.\frac{d^2Q_t(s_0,a^i,a^{-i})}{dt^2}\right|_{t=0}}\leq\frac{2\gamma^2C_1}{(1-\gamma)^3} + \frac{\gamma C_2}{(1-\gamma)^2}$ as proved in Lemma D.2 in \cite{agarwal2019theory}. We hence derive the bound:
\begin{align*}
    \max_{\norm{u}_2=1} \abs{\left.\frac{d^2U(t)}{dt^2}\right|_{t=0}} \leq&
    \frac{C_2}{1-\gamma} + \frac{2\gamma C_1^2}{(1-\gamma)^2} + \frac{2\gamma^2C_1}{(1-\gamma)^3} + \frac{\gamma C_2}{(1-\gamma)^2} \\
    =& \frac{C_2}{(1-\gamma)^2} + \frac{2\gamma C_1^2}{(1-\gamma)^3}
    = \frac{6+2\gamma}{(1-\gamma)^3} \qquad \text{($C_1=2, C_2=6$)}\\
    \leq&  \frac{8}{(1-\gamma)^3}
    \leq  \frac{41}{4(1-\gamma)^3}
\end{align*}

For $W(t,s)$, similarly, we decompose it as $W(t,s) = \sum_{a^i}\sum_{a^j}\sum_{a^{-i,-j}}\pi^i_{\theta^i + t\cdot u}(a^i|s_0)\cdot\pi^j_{\theta^j + s\cdot v}(a^j|s_0)\cdot\pi^{-i,-j}_{\theta^{-i,-j}}(a^{-i,-j}|s_0)\cdot Q_{\theta^i + t\cdot u,\theta^j + s\cdot v,\theta^{-i,-j}}(s_0,a^i,a^j,a^{-i,-j})$. 
With similar abbreviations, we have 
\begin{align*}
    \frac{d^2W(t,s)}{dtds}= \sum_{a^i}\sum_{a^j}\sum_{a^{-i,-j}} \bigg(
     &\frac{d\pi^i_t(a^i|s_0)}{dt} \cdot \frac{d\pi^j_s(a^j|s_0)}{ds} \cdot \pi^{-i,-j}(a^{-i,-j}|s_0) \cdot Q_{t,s}(s_0,a^i,a^j,a^{-i,-j}) \\
    +& \frac{d\pi^i_t(a^i|s_0)}{dt} \cdot \pi^j_s(a^j|s_0) \cdot \pi^{-i,-j}(a^{-i,-j}|s_0) \cdot \frac{dQ_{t,s}(s_0,a^i,a^j,a^{-i,-j})}{ds} \\
    +& \pi^i_t(a^i|s_0) \cdot \frac{d\pi^j_s(a^j|s_0)}{ds} \cdot \pi^{-i,-j}(a^{-i,-j}|s_0) \cdot \frac{dQ_{t,s}(s_0,a^i,a^j,a^{-i,-j})}{dt}\\
    +& \pi^i_t(a^i|s_0) \cdot \pi^j_s(a^j|s_0) \cdot \pi^{-i,-j}(a^{-i,-j}|s_0) \cdot \frac{d^2Q_{t,s}(s_0,a^i,a^j,a^{-i,-j})}{dtds}
    \bigg)
    .
\end{align*}
We then bound $\abs{\left.\frac{d^2W(t,s)}{dtds}\right|_{t=0, s=0}}$ for any unit vectors $u,v$ by bounding the four terms, respectively.
Similarly, the first term can be bounded by $\frac{C_1^2}{1-\gamma}$, the second term by $\frac{\gamma C_1^2}{(1-\gamma)^2}$, the third term by $\frac{\gamma C_1^2}{(1-\gamma)^2}$, and the fourth term by $\frac{C_2}{(1-\gamma)^2} + \frac{2\gamma C_1^2}{(1-\gamma)^3}$.
We hence derive the bound:
\begin{align*}
   \max_{\norm{u}=\norm{v}=1}\abs{\left.\frac{d^2W(t,s)}{dtds}\right|_{t=0, s=0}} \leq&
   \frac{C_1^2}{1-\gamma}
   + \frac{\gamma C_1^2}{(1-\gamma)^2}
   + \frac{\gamma C_1^2}{(1-\gamma)^2}
   + \frac{C_2}{(1-\gamma)^2} + \frac{2\gamma C_1^2}{(1-\gamma)^3}\\
   =& \frac{-4\gamma^2+2\gamma+10}{(1-\gamma)^3} \qquad\text{($C_1=2, C_2=6$)}\\
   \leq& \frac{41}{4(1-\gamma)^3}
   .
\end{align*}
This concludes the proof.

\section{Proof of Theorem \ref{theorem:Asymptotic convergence to Nash with gradient ascent}}
\label{sec:Proof of Theorem Asymptotic convergence to Nash with gradient ascent}
\subsection{Notation}
Define $$V_\phi^{\pi_\theta}(s)=\E[\sum_{t=0}^\infty \gamma^t\phi(s_t,a_t)|\pi_\theta,s_0=s]$$
$$\Phi^{\pi_\theta}(\mu)=\E_{s_0\sim\mu}[V_\phi^{\pi_\theta}(s_0)]$$
$$Q_\phi^{\pi_\theta}(s,a)=\E[\sum_{t=0}^\infty \gamma^t\phi(s_t,a_t)|\pi_\theta,s_0=s,a_0=a]$$
$$A_\phi^{\pi_\theta}(s,a)=Q_\phi^{\pi_\theta}(s,a)-V_\phi^{\pi_\theta}(s)$$
Suppose $\Phi_{\text{min}}\leq Q_\phi^{\pi_\theta}(s,a)\leq\Phi_{\text{max}}$.

\subsection{Smoothness of F}
\begin{lemma}[Smoothness of $F$ under tabular softmax]
\label{lemma:smoothness_of_F_tabular_softmax}
Fix a state $s$. Let $\theta_s = [(\theta^1_s)^\top,...,(\theta^N_s)^\top]^\top\in\mathbb{R}^{\sum_i|\mathcal{A}^i|}$ be the column vector of parameters for state $s$, with $\theta^i_s\in\mathbb{R}^{|\mathcal{A}^i|}$ for $i\in\mathcal{N}$. For some fixed vector $c_s\in\mathbb{R}^{|\mathcal{A}|}$, define $F_s(\theta_s) := \sum_{a\in\mathcal{A}} \pi_{\theta_s}(a|s)c_{s,a}=:\pi_{\theta_s} \cdot c_s$ with $\pi_{\theta_s}\in\mathbb{R}^{|\mathcal{A}|}$ and $\cdot$ denoting inner product.
Then, $F_s(\theta_s)$ is -smooth.
\end{lemma}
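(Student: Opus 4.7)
The plan is to bound the spectral norm of the Hessian $\nabla^2_{\theta_s} F_s$ by viewing it, exactly as in the proof of Lemma \ref{lemma:smoothness_tabular_softmax}, as a symmetric $N \times N$ block matrix with blocks $\nabla^2_{\theta^i_s, \theta^j_s} F_s$ for $i,j \in \mathcal{N}$. Invoking the same claim (Claim C.2 in \cite{leonardos2021global}) used there, if every block has spectral norm at most $L$ then $\|\nabla^2_{\theta_s} F_s\|_2 \leq NL$, so it suffices to produce a block-wise bound $L$ expressed in terms of $\|c_s\|_\infty$.

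For the diagonal blocks ($i = j$), I would fix $\theta^{-i}_s$ and observe that along $\theta^i_s$ the function reduces to a single-agent softmax expectation $F_s(\theta^i_s) = \sum_{a^i}\pi^i_{\theta^i_s}(a^i|s)\,\tilde c_{s,a^i}$, where $\tilde c_{s,a^i} := \sum_{a^{-i}}\pi^{-i}_{\theta^{-i}_s}(a^{-i}|s)\,c_{s,a^i,a^{-i}}$ satisfies $\|\tilde c\|_\infty \leq \|c_s\|_\infty$. Introducing $U(t) := F_s(\theta^i_s + t u)$ for a unit vector $u$ and differentiating, $U''(0) = \sum_{a^i}(d^2\pi^i_t(a^i|s)/dt^2)|_0\,\tilde c_{s,a^i}$, which by the standard bound $\sum_{a^i}|d^2\pi^i_t(a^i|s)/dt^2|_0 \leq C_2 = 6$ (Lemma D.4 of \cite{agarwal2019theory}) is at most $C_2 \|c_s\|_\infty$.

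For the off-diagonal blocks ($i \neq j$), I would set $W(t,r) := F_s(\theta_s + t u + r v)$ with $u$ supported on agent $i$'s coordinates and $v$ on agent $j$'s, both unit. Using independence of the factors $\pi^i_t, \pi^j_r, \pi^{-i,-j}$ in $\theta_s$,
\begin{align*}
\left.\frac{\partial^2 W}{\partial t\,\partial r}\right|_{0,0}
= \sum_{a}\left.\frac{d\pi^i_t(a^i|s)}{dt}\right|_0
\left.\frac{d\pi^j_r(a^j|s)}{dr}\right|_0
\pi^{-i,-j}(a^{-i,-j}|s)\,c_{s,a},
\end{align*}
which, by the $L_1$ bound $\sum_{a^i}|d\pi^i_t(a^i|s)/dt|_0 \leq C_1 = 2$ from the same Lemma D.4, is at most $C_1^2 \|c_s\|_\infty$. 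Combining the two cases, every block has spectral norm at most $\max(C_2, C_1^2)\|c_s\|_\infty = 6\|c_s\|_\infty$, so $F_s$ is $6N\|c_s\|_\infty$-smooth.

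There is no serious obstacle here; the proof is a streamlined specialization of the argument for Lemma \ref{lemma:smoothness_tabular_softmax}, replacing the state-dependent $Q$-function derivatives (which contributed the $1/(1-\gamma)$ factors) by the fixed coefficients $c_{s,a}$, so only the zeroth-order derivatives $d^k \pi^i_t/dt^k$ of the softmax show up and the Bellman-recursion bookkeeping is bypassed. The only mild care needed is to check that differentiating under the sum over $a^{-i}$ preserves the $\|c_s\|_\infty$ bound, which follows because $\pi^{-i}(\cdot|s)$ is a probability distribution.
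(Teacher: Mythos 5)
Your proof is correct, and it shares the paper's outer skeleton (view $\nabla^2_{\theta_s}F_s$ as an $N\times N$ block matrix and invoke Claim C.2 of Leonardos et al.\ to pass from a per-block bound $L$ to $NL$), but the way you bound the individual blocks is genuinely different. The paper computes each block explicitly in matrix form, writing $\nabla_{\theta^i_s}F_s$ via Kronecker products and permutation matrices, expanding $\nabla^2_{\theta^i_s\theta^i_s}F_s$ into four rank-structured terms, and bounding spectral norms by submultiplicativity; this yields $5\norm{c_s}_\infty$ for diagonal blocks and $\sqrt{N}\norm{c_s}_\infty$ for off-diagonal ones, hence the constant $N\max(5,\sqrt{N})\norm{c_s}_\infty$ that feeds into the step-size condition of Theorem \ref{theorem:Asymptotic convergence to Nash with gradient ascent}. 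You instead recycle the directional-derivative machinery from the proof of Lemma \ref{lemma:smoothness_tabular_softmax}: since $F_s$ is multilinear in the per-agent softmax distributions, the second directional derivative along one agent's coordinates involves only $\ddot\pi^i$, and the mixed derivative only $\dot\pi^i\dot\pi^j$, so the $\ell_1$ bounds $C_1=2$, $C_2=6$ from Lemma D.4 of Agarwal et al.\ immediately give $6\norm{c_s}_\infty$ and $4\norm{c_s}_\infty$ for the two cases. Your argument is shorter, avoids the Kronecker-product bookkeeping, and your off-diagonal bound is dimension-free where the paper's grows as $\sqrt{N}$ (so for large $N$ your overall constant $6N\norm{c_s}_\infty$ is actually tighter than the paper's $N^{3/2}\norm{c_s}_\infty$); the only cost is a marginally worse diagonal constant ($6$ vs.\ $5$). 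If you adopt your constant, the step size in the downstream monotonicity lemma and in Theorem \ref{theorem:Asymptotic convergence to Nash with gradient ascent} should be adjusted to $\eta\leq\frac{1-\gamma}{6N(\Phi_{\max}-\Phi_{\min})}$ accordingly; nothing else in the argument changes.
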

\begin{proof}
We will view Hessian $\nabla^2_{\theta_s}F_s(\theta_s)$ as a $N \times N$ block matrix and bound the spectral norm of each submatrix as  $\norm{\nabla^2_{\theta^i_s \theta^j_s}F_s(\theta_s)}_2\leq L$, which bounds the Hessian's spectral norm as $\norm{\nabla^2_{\theta_s}F_s(\theta_s)}_2\leq NL$.

We have
\begin{align*}
   \nabla_{\theta^i_s}F_s(\theta_s) =  \nabla_{\theta^i_s}(\pi_{\theta_s} \cdot c_s) =  (\nabla_{\theta^i_s}\pi_{\theta_s})^\top c_s =  \nabla_{\theta^i_s}\pi^i_{\theta^i_s}\left(\pi^{-i}_{\theta^{-i}_s}\otimes I_{|\mathcal{A}^i|}\right)^\top M^i c_s
\end{align*}
where $\nabla_{\theta^i_s}F_s(\theta_s)\in\mathbb{R}^{1\times|\mathcal{A}^i|}$, $M^i\in\mathbb{R}^{|\mathcal{A}|\times|\mathcal{A}|}$ is the permutation matrix that permutes all joint actions to be sorted as $a=(a^{-i}, a^i)$, $I_n$ is the $n \times n$ identity matrix, and $\otimes$ is the Kronecker product.
For the tabular softmax parameterization, we have
\begin{align*}
    \nabla_{\theta^i_s}\pi^i_{\theta^i_s} = {\rm diag}\left(\pi^i_{\theta^i_s}\right) - \pi^i_{\theta^i_s}\left(\pi^i_{\theta^i_s}\right)^\top
    .
\end{align*}
The submatrix is therefore
\begin{align*}
    \nabla^2_{\theta^i_s\theta^j_s}F_s(\theta_s) =& \nabla_{\theta^j_s} \left(\nabla_{\theta^i_s}\pi^i_{\theta^i_s}\left(\pi^{-i}_{\theta^{-i}_s}\otimes I_{|\mathcal{A}^i|}\right)^\top M^i c_s\right)\\
\end{align*}
If $j = i$:
\begin{align*}
    \nabla^2_{\theta^i_s\theta^i_s}F_s(\theta_s) =& \nabla_{\theta^i_s} \left(\nabla_{\theta^i_s}\pi^i_{\theta^i_s} \left(\pi^{-i}_{\theta^{-i}_s}\otimes I_{|\mathcal{A}^i|}\right)^\top M^i c_s\right)\\
    =& \nabla_{\theta^i_s}(\pi^i_{\theta^i_s} \odot b - (\pi^i_{\theta^i_s}\cdot b)\pi^i_{\theta^i_s})
\end{align*}
, where $b=\left(\pi^{-i}_{\theta^{-i}_s}\otimes I_{|\mathcal{A}^i|}\right)^\top M^i c_s.$
\\For the first term, we get 
$$\nabla_{\theta^i_s}(\pi^i_{\theta^i_s} \odot b)=\text{diag}(\pi^i_{\theta^i_s} \odot b)-\pi^i_{\theta^i_s}(\pi^i_{\theta^i_s} \odot b)^\top$$
For the second term we get:
$$\nabla_{\theta^i_s}((\pi^i_{\theta^i_s}\cdot b)\pi^i_{\theta^i_s})=(\pi^i_{\theta^i_s}\cdot b)\nabla_{\theta^i_s}(\pi^i_{\theta^i_s})+(\nabla_{\theta^i_s}(\pi^i_{\theta^i_s}\cdot b))(\pi^i_{\theta^i_s})^\top$$
$$\longrightarrow \nabla^2_{\theta^i_s\theta^i_s}F_s(\theta_s) = \text{diag}(\pi^i_{\theta^i_s} \odot b)-\pi^i_{\theta^i_s}(\pi^i_{\theta^i_s} \odot b)^\top-(\pi^i_{\theta^i_s}\cdot b)\nabla_{\theta^i_s}(\pi^i_{\theta^i_s})-(\nabla_{\theta^i_s}(\pi^i_{\theta^i_s}\cdot b))(\pi^i_{\theta^i_s})^\top$$
Since 
\begin{align*}
    &\text{max}(\norm{\text{diag}(\pi^i_{\theta^i_s} \odot b)}_2,\norm{\pi^i_{\theta^i_s}\odot b}_2,|\pi^i_{\theta^i_s}\cdot b|) \leq \norm{b}_\infty= \norm{c}_\infty &\\
    &\norm{\nabla_{\theta^i_s}\pi^i_{\theta^i_s}}_2 = \norm{{\rm diag}\left(\pi^i_{\theta^i_s}\right) - \pi^i_{\theta^i_s}\left(\pi^i_{\theta^i_s}\right)^\top}_2\leq 1&\\
    & \norm{\nabla_{\theta^i_s}(\pi^i_{\theta^i_s}\cdot b)}_2\leq \norm{\pi^i_{\theta^i_s} \odot b}_2+\norm{(\pi^i_{\theta^i_s}\cdot b)\pi^i_{\theta^i_s}}_2\leq 2\norm{c}_\infty,
\end{align*}
we know that $$\norm{\nabla^2_{\theta^i_s\theta^i_s}F_s(\theta_s)}_2\leq 5\norm{c}_\infty$$

If $j \neq i$: 
\begin{align*}
    \nabla^2_{\theta^i_s\theta^j_s}F_s(\theta_s) =& 
    M^j\nabla_{\theta^i_s}\pi^i_{\theta^i_s}\Bigg(\bigg(\Big(\pi^{-i,-j}_{\theta^{-i,-j}_s}\otimes I_{|\mathcal{A}^j|}\Big)\nabla_{\theta^j_s}\pi^j_{\theta^j_s}\bigg)\otimes I_{|\mathcal{A}^i|}\Bigg)^\top M^jM^ic_s
\end{align*}
Since 
\\$\norm{M^jM^ic_s}_2 \leq \sqrt{N}\norm{c}_\infty$
\begin{equation} \label{eq1}
\begin{split}
\norm{\bigg(\Big(\pi^{-i,-j}_{\theta^{-i,-j}_s}\otimes I_{|\mathcal{A}^j|}\Big)\nabla_{\theta^j_s}\pi^j_{\theta^j_s}\bigg)\otimes I_{|\mathcal{A}^i|}}_2 & = \norm{\Big(\pi^{-i,-j}_{\theta^{-i,-j}_s}\otimes I_{|\mathcal{A}^j|}\Big)\nabla_{\theta^j_s}\pi^j_{\theta^j_s}}_2 \\
 & \leq \norm{\pi^{-i,-j}_{\theta^{-i,-j}_s}\otimes I_{|\mathcal{A}^j|}}_2\norm{\nabla_{\theta^j_s}\pi^j_{\theta^j_s}}_2
 \\
 & \leq\norm{\pi^{-i,-j}_{\theta^{-i,-j}_s}}_2
  \\
 & \leq 1
\end{split}
\end{equation}
we know that 
$$\nabla^2_{\theta^i_s\theta^j_s}F_s(\theta_s)\leq \sqrt{N}\norm{c}_\infty$$
\end{proof}
Therefore, we have 
$$\norm{\nabla^2_{\theta_s}F_s(\theta_s)}_2\leq N\text{max}(5,\sqrt{N})\norm{c}_\infty$$

\begin{lemma}
For product policy that can be factorized into the product of individual policies with softmax parameterization, we have: 
$$\frac{\partial V^{\pi_\theta}(\mu)}{\partial \theta_{s,a^i}^{i}}=\frac{\partial \Phi^{\pi_\theta}(\mu)}{\partial \theta_{s,a^i}^{i}}=\frac{1}{1-\gamma}d_\mu^{\pi_\theta}(s)\pi_{\theta^{i}}(a^i|s)A_\phi^{\pi_{\theta},i}(s,a^i)$$
,where $Q_\phi^{\pi_\theta,i}(s,a^i)=\E_{a^{-i}\sim\pi_{\theta^{-i}}(\cdot|s)}\Big[Q_\phi^{\pi_\theta}(s,a^i,a^{-i})\Big],A_\phi^{\pi_\theta,i}(s,a^i)=Q_\phi^{\pi_\theta,i}(s,a^i)-V_\phi^{\pi_\theta}(s).$
\end{lemma}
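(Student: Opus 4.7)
The statement has two equalities. The first, $\partial V^{\pi_\theta}/\partial\theta^i_{s,a^i} = \partial \Phi^{\pi_\theta}/\partial\theta^i_{s,a^i}$, is essentially a restatement of the MPG property already recorded as equation~\eqref{eq:total potential function}: since $V^i_{\bar\pi^i,\pi^{-i}}(s)-V^i_{\pi^i,\pi^{-i}}(s) = \Phi_{\bar\pi^i,\pi^{-i}}(s)-\Phi_{\pi^i,\pi^{-i}}(s)$ for every deviation $\bar\pi^i$ of agent $i$, taking the differential with respect to $\theta^i$ (which only perturbs $\pi^i$) gives the equality of the two gradients. So the real content is the second equality, which is a multi-agent version of the policy gradient theorem applied to the potential $\Phi$.

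For the second equality, I would treat $\phi$ as a single ``fake'' reward and run the standard policy gradient argument on $\Phi^{\pi_\theta}(\mu)=\E_{s_0\sim\mu}[V_\phi^{\pi_\theta}(s_0)]$. The log-derivative / performance-difference manipulation (exactly as in the proof of Lemma~\ref{lemma:state-based tabular softmax multi-agent policy gradient} and its single-agent analogue in \cite{agarwal2019theory}) yields
\begin{align*}
\frac{\partial \Phi^{\pi_\theta}(\mu)}{\partial \theta^i_{s,a^i}}
=\frac{1}{1-\gamma}\E_{s'\sim d^{\pi_\theta}_\mu}\E_{a'\sim\pi_\theta(\cdot|s')}
\!\left[\frac{\partial \log\pi_\theta(a'|s')}{\partial \theta^i_{s,a^i}}\, Q_\phi^{\pi_\theta}(s',a')\right].
\end{align*}
Because the product-policy assumption gives $\log\pi_\theta(a'|s')=\sum_j \log\pi^j_{\theta^j}(a'^j|s')$, only the $j=i$ summand survives, and the softmax identity $\partial\log\pi^i_{\theta^i}(a'^i|s')/\partial\theta^i_{s,a^i}=\mathbbm{1}[s=s'](\mathbbm{1}[a'^i=a^i]-\pi^i_{\theta^i}(a^i|s))$ collapses the sum over $s'$ to the single state $s$.

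The remaining step is to take the expectation over $a'$. The $\mathbbm{1}[a'^i=a^i]$ piece, combined with the product factorization $\pi_\theta(a'|s)=\pi^i_{\theta^i}(a'^i|s)\pi^{-i}_{\theta^{-i}}(a'^{-i}|s)$, contributes $\pi^i_{\theta^i}(a^i|s)\,Q_\phi^{\pi_\theta,i}(s,a^i)$ after integrating out $a'^{-i}\sim\pi^{-i}_{\theta^{-i}}$; the $-\pi^i_{\theta^i}(a^i|s)$ piece contributes $-\pi^i_{\theta^i}(a^i|s)\,V_\phi^{\pi_\theta}(s)$ using the identity $\sum_{\bar a^i}\pi^i_{\theta^i}(\bar a^i|s)Q_\phi^{\pi_\theta,i}(s,\bar a^i)=V_\phi^{\pi_\theta}(s)$. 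Combining the two produces the advantage $A_\phi^{\pi_\theta,i}(s,a^i)=Q_\phi^{\pi_\theta,i}(s,a^i)-V_\phi^{\pi_\theta}(s)$ and gives the claimed expression $\tfrac{1}{1-\gamma}d^{\pi_\theta}_\mu(s)\pi^i_{\theta^i}(a^i|s)A_\phi^{\pi_\theta,i}(s,a^i)$.

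\textbf{Where the work actually lives.} None of the steps are difficult once the right bookkeeping is fixed; the main trap is the $\tfrac{1}{1-\gamma}$ factor, which depends on whether $d^{\pi_\theta}_\mu$ is normalized or unnormalized. The paper's definition of $d^{\pi_\theta}_\mu$ is the unnormalized discounted occupancy, so I would make sure to track this convention consistently through the policy gradient identity (whose version without the $\tfrac{1}{1-\gamma}$ factor was used in Lemma~\ref{lemma:state-based tabular softmax multi-agent policy gradient}) and either absorb the $\tfrac{1}{1-\gamma}$ into $d^{\pi_\theta}_\mu$ or keep it explicit, as the statement prescribes. Aside from this scaling subtlety, the proof is a direct application of the log-derivative trick plus the softmax/product-policy calculation.
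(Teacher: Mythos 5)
Your proposal is correct and follows essentially the same route as the paper: the policy gradient theorem applied to $\Phi$ via the log-derivative trick, the tabular softmax derivative identity, and the product-policy factorization to integrate out $a^{-i}$; the only cosmetic difference is that you start from the $Q_\phi$ form and assemble the advantage from the two pieces of the softmax derivative, whereas the paper starts from the advantage form and observes that the $-\pi^i_{\theta^i}(a^i|s)$ baseline term vanishes. Your remark about tracking the $\tfrac{1}{1-\gamma}$ factor against the (unnormalized) definition of $d^{\pi_\theta}_\mu$ is a fair and worthwhile caution, since the paper itself uses this factor inconsistently between the main-text gradient lemma and this appendix version.
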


\begin{proof}
$$\frac{\partial V_\phi^{\pi_\theta}(\mu)}{\partial \theta_{s^{\prime},a^i}^{i}}=\frac{\partial \Phi^{\pi_\theta}(\mu)}{\partial \theta_{s,a^i}^{i}}=\frac{1}{1-\gamma}\E_{s\sim d_\mu^{\pi_\theta}}\E_{a\sim\pi_{\theta}(\cdot|s)}\Big[A_\phi^{\pi_\theta}(s,a)\frac{\partial \log \pi_{\theta^{i}}(a^i|s)}{\partial \theta_{s^\prime,a^i}^i}\Big]$$
$$=\frac{1}{1-\gamma}\E_{s\sim d_\mu^{\pi_\theta}}\E_{a\sim\pi_{\theta}(\cdot|s)}\Big[A_\phi^{\pi_\theta}(s,a)\mathbbm{1}[s=s^\prime](\mathbbm{1}[a[m]=a^i]-\pi_{\theta^{i}}(a^i|s))\Big]$$
$$=\frac{1}{1-\gamma}d_\mu^{\pi_\theta}(s^\prime)\E_{a\sim\pi_{\theta}(\cdot|s^\prime)}\Big[A_\phi^{\pi_\theta}(s^\prime,a)(\mathbbm{1}[a[m]=a^i]-\pi_{\theta^{i}}(a^i|s^\prime))\Big]$$
$$=\frac{1}{1-\gamma}d_\mu^{\pi_\theta}(s^\prime)(\E_{a\sim\pi_{\theta}(\cdot|s^\prime)}\Big[A_\phi^{\pi_\theta}(s^\prime,a)\mathbbm{1}[a[m]=a^i]\Big]-\E_{a\sim\pi_{\theta}(\cdot|s^\prime)}\Big[A_\phi^{\pi_\theta}(s^\prime,a)\pi_{\theta^{i}}(a^i|s^\prime)\Big])$$
$$=\frac{1}{1-\gamma}d_\mu^{\pi_\theta}(s^\prime)(\E_{a\sim\pi_{\theta}(\cdot|s^\prime)}\Big[A_\phi^{\pi_\theta}(s^\prime,a)\mathbbm{1}[a[m]=a^i]\Big]-\pi_{\theta^{i}}(a^i|s^\prime)\E_{a\sim\pi_{\theta}(\cdot|s^\prime)}\Big[A_\phi^{\pi_\theta}(s^\prime,a)\Big])$$
$$=\frac{1}{1-\gamma}d_\mu^{\pi_\theta}(s^\prime)\E_{a\sim\pi_{\theta}(\cdot|s^\prime)}\Big[A_\phi^{\pi_\theta}(s^\prime,a)\mathbbm{1}[a[m]=a^i]\Big]$$
$$=\frac{1}{1-\gamma}d_\mu^{\pi_\theta}(s^\prime)\sum_{a}\pi_{\theta}(a|s^\prime)A_\phi^{\pi_\theta}(s^\prime,a)\mathbbm{1}[a[m]=a^i]$$
$$=\frac{1}{1-\gamma}d_\mu^{\pi_\theta}(s^\prime)\pi_{\theta^{i}}(a^i|s^\prime)\E_{a^{-i}\sim\pi_{\theta}(\cdot|s^\prime)}\Big[A_\phi^{\pi_\theta}(s^\prime,a^i,a^{-i})\Big]$$
$$=\frac{1}{1-\gamma}d_\mu^{\pi_\theta}(s^\prime)\pi_{\theta^{i}}(a^i|s^\prime)A_\phi^{\pi_{\theta},i}(s^\prime,a^i)$$
\end{proof}

\begin{lemma}
For all agents $i$ with a round of parallel update  $$\theta^{i}_{t+1}=\theta^{i}_{t}+\eta\nabla 
V^i_{\theta^i_t}(\mu)=\theta^{i}_{t}+\eta\nabla 
\Phi_{\theta^i_t}(\mu)$$ with learning rates $\eta\leq \frac{1-\gamma}{\beta}$, where 
$\beta=NL(\Phi_{\text{max}}-\Phi_{\text{min}})$ , $L=\text{max}(5,\sqrt{N})$, we have $$V_\phi^{(t+1)}(s)\geq V_\phi^{(t)}(s);Q_\phi^{(t+1)}(s,a)\geq Q_\phi^{(t)}(s,a).$$
\end{lemma}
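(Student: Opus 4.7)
The plan is to mimic the standard policy-iteration argument: first establish a state-wise one-step improvement inequality of the form $\sum_a \pi_{\theta^{(t+1)}}(a\mid s)\,Q_\phi^{(t)}(s,a) \geq V_\phi^{(t)}(s)$, and then use the monotonicity of the Bellman operator to lift it to $V_\phi^{(t+1)}\geq V_\phi^{(t)}$ and finally to $Q_\phi^{(t+1)}\geq Q_\phi^{(t)}$. The nontrivial piece is the state-wise inequality, which I will derive from Lemma \ref{lemma:smoothness_of_F_tabular_softmax} by a careful choice of the vector $c_s$.

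\textbf{Per-state surrogate and alignment with the update.} Fix $s$ and introduce the auxiliary function
$$F_s(\theta_s)\ :=\ \sum_a \pi_{\theta_s}(a\mid s)\,A_\phi^{(t)}(s,a),$$
where $\theta_s=((\theta_s^1)^\top,\dots,(\theta_s^N)^\top)^\top$ and $A_\phi^{(t)}$ is the advantage under $\pi_{\theta^{(t)}}$. By the definition of advantage, $F_s(\theta_s^{(t)})=0$. Since $|A_\phi^{(t)}(s,a)|\leq \Phi_{\max}-\Phi_{\min}$, applying Lemma \ref{lemma:smoothness_of_F_tabular_softmax} with $c_{s,a}=A_\phi^{(t)}(s,a)$ shows $F_s$ is $NL(\Phi_{\max}-\Phi_{\min})=\beta$-smooth. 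A direct calculation using the product structure $\pi_{\theta_s}(a\mid s)=\prod_i\pi^i_{\theta^i_s}(a^i\mid s)$ yields
$$\frac{\partial F_s(\theta_s)}{\partial \theta^i_{s,a^i}}\bigg|_{\theta^{(t)}}\ =\ \pi^i_{\theta^{i,(t)}}(a^i\mid s)\,A_\phi^{(t),i}(s,a^i),$$
which, by Lemma \ref{lemma:state-based tabular softmax multi-agent policy gradient}, equals $\frac{1}{d_\mu^{\pi_t}(s)}\cdot\frac{\partial \Phi(\mu)}{\partial \theta^i_{s,a^i}}\big|_{\theta^{(t)}}$. Consequently the per-state update writes as $\theta_s^{(t+1)}-\theta_s^{(t)}=\eta\,d_\mu^{\pi_t}(s)\,\nabla_{\theta_s}F_s(\theta_s^{(t)})$, so the global update is a positive scaling of the gradient of the single-state surrogate.

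\textbf{Applying smoothness and iterating Bellman.} The descent-lemma form of $\beta$-smoothness with the update above gives
$$F_s(\theta_s^{(t+1)})\ \geq\ F_s(\theta_s^{(t)})+\eta\,d_\mu^{\pi_t}(s)\,\|\nabla F_s(\theta_s^{(t)})\|^2\Big(1-\tfrac{\eta\,\beta\,d_\mu^{\pi_t}(s)}{2}\Big).$$
Using $d_\mu^{\pi_t}(s)\leq 1/(1-\gamma)$ and the assumed $\eta\leq (1-\gamma)/\beta$, the parenthesized factor is at least $1/2>0$, so $F_s(\theta_s^{(t+1)})\geq F_s(\theta_s^{(t)})=0$, i.e., $\sum_a \pi_{\theta^{(t+1)}}(a\mid s)\,Q_\phi^{(t)}(s,a)\geq V_\phi^{(t)}(s)$ for every $s$. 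Letting $T^{\pi}$ denote the Bellman operator for reward $\phi$ under joint policy $\pi$, this is exactly $T^{\pi_{\theta^{(t+1)}}}V_\phi^{(t)}\geq V_\phi^{(t)}$ pointwise. Because $T^{\pi_{\theta^{(t+1)}}}$ is a monotone $\gamma$-contraction with fixed point $V_\phi^{(t+1)}$, iterating yields $V_\phi^{(t+1)}=\lim_{k\to\infty}(T^{\pi_{\theta^{(t+1)}}})^k V_\phi^{(t)}\geq V_\phi^{(t)}$, and the $Q$-claim follows from $Q_\phi^{(t+1)}(s,a)=\phi(s,a)+\gamma\,\E_{s'\sim P(\cdot\mid s,a)}[V_\phi^{(t+1)}(s')]\geq Q_\phi^{(t)}(s,a)$.

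\textbf{Main obstacle.} The delicate step is choosing the right surrogate: taking $c_{s,a}=Q_\phi^{(t)}(s,a)$ gives a smoothness constant scaling with $\|Q_\phi^{(t)}\|_\infty$ rather than $\Phi_{\max}-\Phi_{\min}$, which does not match the prescribed $\beta$, and also leaves $F_s(\theta_s^{(t)})\neq 0$ so that ``nondecreasing $F_s$'' is not quite the right policy-iteration inequality. Centering by $V_\phi^{(t)}$ (i.e., $c_{s,a}=A_\phi^{(t)}(s,a)$) simultaneously tightens $\|c_s\|_\infty$ to the advantage range and forces $F_s(\theta_s^{(t)})=0$, after which the only remaining bookkeeping is to verify that the update direction induced by $\nabla_\theta \Phi(\mu)$ — a gradient against a state-averaged objective — reduces, on the block $\theta_s$, to a positive multiple of $\nabla_{\theta_s} F_s(\theta_s^{(t)})$ so that the standard descent inequality applies.
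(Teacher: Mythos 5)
Your proposal is correct and follows essentially the same route as the paper's proof: the per-state surrogate $F_s(\theta_s)=\sum_a\pi_{\theta_s}(a|s)A_\phi^{(t)}(s,a)$ with the advantage as the constant vector $c_s$, the identification of the policy-gradient update on the block $\theta_s$ as a positive rescaling of $\nabla_{\theta_s}F_s$, the $\beta$-smoothness descent inequality to get $F_s(\theta_s^{(t+1)})\geq F_s(\theta_s^{(t)})=0$, and the policy-iteration/Bellman-monotonicity step to lift this to $V_\phi^{(t+1)}\geq V_\phi^{(t)}$ and $Q_\phi^{(t+1)}\geq Q_\phi^{(t)}$. Your write-up actually makes the final Bellman-operator step more explicit than the paper does, and the only cosmetic difference is a bookkeeping choice of whether the $1/(1-\gamma)$ factor sits in the visitation measure or in front of it, which does not affect the argument.
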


\begin{proof}
Let us use the notation $\theta_s\in\mathbb{R}^{\sum_i|\mathcal{A}_\phi^i|}$ to refer to the parameters of the product policy on state $s$. Define $$F_s(\theta_s)=\sum_{a}\pi_{\theta_s}(a|s)c(s,a)$$
where $c(s,a)$ is treated as a constant, and is set to be $A_\phi^{(t)}(s,a)$ later in the proof. Thus,
$$\frac{\partial F_s(\theta_s)}{\partial \theta_{s,a^i}^{i}}\bigg|_{\theta_{s}^{t,i}}=\sum_{a^\prime}\frac{\partial \pi_{\theta_s}(a^\prime|s)}{\partial \theta_{s,a^i}^{i}}\bigg|_{\theta_{s}^{t,i}}c(s,a^\prime)$$
$$=\underbrace{\sum_{a^\prime}\mathbbm{1}[a^\prime[i]=a^i]\frac{\partial \pi_{\theta_s}(a^\prime|s)}{\partial \theta_{s,a^i}^{i}}\bigg|_{\theta_{s}^{t,i}}c(s,a^\prime)}_{(1)}+\underbrace{\sum_{a^\prime}\mathbbm{1}[a^\prime[i]\neq a^i]\frac{\partial \pi_{\theta_s}(a^\prime|s)}{\partial \theta_{s,a^i}^{i}}\bigg|_{\theta_{s}^{t,i}}c(s,a^\prime)}_{(2)}$$
$$(1)=\sum_{a^\prime}\mathbbm{1}[a^\prime[i]=a^i]\frac{\pi_{\theta_s}(a^\prime|s)}{\pi_{\theta_s^{i}}(a^i|s)}\Big[\pi_{\theta_s^{i}}(a^i|s)(1-\pi_{\theta_s^{i}}(a^i|s))\Big]c(s,a^\prime)$$
$$=\sum_{a^\prime}\mathbbm{1}[a^\prime[i]=a^i]\pi_{\theta_s}(a^\prime|s)\Big(1-\pi_{\theta_s^{i}}(a^i|s)\Big)\bigg|_{\theta_{s,t}}c(s,a^\prime)$$
$$=\sum_{a^\prime}\mathbbm{1}[a^\prime[i]=a^i]\pi^{t}(a^\prime|s)\Big(1-\pi^{t,i}(a^i|s)\Big)c(s,a^\prime)$$
$$(2)=\sum_{a^\prime}\mathbbm{1}[a^\prime[i]\neq a^i]\frac{\pi_{\theta_s}(a^\prime|s)}{\pi_{\theta_s^{i}}(a^i|s)}\Big(-\pi_{\theta_s^{i}}(a^i|s)\pi_{\theta_s^{i}}(a^\prime[i]|s)\Big)\bigg|_{\theta_{s}^{t,i}}c(s,a^\prime)$$
$$=-\sum_{a^\prime}\mathbbm{1}[a^\prime[i]\neq a^i]\pi_{\theta_s}(a^\prime|s)\pi_{\theta_s^{i}}(a^i|s)\bigg|_{\theta_{s}^{t,i}}c(s,a^\prime)$$
$$=-\sum_{a^\prime}\mathbbm{1}[a^\prime[i]\neq a^i]\pi^{t}(a^\prime|s)\pi^{t,i}(a^i|s)c(s,a^\prime)$$
$(1)+(2)=\sum_{a^\prime}\mathbbm{1}[a^\prime[i]=a^i]\pi^{t}(a^\prime|s) c(s,a^\prime)-$

$$\Big(\sum_{a^\prime}\mathbbm{1}[a^\prime[i]=a^i]\pi^{t}(a^\prime|s)\pi^{t,i}(a^i|s)c(s,a^\prime)+\sum_{a^\prime}\mathbbm{1}[a^\prime[i]\neq a^i]\pi^{t}(a^\prime|s)\pi^{t,i}(a^i|s)c(s,a^\prime)\Big)$$
$$=\sum_{a^\prime}\mathbbm{1}[a^\prime[i]=a^i]\pi^{t}(a^\prime|s) c(s,a^\prime)-\sum_{a^\prime}\pi^{t}(a^\prime|s)\pi^{t,i}(a^i|s)c(s,a^\prime)$$
Let $c(s,a^\prime)=A_\phi(s,a^\prime)$,
$$=\sum_{a^\prime}\mathbbm{1}[a^\prime[i]=a^i]\pi^{t}(a^\prime|s) A_\phi(s,a^\prime)-\sum_{a^\prime}\pi^{t}(a^\prime|s)\pi^{t,i}(a^i|s)A_\phi(s,a^\prime)$$
$$=\sum_{a^\prime}\mathbbm{1}[a^\prime[i]=a^i]\pi^{t}(a^\prime|s) A_\phi(s,a^\prime)$$
$$=\pi_{\theta^{i}}(a^i|s)A_\phi^{\pi_{\theta},i}(s,a^i)$$
Therefore, $$\nabla \Phi_{\theta^i}^{t}(\mu)=\frac{1}{1-\gamma}d_\mu^{\pi_\theta}(s)\frac{\partial F_s(\theta_s)}{\partial \theta_{s,a^i}^{i}}$$
$$\longrightarrow \theta_s^{t+1}=\theta_s^{t}+\eta \frac{1}{1-\gamma}d_\mu^{\pi_\theta}(s)\frac{\partial F_s(\theta_s)}{\partial \theta_{s}}\bigg|_{\theta_s^{t}}$$
Since $F_s(\theta_s)$ is a $\beta$-smooth function for $\beta=N\text{max}(5,\sqrt{N})(\Phi_{\text{max}}-\Phi_{\text{min}})$, then our assumptions that $\eta\leq \frac{1-\gamma}{\beta}=\frac{1-\gamma}{N\text{max}(5,\sqrt{N})(\Phi_{\text{max}}-\Phi_{\text{min}})}$ implies $\eta \frac{1}{1-\gamma}d_\mu^{\pi_\theta}(s)\leq \frac{1}{\beta}$, which means $$F_s(\theta_s^{t+1})\geq F_s(\theta_s^{t})$$
$$\longrightarrow V_\phi^{(t+1)}(s)\geq V_\phi^{(t)}(s);Q_\phi^{(t+1)}(s,a)\geq Q_\phi^{(t)}(s,a).$$
\end{proof}

\begin{lemma}
For all states s and actions a, there exists values $V_\phi^\infty(s), Q_\phi^{\infty}(s,a)\text{ and }Q_\phi^{\infty,i}(s,a)$ such that as $t\rightarrow \infty, V_\phi^t(s)\rightarrow V_\phi^\infty(s),Q_\phi^t(s,a)\rightarrow Q_\phi^\infty(s,a),Q_\phi^{t,i}(s,a)\rightarrow Q_\phi^{\infty,i}(s,a)$.
Define $$\Delta^i = \min_{\{s,a^i|A_\phi^{\infty,i}(s,a^i)\neq 0\}}|A_\phi^{\infty,i}(s,a^i)|.$$
$$\Delta = \min_{i}\Delta^i.$$
Further, there exists a $T_0$ such that $\forall t>T_0, s\in \mathcal{S}, a^i\in \mathcal{A}_\phi^i,$ $$Q_\phi^{\infty,i}(s,a^i)-\frac{\Delta}{4}\leq Q_\phi^{t,i}(s,a^i)\leq Q_\phi^{\infty,i}(s,a^i)+\frac{\Delta}{4}$$
\end{lemma}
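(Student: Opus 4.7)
The plan is to combine the monotonicity established in the preceding lemma with the boundedness of $\phi$ to invoke the monotone convergence theorem, and then to push the convergence from the joint $Q_\phi^t$ to the per-agent $Q_\phi^{t,i}$ using Assumption \ref{assumption:PG} together with continuity of softmax.

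\textbf{Step 1 (convergence of $V_\phi^t$ and $Q_\phi^t$).} First I would observe that the choice of stepsize in Theorem \ref{theorem:Asymptotic convergence to Nash with gradient ascent} satisfies $\eta\leq (1-\gamma)/\beta$ with $\beta = N\max(5,\sqrt{N})(\Phi_{\max}-\Phi_{\min})$, so the preceding lemma applies and the sequences $\{V_\phi^t(s)\}_t$ and $\{Q_\phi^t(s,a)\}_t$ are nondecreasing in $t$ for every $(s,a)$. Since $\phi$ is bounded (Assumption \ref{assumption:Potential function is bounded}), both sequences are bounded above by a constant depending only on $\Phi_{\max}$ and $\gamma$. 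By the monotone convergence theorem (applied pointwise over the finite set $\mathcal{S}\times\mathcal{A}$), the limits $V_\phi^\infty(s):=\lim_{t\to\infty} V_\phi^t(s)$ and $Q_\phi^\infty(s,a):=\lim_{t\to\infty} Q_\phi^t(s,a)$ exist.

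\textbf{Step 2 (convergence of $Q_\phi^{t,i}$).} Next I would use Assumption \ref{assumption:PG}, which gives $\theta^i_t\to\theta^i_*$ for every agent $i$. Since the softmax map $\theta^i\mapsto\pi^i_{\theta^i}(\cdot|s)$ is continuous, we have $\pi^j_{\theta^j_t}(a^j|s)\to \pi^j_{\theta^j_*}(a^j|s)$ for each $j,s,a^j$. Writing
\begin{align*}
  Q_\phi^{t,i}(s,a^i) = \sum_{a^{-i}\in\mathcal{A}^{-i}} \Bigl(\prod_{j\neq i}\pi^j_{\theta^j_t}(a^j|s)\Bigr) Q_\phi^t(s,a^i,a^{-i}),
\end{align*}
each of the finitely many summands converges (as a product of a convergent factor from the policies and a convergent factor from Step 1), so $Q_\phi^{t,i}(s,a^i)\to Q_\phi^{\infty,i}(s,a^i)$ where $Q_\phi^{\infty,i}(s,a^i):=\sum_{a^{-i}}\prod_{j\neq i}\pi^j_{\theta^j_*}(a^j|s)\,Q_\phi^\infty(s,a^i,a^{-i})$.

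\textbf{Step 3 ($T_0$ exists).} With $\Delta$ defined from $A_\phi^{\infty,i}$ as in the statement, $\Delta>0$ is a fixed positive constant (a minimum over a finite set of strictly positive values; if the set is empty for some $i$ the bound $\Delta^i=+\infty$ is vacuous, so I would treat that case by convention). For each pair $(s,a^i)$ the convergence in Step 2 yields some $T_0(s,a^i)$ beyond which $|Q_\phi^{t,i}(s,a^i)-Q_\phi^{\infty,i}(s,a^i)|\leq \Delta/4$; since $\bigcup_i\mathcal{S}\times\mathcal{A}^i$ is finite, taking $T_0:=\max_{i,s,a^i} T_0(s,a^i)$ gives a single threshold that works uniformly.

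\textbf{Expected obstacles.} None of the steps is technically hard: the only subtlety is to make sure that Assumption \ref{assumption:PG} is actually needed (it is, because monotonicity alone gives convergence of $V_\phi^t$ and $Q_\phi^t$ but not of the factored policy-averaged $Q_\phi^{t,i}$ whose definition depends on the other agents' parameters). The remaining work is routine $\varepsilon$–$T_0$ bookkeeping exploiting the finiteness of the state–action space.
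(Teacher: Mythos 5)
Your proposal is correct and follows essentially the same route as the paper: monotonicity from the preceding lemma plus boundedness of $\phi$ gives convergence of $V_\phi^t$ and $Q_\phi^t$, Assumption \ref{assumption:PG} (policy convergence) gives convergence of the policy-averaged $Q_\phi^{t,i}$, and finiteness of the state--action space yields a uniform $T_0$. Your Step 2 merely spells out explicitly the decomposition that the paper compresses into ``since the product policy is assumed to converge,'' and your remark on the degenerate case $\Delta^i=+\infty$ is a harmless extra.
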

\begin{proof}
$\{V_\phi^t(s)\}$ is bounded and monotonically increasing, therefore $V_\phi^t(s)\rightarrow V_\phi^\infty(s)$. Similarly, we know $Q_\phi^t(s,a)\rightarrow Q_\phi^\infty(s,a)$. 
Since the product policy is assumed to converge, we have that $\{Q_\phi^{t,i}(s,a^i)\}$ is convergent. 
For agent $i$, state $s$, categorize the local action $a^i$ into three groups:
$$I_0^{s,i}=\Bigg\{a^i|Q_\phi^{\infty,i}(s,a^i)=V_\phi^\infty(s)\Bigg\}$$
$$I_+^{s,i}=\Bigg\{a^i|Q_\phi^{\infty,i}(s,a^i)>V_\phi^\infty(s)\Bigg\}$$
$$I_-^{s,i}=\Bigg\{a^i|Q_\phi^{\infty,i}(s,a^i)<V_\phi^\infty(s)\Bigg\}$$
Since $Q_\phi^{t,i}(s,a^i)\rightarrow Q_\phi^{\infty,i}(s,a^i)$ as $t\rightarrow \infty$, there exists a $T_0$ such that $\forall t>T_0, s\in \mathcal{S}, a^i\in \mathcal{A}_\phi^i,$ $$Q_\phi^{\infty,i}(s,a^i)-\frac{\Delta}{4}\leq Q_\phi^{t,i}(s,a^i)\leq Q_\phi^{\infty,i}(s,a^i)+\frac{\Delta}{4}$$
\end{proof}

\begin{lemma}
$\exists T_1$ such that $\forall t>T_1,s\in \mathcal{S},$ we have $$A_\phi^{t,i}(s,a^i)<-\frac{\Delta}{4}\text{ for }a^i\in I_-^{s,i};A_\phi^{t,i}(s,a^i)>\frac{\Delta}{4}\text{ for }a^i\in I_+^{s,i}$$
\end{lemma}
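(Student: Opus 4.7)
The plan is to deduce both sign-and-magnitude claims on $A_\phi^{t,i}(s,a^i) = Q_\phi^{t,i}(s,a^i) - V_\phi^t(s)$ by combining the preceding lemma's uniform $\Delta/4$-approximation of $Q_\phi^{t,i}$ with the convergence $V_\phi^t(s) \to V_\phi^\infty(s)$, applied through the triangle inequality.

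First, I would upgrade the pointwise convergence of $V_\phi^t$ to a uniform-in-state statement: because $\mathcal{S}$ is finite in the tabular setting, for each $s$ pick a time $T_V^s$ with $|V_\phi^t(s) - V_\phi^\infty(s)| \leq \Delta/4$ for all $t > T_V^s$, and then set $T_V := \max_{s \in \mathcal{S}} T_V^s$ and $T_1 := \max(T_0, T_V)$, where $T_0$ is supplied by the previous lemma.

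Then, for any $t > T_1$, $s \in \mathcal{S}$, and $a^i \in I_+^{s,i}$, the definitions of $I_+^{s,i}$ and $\Delta$ give $A_\phi^{\infty,i}(s,a^i) = Q_\phi^{\infty,i}(s,a^i) - V_\phi^\infty(s) \geq \Delta$, and the triangle inequality applied to both $Q$- and $V$-approximations yields
\begin{align*}
A_\phi^{t,i}(s,a^i)
&= Q_\phi^{t,i}(s,a^i) - V_\phi^t(s) \\
&\geq \bigl(Q_\phi^{\infty,i}(s,a^i) - \tfrac{\Delta}{4}\bigr) - \bigl(V_\phi^\infty(s) + \tfrac{\Delta}{4}\bigr) \\
&= A_\phi^{\infty,i}(s,a^i) - \tfrac{\Delta}{2} \;\geq\; \tfrac{\Delta}{2} \;>\; \tfrac{\Delta}{4}.
\end{align*}
The case $a^i \in I_-^{s,i}$ is completely symmetric: $A_\phi^{\infty,i}(s,a^i) \leq -\Delta$, and the same two $\Delta/4$ slacks produce $A_\phi^{t,i}(s,a^i) \leq -\Delta/2 < -\Delta/4$.

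There is no substantive obstacle here: the $\Delta/4$ tolerance on each of $Q_\phi^{t,i}$ and $V_\phi^t$ was arranged precisely to leave a $\Delta/2$ safety margin in the advantage, so the argument reduces to a bookkeeping use of the triangle inequality. The only mild technicality is the upgrade from pointwise to uniform-in-$s$ convergence of $V_\phi^t$, which is immediate by finiteness of $\mathcal{S}$ and does not require any new dynamical input beyond what the previous lemma already provides.
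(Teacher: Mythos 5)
Your proof is correct and follows essentially the same route as the paper: combine the $\Delta/4$-approximation of $Q_\phi^{t,i}$ from the preceding lemma with the convergence $V_\phi^t(s)\to V_\phi^\infty(s)$ and the definition of $\Delta$, then apply the triangle inequality (the paper exploits monotonicity of $V_\phi^t$ to save one $\Delta/4$ slack in the $I_+$ case, but your symmetric two-sided bound still clears the $\Delta/4$ threshold). Your explicit handling of uniformity over the finite state space is a small improvement in rigor over the paper's presentation.
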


\begin{proof}
Since $V_\phi^t(s)\rightarrow V_\phi^\infty(s)$, we have that there exists $T_1>T_0$ such that for all $t>T_1$, $$V_\phi^\infty(s)-\frac{\Delta}{4}\leq V_\phi^t(s)\leq V_\phi^\infty(s)+\frac{\Delta}{4}$$
For $a^i\in I_-^{s,i}, t>T_1>T_0,$  
\begin{equation} \label{eq2}
\begin{split}
A_\phi^{t,i}(s,a^i) & = Q_\phi^{t,i}(s,a^i)-V_\phi^t(s) \\
 & \leq Q_\phi^{\infty,i}(s,a^i)+\frac{\Delta}{4}-V_\phi^t(s)
 \\
 & \leq Q_\phi^{\infty,i}(s,a^i)+\frac{\Delta}{4}-V_\phi^\infty(s)+\frac{\Delta}{4}
 \\
 & \leq -\Delta+\frac{\Delta}{4}+\frac{\Delta}{4}
 \\
 & \leq -\frac{\Delta}{4}
\end{split}
\end{equation}

For $a^i\in I_+^{s,i}, t>T_1>T_0,$  
\begin{equation} \label{eq3}
\begin{split}
A_\phi^{t,i}(s,a^i) & = Q_\phi^{t,i}(s,a^i)-V_\phi^t(s) \\
 & \geq Q_\phi^{\infty,i}(s,a^i)-\frac{\Delta}{4}-V_\phi^t(s)
 \\
 & \geq Q_\phi^{\infty,i}(s,a^i)-\frac{\Delta}{4}-V_\phi^\infty(s)
 \\
 & \geq \Delta-\frac{\Delta}{4}
 \\
 & \geq \frac{\Delta}{4}
\end{split}
\end{equation}
\end{proof}

\begin{lemma}
$\frac{\partial \Phi^{\pi_\theta}(\mu)}{\partial \theta_{s,a^i}^{i}}\rightarrow 0 \text{ as }t\rightarrow\infty$ for all states $s$, agents $i$, actions $a^i$. This implies that $\forall a^i\in I_-^{s,i}\cup I_+^{s,i}, \pi^{t,i}(a^i|s)\rightarrow 0$ and that $\sum_{a^i\in I_0^{s,i}}\pi^{t,i}(a^i|s)\rightarrow 1$.
\end{lemma}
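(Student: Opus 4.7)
The plan is to prove the gradient-vanishing claim first by viewing the simultaneous multi-agent update as joint gradient ascent on the total potential $\Phi$, and then to combine this with the gradient factorization of Lemma~\ref{lemma:state-based tabular softmax multi-agent policy gradient} together with the advantage lower bound established in the preceding appendix lemma.

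First I would note that, since $\nabla_{\theta^i}V^i_{\theta_t}(\mu)=\nabla_{\theta^i}\Phi_{\theta_t}(\mu)$ for each $i$ by the MPG property, the per-agent update is exactly coordinate-wise gradient ascent, so the joint update reads $\theta_{t+1}=\theta_t+\eta\nabla_\theta\Phi_{\theta_t}(\mu)$. Lemma~\ref{lemma:smoothness_tabular_softmax} says $\Phi_\theta(\mu)$ is $L$-smooth with $L=\tfrac{41N}{4(1-\gamma)^3}$, and the stepsize hypothesis ensures $\eta\leq 1/L$. The standard smoothness ascent inequality then yields
\begin{equation*}
\Phi_{\theta_{t+1}}(\mu)\ \geq\ \Phi_{\theta_t}(\mu)+\tfrac{\eta}{2}\,\|\nabla_\theta\Phi_{\theta_t}(\mu)\|_2^2.
\end{equation*}
Telescoping together with the upper bound in Assumption~\ref{assumption:Potential function is bounded} gives $\sum_{t\geq 0}\|\nabla_\theta\Phi_{\theta_t}(\mu)\|_2^2\leq(2/\eta)(\Phi_{\max}-\Phi_{\min})<\infty$, so the nonnegative summands tend to zero, and in particular every scalar partial $\partial\Phi_{\theta_t}(\mu)/\partial\theta^i_{s,a^i}\to 0$.

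Next I would invoke Lemma~\ref{lemma:state-based tabular softmax multi-agent policy gradient} (applied with reward $\phi$) to write
\begin{equation*}
\frac{\partial\Phi_{\theta_t}(\mu)}{\partial\theta^i_{s,a^i}}\ =\ d^{\pi_{\theta_t}}_\mu(s)\,\pi^{t,i}(a^i|s)\,A^{t,i}_\phi(s,a^i),
\end{equation*}
and lower-bound the two non-policy factors uniformly in $t$. Assumption~\ref{assumption:PG} yields $\pi_{\theta_t}\to\pi_{\theta_*}$, and since $d^\pi_\mu(s)$ depends continuously on $\pi$, $d^{\pi_{\theta_t}}_\mu(s)\to d^{\pi_{\theta_*}}_\mu(s)>0$ by Assumption~\ref{assumption:discounted state visitation distribution}. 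Hence there is a constant $c>0$ with $d^{\pi_{\theta_t}}_\mu(s)\geq c$ for all sufficiently large $t$. The preceding appendix lemma gives $|A^{t,i}_\phi(s,a^i)|>\Delta/4$ for $t>T_1$ whenever $a^i\in I_-^{s,i}\cup I_+^{s,i}$. Combining,
\begin{equation*}
\pi^{t,i}(a^i|s)\ \leq\ \frac{4}{c\Delta}\left|\frac{\partial\Phi_{\theta_t}(\mu)}{\partial\theta^i_{s,a^i}}\right|\ \longrightarrow\ 0
\end{equation*}
for every such $a^i$; since $\mathcal{A}^i$ is finite and $\pi^{t,i}(\cdot|s)$ sums to $1$ over the partition $I_0^{s,i}\cup I_+^{s,i}\cup I_-^{s,i}$, the residual mass on $I_0^{s,i}$ must tend to $1$.

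The one subtle point is bridging the three ingredients: the smoothness argument only controls the joint gradient norm, but convergence of each scalar partial follows immediately from the fact that components of a vanishing vector vanish. More importantly, extracting the uniform positive lower bound on $d^{\pi_{\theta_t}}_\mu(s)$ requires \emph{both} Assumption~\ref{assumption:PG} (to pass to a meaningful limit of the sequence of visitation measures) and Assumption~\ref{assumption:discounted state visitation distribution} (to make that limit strictly positive); without such a lower bound, the factorization would not force $\pi^{t,i}(a^i|s)\to 0$ on $I_-^{s,i}\cup I_+^{s,i}$.
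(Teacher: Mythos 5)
Your proposal is correct and follows essentially the same route as the paper: the standard ascent-lemma/telescoping argument (which the paper only gestures at with ``since $\Phi$ is smooth'') to get vanishing gradients, then the factorization of Lemma~\ref{lemma:state-based tabular softmax multi-agent policy gradient} together with the $\Delta/4$ advantage bound and a positive lower bound on $d^{\pi_{\theta_t}}_\mu(s)$ to force $\pi^{t,i}(a^i|s)\to 0$ on $I_-^{s,i}\cup I_+^{s,i}$. The only cosmetic difference is how the visitation lower bound is obtained (you pass to the limit via Assumption~\ref{assumption:PG} and continuity, while the paper invokes strict positivity of $\mu$, giving $d^\pi_\mu(s)\geq\mu(s)$ uniformly); both are valid.
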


\begin{proof}
Since $\Phi^{\pi_\theta}(\mu)$ is smooth, we know $\frac{\partial \Phi^{\pi_\theta}(\mu)}{\partial \theta_{s,a^i}^{i}}\rightarrow 0$ for all $s,i,a^i$. From lemma 1 we have $$\frac{\partial \Phi^{t}(\mu)}{\partial \theta_{s,a^i}^{i}}=\frac{1}{1-\gamma}d_\mu^{\pi^t}(s)\pi^{t,i}(a^i|s)A_\phi^{\pi^{t,i}}(s,a^i)$$
Since from lemma 4 we know that $|A_\phi^{t,i}(s,a^i)|>\frac{\Delta}{4}$ for all $t>T_1$, for all $a^i\in I_-^{s,i}\cup I_+^{s,i}$
, which together with the assumption that $\mu$ is strict positive for all state $s$ prove $\pi^{t,i}(a^i|s)\rightarrow 0$. Then we also know for all $\sum_{a^i\in I_0^{s,i}}\pi^{t,i}(a^i|s)\rightarrow 1$. 
\end{proof}

\begin{lemma}
For $t\geq T_1$, $\theta_{s,a^i}^{i}$ is strictly decreasing $\forall a^i\in I_-^{s,i}$ and $\theta_{s,a^i}^{i}$ is strictly increasing $\forall a^i\in I+^{s,i}$.
\end{lemma}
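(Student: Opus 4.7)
The plan is to read the sign of the update directly from the closed-form gradient expression established in Lemma~\ref{lemma:state-based tabular softmax multi-agent policy gradient}, combined with the sign information on the advantages obtained in the previous lemma.

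First I would recall that the update rule is
\begin{align*}
    \theta^{i,t+1}_{s,a^i} - \theta^{i,t}_{s,a^i} = \eta \frac{\partial \Phi^{t}(\mu)}{\partial \theta^{i}_{s,a^i}} = \frac{\eta}{1-\gamma}\, d_\mu^{\pi^t}(s)\, \pi^{t,i}(a^i|s)\, A_\phi^{t,i}(s,a^i),
\end{align*}
so the sign of the increment is exactly the sign of $A_\phi^{t,i}(s,a^i)$, provided the remaining factors are strictly positive.

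Next I would verify positivity of those factors. The stepsize $\eta$ is strictly positive by construction, and $\tfrac{1}{1-\gamma}>0$ since $\gamma\in[0,1)$. Assumption~\ref{assumption:discounted state visitation distribution} guarantees $d_\mu^{\pi^t}(s)>0$ for every state~$s$ under any policy, and the softmax parameterization forces $\pi^{t,i}(a^i|s)>0$ for every local action $a^i$. Hence the increment has the same sign as $A_\phi^{t,i}(s,a^i)$.

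Finally I would invoke the previous lemma, which shows that for all $t\ge T_1$ we have $A_\phi^{t,i}(s,a^i) < -\Delta/4 < 0$ whenever $a^i\in I_-^{s,i}$, and $A_\phi^{t,i}(s,a^i) > \Delta/4 > 0$ whenever $a^i\in I_+^{s,i}$. Substituting these sign conclusions into the increment expression yields $\theta^{i,t+1}_{s,a^i}-\theta^{i,t}_{s,a^i}<0$ for $a^i\in I_-^{s,i}$ and $\theta^{i,t+1}_{s,a^i}-\theta^{i,t}_{s,a^i}>0$ for $a^i\in I_+^{s,i}$, which gives strict monotonicity. There is no real obstacle here: the work was done in establishing the uniform sign of the advantages, and this lemma is essentially a mechanical consequence; the only subtlety worth stating explicitly is that the softmax strict positivity of $\pi^{t,i}(a^i|s)$ and Assumption~\ref{assumption:discounted state visitation distribution} together forbid the gradient from vanishing along these coordinates, so the monotonicity is strict rather than weak.
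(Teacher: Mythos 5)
Your proposal is correct and follows essentially the same route as the paper: read off the sign of the gradient update from the closed-form expression in the multi-agent policy gradient lemma, and plug in the uniform advantage sign bounds ($A_\phi^{t,i}(s,a^i)\leq -\Delta/4$ on $I_-^{s,i}$ and $\geq \Delta/4$ on $I_+^{s,i}$) from the preceding lemma. Your explicit verification that the discounted visitation measure and the softmax probabilities are strictly positive (so the monotonicity is strict) is a detail the paper leaves implicit, but the argument is the same.
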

\begin{proof}
From lemma 1 we have $$\frac{\partial \Phi^{t}(\mu)}{\partial \theta_{s,a^i}^{i}}=\frac{1}{1-\gamma}d_\mu^{\pi^t}(s)\pi^{t,i}(a^i|s)A_\phi^{{t,i}}(s,a^i)$$
From lemma 4, we know for all $t>T_1,a^i\in I_-^{s,i}, A_\phi^{t,i}(s,a^i) \leq -\frac{\Delta}{4};$ For all $a^i\in I_+^{s,i}, A_\phi^{t,i}(s,a^i) \geq \frac{\Delta}{4}.$ This implies that after iteration $T_1$, $\frac{\partial \Phi^{t}(\mu)}{\partial \theta_{s,a^i}^{i}}<0 \forall a^i\in I_-^{s,i};\frac{\partial \Phi^{t}(\mu)}{\partial \theta_{s,a^i}^{i}}>0 \forall a^i\in I_+^{s,i}.\longrightarrow$ After iteration $T_1$, $\theta_{s,a^i}^{i}$ is strictly decreasing $\forall a^i\in I_-^{s,i}$ and $\theta_{s,a^i}^{i}$ is strictly increasing $\forall a^i\in I_+^{s,i}$.
\end{proof}

\begin{lemma}
For all states $s$ where $I_+^{s,i}\neq \emptyset$, we have:
$$\text{max}_{a^i\in I_0^{s,i}}\theta_{s,a^i}^{t,i}\rightarrow\infty, \text{ min}_{a^i\in \mathbb{A}^i}\theta_{s,a^i}^{t,i}\rightarrow-\infty$$
\end{lemma}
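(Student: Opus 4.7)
The plan is to combine the preceding lemma on strict monotonicity of $\theta_{s,a^i}^{t,i}$ with the preceding lemma showing $\pi^{t,i}(a^i|s)\to 0$ for $a^i\in I_+^{s,i}\cup I_-^{s,i}$ (and $\sum_{a^i\in I_0^{s,i}}\pi^{t,i}(a^i|s)\to 1$), and then to exploit a conservation identity of the gradient update to force the minimum parameter to diverge to $-\infty$.

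First I would introduce the partition function $Z_t := \sum_{\bar{a}^i \in \mathcal{A}^i} \exp(\theta_{s,\bar{a}^i}^{t,i})$ and show $Z_t \to \infty$. Fixing any $a^i_+ \in I_+^{s,i}$ (nonempty by hypothesis), the strict increase of $\theta_{s,a^i_+}^{t,i}$ for $t \geq T_1$ keeps $\exp(\theta_{s,a^i_+}^{t,i})$ bounded below by a positive constant, while $\pi^{t,i}(a^i_+|s) = \exp(\theta_{s,a^i_+}^{t,i})/Z_t \to 0$; hence $Z_t \to \infty$. Next I would argue that the exponentials indexed by $I_0^{s,i}$ carry essentially all of $Z_t$: the $I_+^{s,i}$ contribution is $o(Z_t)$ since each $\pi^{t,i}(a^i|s) \to 0$ there, and the $I_-^{s,i}$ contribution is bounded above because the corresponding parameters are strictly decreasing after $T_1$. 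Combined with $\sum_{a^i \in I_0^{s,i}} \pi^{t,i}(a^i|s) \to 1$, this yields $\sum_{a^i \in I_0^{s,i}} \exp(\theta_{s,a^i}^{t,i}) \to \infty$, and in particular $\max_{a^i \in I_0^{s,i}} \theta_{s,a^i}^{t,i} \to \infty$.

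For the minimum I would invoke a conservation identity of the policy gradient dynamics \eqref{eq:PG}: summing the gradient formula of Lemma~\ref{lemma:state-based tabular softmax multi-agent policy gradient} over $a^i$ gives
\begin{align*}
\sum_{a^i \in \mathcal{A}^i}\frac{\partial \Phi_\theta(\mu)}{\partial \theta^i_{s,a^i}} = d^{\pi_\theta}_\mu(s)\sum_{a^i \in \mathcal{A}^i} \pi^i_{\theta^i}(a^i|s) A^i_\theta(s, a^i) = 0,
\end{align*}
since advantages have zero policy-weighted expectation. Thus $\sum_{a^i \in \mathcal{A}^i} \theta_{s,a^i}^{t,i}$ is invariant under the update, and in particular uniformly bounded in $t$. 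Combined with $\max_{a^i \in I_0^{s,i}} \theta_{s,a^i}^{t,i} \to \infty$, this forces $\min_{a^i \in \mathcal{A}^i} \theta_{s,a^i}^{t,i} \to -\infty$.

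I expect the main technical subtlety to be the argument that $I_0^{s,i}$ dominates $Z_t$: a priori the monotonically increasing parameters indexed by $I_+^{s,i}$ could themselves drive the divergence of $Z_t$, which would break the approach. The preceding lemma stating $\pi^{t,i}(a^i|s) \to 0$ for $a^i$ outside $I_0^{s,i}$ is precisely what rules this out, so the hard work has effectively been pushed into the earlier lemmas of this appendix.
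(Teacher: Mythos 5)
Your proposal is correct and follows essentially the same route as the paper: deduce $Z_t\to\infty$ from $\pi^{t,i}(a^i_+|s)\to 0$ together with the monotone increase of $\theta^{t,i}_{s,a^i_+}$, conclude $\sum_{a^i\in I_0^{s,i}}\exp(\theta^{t,i}_{s,a^i})\to\infty$ from $\sum_{a^i\in I_0^{s,i}}\pi^{t,i}(a^i|s)\to 1$, and use the zero-sum-of-gradients conservation of $\sum_{a^i}\theta^{t,i}_{s,a^i}$ to force the minimum to $-\infty$. Your extra accounting of the $I_+^{s,i}$ and $I_-^{s,i}$ contributions to $Z_t$ is harmless but unnecessary, since the ratio $\sum_{a^i\in I_0^{s,i}}\exp(\theta^{t,i}_{s,a^i})/Z_t\to 1$ with $Z_t\to\infty$ already gives the conclusion directly.
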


\begin{proof}
Since $I_+^{s,i}\neq \emptyset$, we have some action $a_+^i\in I_+^{s,i}$. From lemma 5, we know $$\pi^{t,i}(a^i_+|s)\rightarrow 0 \text{ as }t\rightarrow \infty$$ 
$$\longrightarrow \frac{\text{exp}(\theta_{s,a_+^i}^{t,i})}{\sum_{a^i\in \mathbb{A}^i}\text{exp}(\theta_{s,a^i}^{t,i})}\rightarrow 0 \text{ as }t\rightarrow \infty$$
From lemma 6 we know $\theta_{s,a_+^i}^{t,i}$ is monotonically increasing, which implies 
$$\sum_{a^i\in \mathbb{A}^i}\text{exp}(\theta_{s,a^i}^{t,i})\rightarrow \infty \text{ as }t\rightarrow \infty$$
From lemma 5, we also know $$\sum_{a^i\in I_0^{s,i}}\pi^{t,i}(a^i|s)\rightarrow 1$$
$$\longrightarrow \frac{\sum_{a^i\in I_0^{s,i}}\text{exp}(\theta_{s,a^i}^{t,i})}{\sum_{a^i\in \mathbb{A}^i}\text{exp}(\theta_{s,a^i}^{t,i})}\rightarrow 1$$
Since denominator does to $\infty$, we know
$$\sum_{a^i\in I_0^{s,i}}\text{exp}(\theta_{s,a^i}^{t,i})\rightarrow\infty$$
which implies $$\text{max}_{a^i\in I_0^{s,i}}\theta_{s,a^i}^{t,i}\rightarrow\infty$$
Note this also implies $\text{max}_{a^i\in \mathbb{A}^i}\theta_{s,a^i}^{t,i}\rightarrow\infty$. The sum of the gradient is always zero: $\sum_{a^i\in \mathbb{A}^i}\frac{\partial \Phi^{t}(\mu)}{\partial \theta_{s,a^i}^{i}}=\frac{1}{1-\gamma}d_\mu^{\pi^t}(s)\sum_{a^i\in \mathbb{A}^i}\pi_{\theta^{i}}(a^i|s)A_\phi^{\pi^{t,i}}(s,a^i)=0$. Thus, $\sum_{a^i\in \mathbb{A}^i}\theta_{s,a^i}^{t,i}=\sum_{a^i\in \mathbb{A}^i}\theta_{s,a^i}^{0,i}$ which is a constant. Since $\text{max}_{a^i\in \mathbb{A}^i}\theta_{s,a^i}^{t,i}\rightarrow\infty$, we know $$\text{min}_{a^i\in \mathbb{A}^i}\theta_{s,a^i}^{t,i}\rightarrow-\infty$$
\end{proof}

\begin{lemma}
Suppose $a_+^i\in I_+^{s,i}$. $\forall a\in I_0^{s,i},$ if $\exists t\geq T_1$ such that $\pi^{t,i}(a|s)\leq \pi^{t,i}(a_+^i|s)$, then $\forall \tau\geq t, \pi^{\tau,i}(a|s)\leq \pi^{\tau,i}(a_+^i|s)$.  

\end{lemma}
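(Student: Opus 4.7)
My plan is to argue by induction on $\tau \geq t$. Since the softmax is monotone in its logits, the claim $\pi^{\tau,i}(a|s) \leq \pi^{\tau,i}(a_+^i|s)$ is equivalent to $\theta^{\tau,i}_{s,a_+^i} \geq \theta^{\tau,i}_{s,a}$, which is what I will track. The base case $\tau = t$ is the hypothesis. For the inductive step, I subtract the two coordinatewise policy-gradient updates (using the softmax gradient formula from Appendix~\ref{sec:Proof of Theorem Asymptotic convergence to Nash with gradient ascent}) to obtain
\begin{align*}
\theta^{\tau+1,i}_{s,a_+^i} - \theta^{\tau+1,i}_{s,a}
=\; &\bigl(\theta^{\tau,i}_{s,a_+^i} - \theta^{\tau,i}_{s,a}\bigr) \\
&+ \tfrac{\eta}{1-\gamma}\, d_\mu^{\pi^\tau}(s)\bigl[\pi^{\tau,i}(a_+^i|s)\,A_\phi^{\tau,i}(s,a_+^i) - \pi^{\tau,i}(a|s)\,A_\phi^{\tau,i}(s,a)\bigr].
\end{align*}
The first summand is nonnegative by the inductive hypothesis, so it suffices to show the bracketed quantity is nonnegative.

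The key ingredient is a uniform advantage gap: for every $\tau \geq T_1$, $a \in I_0^{s,i}$, and $a_+^i \in I_+^{s,i}$, I claim $A_\phi^{\tau,i}(s,a_+^i) > A_\phi^{\tau,i}(s,a)$. The $V_\phi^\tau(s)$ contributions cancel in the difference, leaving $Q_\phi^{\tau,i}(s,a_+^i) - Q_\phi^{\tau,i}(s,a)$. Combining the $\tfrac{\Delta}{4}$-approximation of $Q_\phi^{\tau,i}$ to $Q_\phi^{\infty,i}$ (from the $Q$-convergence lemma) with $Q_\phi^{\infty,i}(s,a_+^i) - V_\phi^\infty(s) \geq \Delta$ and $Q_\phi^{\infty,i}(s,a) = V_\phi^\infty(s)$, this gap is at least $\Delta - \tfrac{\Delta}{2} = \tfrac{\Delta}{2} > 0$. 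Moreover, from the same $Q$-convergence bounds together with $V_\phi^\tau(s) \leq V_\phi^\infty(s)$, one also gets $A_\phi^{\tau,i}(s,a_+^i) \geq \tfrac{3\Delta}{4} > 0$ for $\tau \geq T_1$.

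To close the inductive step I then case-split on the sign of $A_\phi^{\tau,i}(s,a)$. If $A_\phi^{\tau,i}(s,a) \leq 0$, the second product in the bracket is $\leq 0$ while the first is strictly positive, so the bracket is positive. If $A_\phi^{\tau,i}(s,a) > 0$, then by the inductive hypothesis $\pi^{\tau,i}(a_+^i|s) \geq \pi^{\tau,i}(a|s) > 0$, and by the advantage gap $A_\phi^{\tau,i}(s,a_+^i) > A_\phi^{\tau,i}(s,a) > 0$; termwise product dominance gives a nonnegative bracket. Either way, $\theta^{\tau+1,i}_{s,a_+^i} \geq \theta^{\tau+1,i}_{s,a}$, completing the induction.

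The step I expect to be most delicate is the second case, where both the probability ordering and the advantage ordering must cooperate: were the advantage gap allowed to degenerate, an equal-probability configuration $\pi^{\tau,i}(a_+^i|s) = \pi^{\tau,i}(a|s)$ combined with $A_\phi^{\tau,i}(s,a) > A_\phi^{\tau,i}(s,a_+^i)$ could flip the logit order in one step. The uniform advantage gap, which hinges on choosing $T_1$ large enough that $Q_\phi^{\tau,i}$ is within $\tfrac{\Delta}{4}$ of $Q_\phi^{\infty,i}$, is precisely what rules this out.
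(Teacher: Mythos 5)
Your proof is correct and follows essentially the same route as the paper: both establish the gradient ordering $\partial \Phi^{\tau}/\partial\theta^i_{s,a} \leq \partial \Phi^{\tau}/\partial\theta^i_{s,a_+^i}$ via the $Q$-gap $Q_\phi^{\tau,i}(s,a_+^i) > Q_\phi^{\tau,i}(s,a)$ obtained from the $\tfrac{\Delta}{4}$-approximation, and conclude that the logit (hence probability) ordering is preserved under the update. Your explicit induction and sign case-split on $A_\phi^{\tau,i}(s,a)$ merely spell out steps the paper leaves implicit (the paper's termwise product inequality silently uses $A_\phi^{\tau,i}(s,a_+^i)>0$ from its Lemma 4), so this is the same argument, slightly more carefully written.
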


\begin{proof}
Suppose $a_+^i\in I_+^{s,i}, a\in I_0^{s,i},$ if $\pi^{t,i}(a|s)\leq \pi^{t,i}(a_+^i|s)$, then
$$\frac{\partial \Phi^{t}(\mu)}{\partial \theta_{s,a}^{i}}=\frac{1}{1-\gamma}d_\mu^{\pi^t}(s)\pi^{t,i}(a|s)(Q_\phi^{{t,i}}(s,a)-V_\Phi^{{t}}(s))$$
$$\leq \frac{1}{1-\gamma}d_\mu^{\pi^t}(s)\pi^{t,i}(a_+^i|s)(Q_\phi^{{t,i}}(s,a_+^i)-V_\Phi^{{t}}(s))=\frac{\partial \Phi^{t}(\mu)}{\partial \theta_{s,a_+^i}^{i}}$$,
where the last step holds because $Q_\phi^{{t,i}}(s,a_+^i)\geq Q_\phi^{{\infty,i}}(s,a_+^i)-\frac{\Delta}{4}\geq Q_\phi^{{\infty,i}}(s,a)+\Delta-\frac{\Delta}{4}\geq Q_\phi^{{t,i}}(s,a)-\frac{\Delta}{4}+\Delta-\frac{\Delta}{4}> Q_\phi^{{t,i}}(s,a)$
for $t>T_0$.
\\We can then partition $I_0^{s,i}$ into $B_0^{s,i}(a_+^i)$ and $\Bar{B}_0^{s,i}(a_+^i)$ as follows: 
$$B_0^{s,i}(a_+^i):\{a|a\in I_0^{s,i}\text{ and }\forall t\geq T_0,\pi^{t,i}(a_+^i|s)< \pi^{t,i}(a|s)\}$$
$$\Bar{B}_0^{s,i}(a_+^i):I_0^{s,i}\setminus B_0^{s,i}(a_+^i).$$
\end{proof}

\begin{lemma}
Suppose $I_+^{s,i}\neq\emptyset$. $\forall a_+^i\in I_+^{s,i}$, we have that $B_0^{s,i}(a_+^i)\neq\emptyset$ and that $$\sum_{a^i\in B_0^{s,i}(a_+^i)}\pi^{t,i}(a^i|s)\rightarrow 1\text{, as } t\rightarrow\infty.$$
This implies that: $$\text{max}_{a^i\in B_0^{s,i}(a_+^i)}\theta_{s}^{t,i}\rightarrow\infty.$$

\end{lemma}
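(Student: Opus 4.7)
The plan is to decompose $I_0^{s,i} = B_0^{s,i}(a_+^i)\cup \bar B_0^{s,i}(a_+^i)$ and show the complement contributes vanishing probability, then transfer the accumulated mass in $B_0^{s,i}(a_+^i)$ to a statement about the parameters via the softmax denominator blowing up (Lemma 7).

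First, I would show that the probabilities in $\bar B_0^{s,i}(a_+^i)$ vanish. By definition, if $a\in \bar B_0^{s,i}(a_+^i)$ then there exists some $t_0\ge T_0$ with $\pi^{t_0,i}(a|s)\le \pi^{t_0,i}(a_+^i|s)$. The key observation (essentially the content of Lemma 8, but one may wish to spell it out) is that softmax parameters evolve \emph{monotonically with the gradient ordering}: if $\theta^{t,i}_{s,a}\le \theta^{t,i}_{s,a_+^i}$ and the gradient comparison $\partial\Phi^t/\partial\theta^i_{s,a}\le \partial\Phi^t/\partial\theta^i_{s,a_+^i}$ holds (which it does for $t\ge T_0$ by the computation in Lemma 8, since $Q_\phi^{t,i}(s,a_+^i)>Q_\phi^{t,i}(s,a)$ once $t>T_0$), then one update later the same inequality persists. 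Consequently $\pi^{t,i}(a|s)\le \pi^{t,i}(a_+^i|s)$ for all $t\ge t_0$. Since Lemma 5 gives $\pi^{t,i}(a_+^i|s)\to 0$, summing the finite set $\bar B_0^{s,i}(a_+^i)$ yields $\sum_{a\in \bar B_0^{s,i}(a_+^i)}\pi^{t,i}(a|s)\to 0$.

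Combining this with Lemma 5's statement $\sum_{a\in I_0^{s,i}}\pi^{t,i}(a|s)\to 1$, I obtain
\begin{align*}
\sum_{a\in B_0^{s,i}(a_+^i)}\pi^{t,i}(a|s)
=\sum_{a\in I_0^{s,i}}\pi^{t,i}(a|s)-\sum_{a\in \bar B_0^{s,i}(a_+^i)}\pi^{t,i}(a|s)\;\longrightarrow\;1.
\end{align*}
In particular $B_0^{s,i}(a_+^i)$ cannot be empty, or the left-hand side would be identically zero.

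Finally, for the parameter bound, write $\pi^{t,i}(a|s)=\exp(\theta^{t,i}_{s,a})/Z^{t,i}_s$ with $Z^{t,i}_s:=\sum_{a'\in\mathcal{A}^i}\exp(\theta^{t,i}_{s,a'})$. Lemma 7 already gives $Z^{t,i}_s\to\infty$ (its proof, triggered by $I_+^{s,i}\ne\emptyset$, applies here directly). Since $\sum_{a\in B_0^{s,i}(a_+^i)}\pi^{t,i}(a|s)\to 1$, the numerator $\sum_{a\in B_0^{s,i}(a_+^i)}\exp(\theta^{t,i}_{s,a})$ must also tend to infinity. Because $B_0^{s,i}(a_+^i)$ is a finite set, by pigeonhole $\max_{a\in B_0^{s,i}(a_+^i)}\exp(\theta^{t,i}_{s,a})\to\infty$, equivalently $\max_{a\in B_0^{s,i}(a_+^i)}\theta^{t,i}_{s,a}\to\infty$, completing the proof.

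The only mildly delicate step is the persistence argument in the first paragraph: one has to be careful that the parameter ordering $\theta^{t,i}_{s,a}\le \theta^{t,i}_{s,a_+^i}$ really is preserved by one gradient step, which uses both the gradient-form in Lemma 1 and the $Q$-value comparison holding \emph{uniformly} for $t\ge T_0$. Everything afterwards is bookkeeping built on Lemmas 5 and 7.
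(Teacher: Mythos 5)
Your proposal is correct and follows essentially the same route as the paper's proof: it uses Lemma 8 to propagate the ordering $\pi^{t,i}(a|s)\le\pi^{t,i}(a_+^i|s)$ for each $a\in\bar B_0^{s,i}(a_+^i)$, Lemma 5 to conclude the complement's mass vanishes and hence $\sum_{a\in B_0^{s,i}(a_+^i)}\pi^{t,i}(a|s)\to 1$, and the Lemma 7 argument ($Z^{t,i}_s\to\infty$ forces the numerator, hence the max parameter, to diverge) for the final claim. The only difference is that you spell out the "same techniques as Lemma 7" step explicitly, which the paper leaves as a citation.
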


\begin{proof}
Let $a_+^i\in I_+^{s,i}$. Consider any $\Bar{a}^i\in \Bar{B}_0^{s,i}(a_+^i)$. Then by definition of $\Bar{B}_0^{s,i}(a_+^i)$, there exists $t^\prime>T_0$ such that $\pi^{t,i}(a_+^i|s)\geq \pi^{t,i}(\Bar{a}^i|s)$. From lemma 8, we know $\forall \tau>t,\pi^{\tau,i}(a_+^i|s)\geq \pi^{\tau,i}(\Bar{a}^i|s)$. From lemma 5, we know $\pi^{t,i}(a_+^i|s)\rightarrow 0 \text{ as }t\rightarrow \infty$, which implies $$\pi^{t,i}(\Bar{a}^i|s)\rightarrow 0 \text{ as }t\rightarrow \infty.$$ \\Since $B_0^{s,i}(a_+^i)\cup \Bar{B}_0^{s,i}(a_+^i) = I_0^{s,i}$ and $\sum_{a^i\in I_0^{s,i}}\pi^{t,i}(a^i|s)\rightarrow 1$, we know $$\sum_{a^i\in B_0^{s,i}(a_+^i)}\pi^{t,i}(a^i|s)\rightarrow 1$$
$$B_0^{s,i}(a_+^i)\neq \oldemptyset$$
Using the same techniques in lemma 7, we know $$\text{max}_{a^i\in B_0^{s,i}(a_+^i)}\theta_{s}^{t,i}\rightarrow\infty$$
\end{proof}

\begin{lemma}
Consider any $s$ where $I_+^{s,i}\neq\emptyset$. Then, $\forall a_+^i\in I_+^{s,i}, \exists T_{a_+^i}$ such that $\forall t>T_{a_+^i},\forall a^i\in\Bar{B}_0^{s,i}(a_+^i),$
$$\pi^{t,i}(a_+^i|s)> \pi^{t,i}(a^i|s)$$
\end{lemma}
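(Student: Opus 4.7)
The plan is to sharpen the previous $\leq$-type propagation result (the one that produced the partition $I_0^{s,i} = B_0^{s,i}(a_+^i) \cup \bar{B}_0^{s,i}(a_+^i)$ via preservation of $\pi^{t,i}(a_+^i|s) \leq \pi^{t,i}(a|s)$) into a strict-inequality analogue, and then invoke the defining property of $\bar{B}_0^{s,i}(a_+^i)$ to start the strict propagation at a finite time. Taking a finite maximum over the finite set $\bar{B}_0^{s,i}(a_+^i)$ will yield the desired $T_{a_+^i}$.

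First I would establish the strict version: if $a_+^i \in I_+^{s,i}$, $a \in I_0^{s,i}$, and $\pi^{t,i}(a_+^i|s) \geq \pi^{t,i}(a|s)$ at some $t > T_0$, then $\pi^{t+1,i}(a_+^i|s) > \pi^{t+1,i}(a|s)$, and this strict inequality persists for all later times. The main mechanism is the strict gap $Q_\phi^{t,i}(s, a_+^i) - Q_\phi^{t,i}(s, a) \geq \Delta/2 > 0$ for $t > T_0$ (established inside the proof of the weak-propagation lemma), so $A_\phi^{t,i}(s, a_+^i) > A_\phi^{t,i}(s, a)$. Substituting this into the tabular softmax gradient formula,
\begin{align*}
    \frac{\partial \Phi^{t}(\mu)}{\partial \theta^i_{s, a_+^i}} - \frac{\partial \Phi^{t}(\mu)}{\partial \theta^i_{s, a}} = \frac{d_\mu^{\pi^{t}}(s)}{1-\gamma}\Bigl[\pi^{t,i}(a_+^i|s)A_\phi^{t,i}(s, a_+^i) - \pi^{t,i}(a|s)A_\phi^{t,i}(s, a)\Bigr],
\end{align*}
a short sign-based case analysis combined with $\pi^{t,i}(a_+^i|s) \geq \pi^{t,i}(a|s) > 0$ shows the bracket is strictly positive. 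The gradient step then produces a strictly positive $\theta^{t+1,i}_{s, a_+^i} - \theta^{t+1,i}_{s, a}$, and reapplying the argument inductively shows the strict gap only grows at every later step.

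Next I would combine this strict version with the definition of $\bar{B}_0^{s,i}(a_+^i)$: every $a \in \bar{B}_0^{s,i}(a_+^i)$ admits a witness time $t'_a \geq T_0$ at which $\pi^{t'_a,i}(a_+^i|s) \geq \pi^{t'_a,i}(a|s)$. Using the weak propagation to carry the hypothesis one step beyond $T_0$ when needed, the strict version then yields $\pi^{\tau,i}(a_+^i|s) > \pi^{\tau,i}(a|s)$ for every $\tau \geq t'_a + 2$. Since $\bar{B}_0^{s,i}(a_+^i) \subseteq \mathcal{A}^i$ is finite, setting $T_{a_+^i} := 1 + \max_{a \in \bar{B}_0^{s,i}(a_+^i)}(t'_a + 1)$ gives the claim.

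The main obstacle is the sign-handling inside the case analysis for $A_\phi^{t,i}(s, a)$: since $a \in I_0^{s,i}$ only guarantees $A_\phi^{\infty,i}(s, a) = 0$, at finite $t$ this advantage can be positive, zero, or negative, so one cannot simply multiply the policy inequality by the advantage inequality. However, splitting on the sign of $A_\phi^{t,i}(s, a)$ and invoking both the strict advantage gap $A_\phi^{t,i}(s, a_+^i) > A_\phi^{t,i}(s, a)$ and the positivity of $\pi^{t,i}(a_+^i|s) \geq \pi^{t,i}(a|s)$ shows $\pi^{t,i}(a_+^i|s)A_\phi^{t,i}(s, a_+^i) > \pi^{t,i}(a|s)A_\phi^{t,i}(s, a)$ in every configuration, making the case check routine but unavoidable.
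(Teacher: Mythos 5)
Your proposal follows essentially the same route as the paper: each $a^i\in\bar{B}_0^{s,i}(a_+^i)$ has, by definition of that set, a witness time at which $\pi^{t,i}(a_+^i|s)\geq\pi^{t,i}(a^i|s)$; the one-step propagation lemma (the paper's Lemma 8) carries this ordering forward to all later times; and a finite maximum over the finite set $\bar{B}_0^{s,i}(a_+^i)$ gives $T_{a_+^i}$. You are in fact more careful than the paper on the weak-versus-strict issue (the propagation lemma as stated only preserves $\leq$, yet the present statement asserts $>$), and your sharpening via a strictly increasing gap $\theta^{t,i}_{s,a_+^i}-\theta^{t,i}_{s,a^i}$ is the right fix. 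One caution on the case analysis: the inequality $\pi^{t,i}(a_+^i|s)A_\phi^{t,i}(s,a_+^i)>\pi^{t,i}(a^i|s)A_\phi^{t,i}(s,a^i)$ does \emph{not} follow from the advantage gap together with $\pi^{t,i}(a_+^i|s)\geq\pi^{t,i}(a^i|s)>0$ alone --- if both advantages were negative the product ordering could reverse (e.g.\ policies $0.1$ vs.\ $0.3$ with advantages $-1$ vs.\ $-0.5$). You must additionally invoke $A_\phi^{t,i}(s,a_+^i)>0$ for $a_+^i\in I_+^{s,i}$, which holds for $t>T_1$ by the earlier lemma giving $A_\phi^{t,i}(s,a_+^i)>\Delta/4$ (or already for $t>T_0$ via $V_\phi^t(s)\leq V_\phi^\infty(s)$ and $Q_\phi^{t,i}(s,a_+^i)\geq Q_\phi^{\infty,i}(s,a_+^i)-\Delta/4$); with that one extra fact every configuration is handled and your argument closes.
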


\begin{proof}
By the definition of $\Bar{B}_0^{s,i}(a_+^i)$ and lemma 8, there exists $t_{a^i}>T_0$ such that $\forall \tau>t_{a^i}$, $\pi^{\tau,i}(a_+^i|s)> \pi^{\tau,i}(a^i|s)$. We can choose $T_{a_+^i}=\text{max}_{a^i\in B_0^{s,i}(a_+^i)}t_{a^i}$.
\end{proof}

\begin{lemma}
$\forall a^i_+\in I_+^{s,i},$ we have $\theta_{s,a^i_+}^{i}$ is lower bounded as $t\rightarrow\infty$. $\forall a^i_-\in I_-^{s,i},$ we have that $\theta_{s,a^i_-}^{i}\rightarrow-\infty$ as $t\rightarrow\infty$. 
\end{lemma}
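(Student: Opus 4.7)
The first claim follows directly from the preceding monotonicity lemma: since $\theta^{t,i}_{s,a^i_+}$ is strictly increasing for $t\ge T_1$ whenever $a^i_+\in I_+^{s,i}$, we have $\theta^{t,i}_{s,a^i_+}\ge\theta^{T_1,i}_{s,a^i_+}$, a finite real number, providing the desired lower bound.

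For the second claim, the plan is a proof by contradiction. By the monotonicity lemma, $\theta^{t,i}_{s,a^i_-}$ is strictly decreasing for $t\ge T_1$ and hence converges in the extended reals to some $L\in[-\infty,\theta^{T_1,i}_{s,a^i_-}]$. Suppose $L>-\infty$. Then $e^{\theta^{t,i}_{s,a^i_-}}\ge e^L>0$, and combining with $\pi^{t,i}(a^i_-|s)\to 0$ (from the earlier vanishing-policy lemma) and the softmax identity $\pi^{t,i}(a^i_-|s)=e^{\theta^{t,i}_{s,a^i_-}}/Z^{t,i}(s)$, the normalizer $Z^{t,i}(s):=\sum_{a^i}e^{\theta^{t,i}_{s,a^i}}$ must diverge to $+\infty$. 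I would then extract an $\ell^1$ summability from the monotone convergence: since $\sum_{t\ge T_1}|\Delta\theta^{t,i}_{s,a^i_-}|=\theta^{T_1,i}_{s,a^i_-}-L<\infty$ and $|\Delta\theta^{t,i}_{s,a^i_-}|\ge\frac{\eta\,c_1\Delta}{4(1-\gamma)}\pi^{t,i}(a^i_-|s)$ for $t\ge T_1$ (using $|A^{t,i}(s,a^i_-)|\ge\Delta/4$ from the advantage lemma, plus a uniform lower bound $d^{\pi^t}_\mu(s)\ge c_1>0$ from Assumption~\ref{assumption:discounted state visitation distribution} together with continuity and convergence of $\pi^t$), we conclude $\sum_t\pi^{t,i}(a^i_-|s)<\infty$, hence $\sum_t 1/Z^{t,i}(s)<\infty$ via $\pi^{t,i}(a^i_-|s)\ge e^L/Z^{t,i}(s)$.

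Next I would localize where the blow-up of $Z^{t,i}(s)$ originates. The other sequences $\theta^{t,i}_{s,a^{i\prime}_-}$ for $a^{i\prime}_-\in I_-^{s,i}$ are decreasing and hence bounded above; if some $\theta^{t,i}_{s,a^i_+}$ for $a^i_+\in I_+^{s,i}$ diverged to $+\infty$, then $\pi^{*,i}(a^i_+|s)>0$, contradicting the vanishing-policy lemma. So some $a^i_0\in I_0^{s,i}$ must satisfy $\theta^{t,i}_{s,a^i_0}\to+\infty$; the preceding lemma on $B_0^{s,i}(a^i_+)$ (together with convergence of $\pi^{t,i}$) guarantees such an $a^i_0$ can be chosen with $\pi^{\infty,i}(a^i_0|s)>0$ when $I_+^{s,i}\ne\emptyset$, and a direct argument from $\sum_{a^i_0\in I_0^{s,i}}\pi^{t,i}(a^i_0|s)\to 1$ handles the case $I_+^{s,i}=\emptyset$. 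From the softmax identity this implies $\theta^{t,i}_{s,a^i_0}=\log\pi^{t,i}(a^i_0|s)+\log Z^{t,i}(s)\sim\log Z^{t,i}(s)$.

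The main obstacle is the final quantitative step: ruling out $\sum_t 1/Z^{t,i}(s)<\infty$ by bounding the growth of $Z^{t,i}(s)$. My plan is to combine two ingredients: $(i)$ descent-lemma summability $\sum_t\|\nabla_\theta\Phi(\theta_t)\|_2^2\le\frac{2(\Phi_{\max}-\Phi_{\min})}{\eta}<\infty$ (from $\beta$-smoothness via Lemma~\ref{lemma:smoothness_tabular_softmax} and $\eta\le 1/\beta$), which forces $A^{t,i}(s,a^i_0)\to 0$ and yields $\ell^2$ control on $|\Delta\theta^{t,i}_{s,a^i_0}|$; and $(ii)$ the conservation identity $\sum_{a^i}\theta^{t,i}_{s,a^i}=\mathrm{const}$ (from $\sum_{a^i}\pi^{t,i}(a^i|s)A^{t,i}(s,a^i)=0$), which ties any blow-up of $\theta^{t,i}_{s,a^i_0}$ to compensating decay in other coordinates that have exhausted summable budget. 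Together these should show $\theta^{t,i}_{s,a^i_0}$ grows only logarithmically, so $Z^{t,i}(s)=O(t)$ and $\sum_t 1/Z^{t,i}(s)=+\infty$, contradicting the earlier $\ell^1$ bound and completing the proof.
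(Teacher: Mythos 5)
Your proof of the first claim is exactly the paper's (monotonicity from the preceding lemma gives the lower bound), and your setup for the second claim — monotone convergence to a limit $L$, assuming $L>-\infty$ for contradiction, deducing that the softmax normalizer $\sum_{a^i}\exp(\theta^{t,i}_{s,a^i})$ diverges, and extracting $\sum_t \pi^{t,i}(a^i_-|s)<\infty$ from the telescoping bound $\sum_t|\Delta\theta^{t,i}_{s,a^i_-}|<\infty$ together with $|A^{t,i}_\phi(s,a^i_-)|\geq\Delta/4$ — is sound. However, the final quantitative step does not close. You need $\sum_t 1/\bigl(\sum_{a^i}\exp(\theta^{t,i}_{s,a^i})\bigr)=\infty$, i.e.\ essentially $\max_{a^i}\theta^{t,i}_{s,a^i}\leq\log t+O(1)$, but the ingredients you list cannot deliver this. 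The descent-lemma bound $\sum_t\norm{\nabla_\theta\Phi(\theta_t)}_2^2<\infty$ gives, via Cauchy--Schwarz, only $\sum_{t\leq T}|\Delta\theta^{t,i}_{s,a^i_0}|=O(\sqrt{T})$, hence normalizer growth of order $e^{c\sqrt{T}}$; since $\sum_t e^{-c\sqrt{t}}<\infty$, this produces no contradiction with $\sum_t 1/Z<\infty$. The conservation identity $\sum_{a^i}\theta^{t,i}_{s,a^i}=\mathrm{const}$ ties growth in one coordinate to decay in another but supplies no rate. So the contradiction is asserted, not proved, and I see no easy repair along this axis: logarithmic parameter growth for softmax gradient ascent is a delicate fact that is nowhere near a consequence of $\ell^2$-summability plus conservation.

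The paper reaches the contradiction by a different, rate-free device that you may want to adopt: it takes a coordinate $a^i$ with $\liminf_t\theta^{t,i}_{s,a^i}=-\infty$ (supplied by the earlier lemma showing $\min_{a^i}\theta^{t,i}_{s,a^i}\to-\infty$) and compares gradient \emph{components} rather than growth rates. On the iterations where $\theta^{t,i}_{s,a^i}$ decreases below a fixed threshold $\theta^i_0-\delta^i$, the ratio
$\bigl|\partial\Phi/\partial\theta^{i}_{s,a^i_-}\bigr| \big/ \bigl|\partial\Phi/\partial\theta^{i}_{s,a^i}\bigr| = \bigl|\pi^{t,i}(a^i_-|s)A^{t,i}_\phi(s,a^i_-)\bigr|\big/\bigl|\pi^{t,i}(a^i|s)A^{t,i}_\phi(s,a^i)\bigr|$
is bounded below by the constant $\exp(\delta^i)\Delta/(4(\Phi_{\max}-\Phi_{\min}))$, because $\pi^{t,i}(a^i_-|s)/\pi^{t,i}(a^i|s)=\exp(\theta^{t,i}_{s,a^i_-}-\theta^{t,i}_{s,a^i})$ and $\theta^{t,i}_{s,a^i_-}$ is (by the contradiction hypothesis) bounded below while $\theta^{t,i}_{s,a^i}$ dips arbitrarily far down. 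Both components are negative there, and the cumulative negative displacement of $\theta^{t,i}_{s,a^i}$ over those iterations diverges to $-\infty$; the constant ratio then forces $\theta^{t,i}_{s,a^i_-}\to-\infty$ as well, contradicting $L>-\infty$. This sidesteps any need to control how fast the normalizer grows.
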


\begin{proof}
From lemma 6, we know that $\forall a^i_+\in I_+^{s,i}$, after $T_1$, $\theta_{s,a^i_+}^{i}$ is strictly increasing, and is therefore bounded from below.
\\For the second claim, we know from lemma 6 that $\forall a^i_-\in I_-^{s,i}$, after $T_1$, 
$\theta_{s,a^i_-}^{i}$ is strictly decreasing. Then, by monotone convergence theorem, we know $\text{lim}_{t\rightarrow \infty}\theta_{s,a^i_-}^{i}$ exists and is either $-\infty$ or some constant $\theta_{0}^{i}$. We now prove by contraction that $\text{lim}_{t\rightarrow \infty}\theta_{s,a^i_-}^{i}$ cannot be some constant $\theta_{0}^{i}$. Suppose $\text{lim}_{t\rightarrow \infty}\theta_{s,a^i_-}^{i}=\theta_{0}^{i}$. We immediately know that $\forall t\geq T_1,\theta_{s,a^i_-}^{i}>\theta_{0}^{i}$. By lemma 7, we know $\exists a^i \in \mathbb{A_\phi}^i$ such that \begin{equation}\label{eq:3}\text{lim }\underset{t\rightarrow \infty}{\text{inf}}\theta_{s,a^i}^{t,i}=-\infty\end{equation}
Let us consider some $\delta^i>0$ such that $\theta_{s,a^i}^{T_1,i}\geq \theta_{0}^{i}-\delta^i$. Now for $t\geq T_1$, define $\tau^i(t)$ to be the largest iteration in $[T_1,t]$ such that $\theta_{s,a^i}^{\tau^i(t),i}\geq \theta_{0}^{i}-\delta^i$.
Define $\Tau^{t,i}$ to be the subsequence $\{t^{\prime}\}$ of the interval $(\tau^i(t),t)$ such that $\theta_{s,a^i}^{t^\prime,i}$ decreases. 
\\Define 
$$Z^{t,i}=\sum_{t^\prime\in\Tau^{t,i}}\frac{\partial \Phi^{t^\prime}(\mu)}{\partial \theta_{s,a^i}^{i}}$$
For non-empty $\Tau^{t,i}$, we have:
$$Z^{t,i}=\sum_{t^\prime\in\Tau^{t,i}}\frac{\partial \Phi^{t^\prime}(\mu)}{\partial \theta_{s,a^i}^{i}}\leq \sum_{t^\prime=\tau^i(t)+1}^{t-1}\frac{\partial \Phi^{t^\prime}(\mu)}{\partial \theta_{s,a^i}^{i}}\leq \sum_{t^\prime=\tau^i(t)}^{t-1}\frac{\partial \Phi^{t^\prime}(\mu)}{\partial \theta_{s,a^i}^{i}}+\frac{1}{(1-\gamma)}(\Phi_{\text{max}}-\Phi_{\text{min}})$$
$$=\frac{1}{\eta}(\theta_{s,a^i}^{t,i}-\theta_{s,a^i}^{\tau^i(t),i})+\frac{1}{(1-\gamma)}(\Phi_{\text{max}}-\Phi_{\text{min}})$$
where we have used that $|\frac{\partial \Phi^{t^\prime}(\mu)}{\partial \theta_{s,a^i}^{i}}|\leq \frac{1}{(1-\gamma)}(\Phi_{\text{max}}-\Phi_{\text{min}})$. \\By equation $(\ref{eq:3})$, we know \begin{equation} \label{eq:4}\text{lim }\underset{t\rightarrow \infty}{\text{inf}}Z^{t,i}=-\infty\end{equation}
\\For any $\Tau^{t,i}\neq \emptyset,\forall t^{\prime}\in \Tau^{t,i}$, from lemma 1, we know:
$$\Bigg|\frac{\partial \Phi^{t^\prime}(\mu)/\partial \theta_{s,a^i_-}^{i}}{\partial \Phi^{t^\prime}(\mu)/\partial \theta_{s,a^i}^{i}}\Bigg|=\Bigg|\frac{\pi^{t^\prime,i}(a_-^i|s)A_\phi^{t^\prime,i}(s,a_-^i)}{\pi^{t^\prime,i}(a^i|s)A_\phi^{t^\prime,i}(s,a^i)}\Bigg|\geq \text{exp}(\theta_{0}^{i}-\theta_{s,a^i}^{t^\prime,i})\frac{\Delta}{4(\Phi_{\text{max}}-\Phi_{\text{min}})}$$
$$\geq \text{exp}(\delta^i)\frac{\Delta}{4(\Phi_{\text{max}}-\Phi_{\text{min}})}$$
where we have used that $|A_\phi^{t^\prime,i}(s,a^i)|\leq \Phi_{\text{max}}-\Phi_{\text{min}}$ and $\forall t^\prime>T_1,|A_\phi^{t^\prime,i}(s,a_-^i)|\geq \frac{\Delta}{4}$.
Since both $\frac{\partial \Phi^{t^\prime}(\mu)}{\partial \theta_{s,a^i_-}^{i}}$ and $\frac{\partial \Phi^{t^\prime}(\mu)}{\partial \theta_{s,a^i}^{i}}$ are negative, we can get: \begin{equation} \label{eq:5}
\frac{\partial \Phi^{t^\prime}(\mu)}{\partial \theta_{s,a^i_-}^{i}}\leq \text{exp}(\delta^i)\frac{\Delta}{4(\Phi_{\text{max}}-\Phi_{\text{min}})}\frac{\partial \Phi^{t^\prime}(\mu)}{\partial \theta_{s,a^i}^{i}} 
\end{equation}
For non-empty $\Tau^{t,i}$, 
$$\frac{1}{\eta}(\theta_{s,a^i_-}^{t,i}-\theta_{s,a^i_-}^{T_1,i})=\sum_{t^\prime=T_1}^{t-1}\frac{\partial \Phi^{t^\prime}(\mu)}{\partial \theta_{s,a^i_-}^{i}}\leq \sum_{t^\prime\in\Tau^{t,i}}\frac{\partial \Phi^{t^\prime}(\mu)}{\partial \theta_{s,a^i_-}^{i}}$$
By equation ($\ref{eq:5}$)
$$\leq \text{exp}(\delta^i)\frac{\Delta}{4(\Phi_{\text{max}}-\Phi_{\text{min}})}\sum_{t^\prime\in\Tau^{t,i}}\frac{\partial \Phi^{t^\prime}(\mu)}{\partial \theta_{s,a^i}^{i}}$$
$$=\text{exp}(\delta^i)\frac{\Delta}{4(\Phi_{\text{max}}-\Phi_{\text{min}})}Z^{t,i}$$
which together with the fact that $\theta_{s,a^i_-}^{T_1,i}$ is some finite constant and equation ($\ref{eq:4}$) lead to $$\theta_{s,a^i_-}^{t,i}\rightarrow -\infty \text{ as }t\rightarrow\infty$$
this contradicts the assumption that $\{\theta_{s,a_-^{t,i}}^i\}_{t\geq T_1}$ is lower bounded by $\theta_0^i$ and complete the proof.
\end{proof}

\begin{lemma}
Consider any $s$ where $I_+^{s,i}\neq \emptyset$. Then, $\forall a_+^i\in I_+^{s,i}$, $$\sum_{a^i\in B_0^{s,i}(a_+^i)}\theta_{s,a}^{t,i}\rightarrow\infty $$

\end{lemma}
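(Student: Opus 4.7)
The plan is to combine two results already established in the preceding sequence of appendix lemmas: first, that the coordinate $\theta_{s,a_+^i}^{t,i}$ is lower bounded uniformly in $t$ for $a_+^i \in I_+^{s,i}$; and second, that $\max_{a^i \in B_0^{s,i}(a_+^i)} \theta_{s,a^i}^{t,i} \to +\infty$. The bridge between these is the very definition of $B_0^{s,i}(a_+^i)$, which states that for every $a^i \in B_0^{s,i}(a_+^i)$ and every $t \geq T_0$, $\pi^{t,i}(a^i|s) > \pi^{t,i}(a_+^i|s)$; under the tabular softmax parameterization, this is equivalent to the parameter-level ordering $\theta_{s,a^i}^{t,i} > \theta_{s,a_+^i}^{t,i}$.

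Given this, I would proceed in four short steps. Step one, invoke the earlier lemma that $\theta_{s,a_+^i}^{t,i}$ is lower bounded to extract a finite constant $L$ and a threshold $T_L$ with $\theta_{s,a_+^i}^{t,i} \geq L$ for all $t \geq T_L$. Step two, combine with the parameter-level ordering above to conclude $\theta_{s,a^i}^{t,i} \geq L$ for every $a^i \in B_0^{s,i}(a_+^i)$ and every $t \geq \max(T_0, T_L)$, giving a uniform lower bound on each summand. Step three, apply the earlier lemma to obtain $\max_{a^i \in B_0^{s,i}(a_+^i)} \theta_{s,a^i}^{t,i} \to +\infty$. Step four, assemble
\[
\sum_{a^i \in B_0^{s,i}(a_+^i)} \theta_{s,a^i}^{t,i} \;\geq\; \max_{a^i \in B_0^{s,i}(a_+^i)} \theta_{s,a^i}^{t,i} + \bigl(|B_0^{s,i}(a_+^i)| - 1\bigr)\, L,
\]
the right-hand side of which diverges to $+\infty$ as $t \to \infty$, finishing the proof.

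The argument presents no serious technical difficulty; its only subtlety is that the lower bound on $\theta_{s,a_+^i}^{t,i}$ must be uniform in $t$, not merely a liminf, since otherwise the non-maximum coordinates could conceivably drift toward $-\infty$ and cancel the divergence of the maximum. This uniformity is guaranteed because the earlier lemma showing $\theta_{s,a_+^i}^{t,i}$ is lower bounded leverages the fact that $\theta_{s,a_+^i}^{t,i}$ is strictly increasing past iteration $T_1$ for $a_+^i \in I_+^{s,i}$; hence any post-$T_1$ value of $\theta_{s,a_+^i}^{t,i}$ serves as a valid uniform lower bound thereafter, which is precisely what the inequality above requires.
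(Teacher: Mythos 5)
Your proposal is correct and follows essentially the same route as the paper: both use the definition of $B_0^{s,i}(a_+^i)$ to transfer the lower bound on $\theta_{s,a_+^i}^{t,i}$ (from the preceding lemma, via its monotone increase after $T_1$) to every coordinate in $B_0^{s,i}(a_+^i)$, and then combine with the divergence of the maximum over $B_0^{s,i}(a_+^i)$ to conclude the sum diverges. Your explicit inequality bounding the sum by the maximum plus $(|B_0^{s,i}(a_+^i)|-1)L$ just makes the paper's final step more precise.
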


\begin{proof}
For any $a^i\in B_0^{s,i}(a_+^i)$. By definition, we know that $\forall t>T_0, \pi^{t,i}(a_+^i|s)< \pi^{t,i}(a|s)$, which implies that $\theta_{s,a^i_+}^{t,i}<\theta_{s,a}^{t,i}$. Since in lemma 11, $\theta_{s,a^i_+}^{t,i}$ is lower bounded as $t\rightarrow\infty$, we know that $\theta_{s,a}^{t,i}$ is lower bounded as $t\rightarrow\infty$. This together with lemma 9 proves that $$\sum_{a^i\in B_0^{s,i}(a_+^i)}\theta_{s,a}^{t,i}\rightarrow\infty $$
\end{proof}

\begin{proof} [Proof of Theorem \ref{theorem:Asymptotic convergence to Nash with gradient ascent}]
Suppose $I_+^{s,i}$ is non-empty for some $s$, else the proof is complete. Let $a_+^i\in I_+^{s,i}$. Then, by lemma 12, we know
\begin{equation} \label{eq:6}
\sum_{a^i\in B_0^{s,i}(a_+^i)}\theta_{s}^{t,i}\rightarrow\infty
\end{equation}
For $a^i\in I_-^{s,i}$, since $\frac{\pi^{t,i}(a^i|s)}{\pi^{t,i}(a_+^i|s)}=\text{exp}(\theta_{s}^{t,i}-\theta_{s,a^i_+}^{t,i})\rightarrow 0$ (as $\theta_{s,a^i_+}^{t,i}$ is lower bounded and $\theta_{s}^{t,i}\rightarrow-\infty$ by lemma 11), there exists $T_2>T_0$ such that $$\frac{\pi^{t,i}(a^i|s)}{\pi^{t,i}(a_+^i|s)}<\frac{\Delta}{8|\mathcal{A}_\phi^i|(\Phi_{\text{max}}-\Phi_{\text{min}})}$$
\begin{equation} \label{eq:7}
\longrightarrow -\sum_{a^i\in I_-^{s,i}}\frac{\pi^{t,i}(a^i|s)}{\Phi_{\text{max}}-\Phi_{\text{min}}}>-\pi^{t,i}(a_+^i|s)\frac{\Delta}{8}
\end{equation}
For $a^i\in \Bar{B}_0^{s,i}$, by definition of $\Bar{B}_0^{s,i}$, we have $A_\phi^{t,i}(s,a^i)\rightarrow 0$ and by lemma 10, $\forall t>T_{a_+^i} 1<\frac{\pi^{t,i}(a_+^i|s)}{\pi^{t,i}(a^i|s)}$
. Then, $\exists T_3>T_2,T_{a_+^i}$ such that 
$$|A_\phi^{t,i}(s,a^i)|<\frac{\pi^{t,i}(a_+^i|s)}{\pi^{t,i}(a^i|s)}\frac{\Delta}{16|\mathcal{A}_\phi^i|}$$
$$\longrightarrow\sum_{a^i\in \Bar{B}_0^{s,i}(a_+^i)}\pi^{t,i}(a^i|s)|A_\phi^{t,i}(s,a^i)|<\pi^{t,i}(a^i_+|s)\frac{\Delta}{16}$$
\begin{equation} \label{eq:8}
\longrightarrow -\pi^{t,i}(a^i_+|s)\frac{\Delta}{16}<\sum_{a^i\in \Bar{B}_0^{s,i}(a_+^i)}\pi^{t,i}(a^i|s)A_\phi^{t,i}(s,a^i)<\pi^{t,i}(a^i_+|s)\frac{\Delta}{16}
\end{equation}
For $t>T_3$, 
$$0=\sum_{a^i\in\mathcal{A}_\phi^i}\pi^{t,i}(a^i|s)A_\phi^{t,i}(s,a^i)$$
$$=\sum_{a^i\in I_0^{s,i}}\pi^{t,i}(a^i|s)A_\phi^{t,i}(s,a^i)+\sum_{a^i\in I_+^{s,i}}\pi^{t,i}(a^i|s)A_\phi^{t,i}(s,a^i)+\sum_{a^i\in I_-^{s,i}}\pi^{t,i}(a^i|s)A_\phi^{t,i}(s,a^i)$$
$$\stackrel{(a)}{\geq}\sum_{a^i\in B_0^{s,i}(a_+^i)}\pi^{t,i}(a^i|s)A_\phi^{t,i}(s,a^i)+\sum_{a^i\in \Bar{B}_0^{s,i}(a_+^i)}\pi^{t,i}(a^i|s)A_\phi^{t,i}(s,a^i)$$
$$+\pi^{t,i}(a^i_+|s)A_\phi^{t,i}(s,a^i_+)+\sum_{a^i\in I_-^{s,i}}\pi^{t,i}(a^i|s)A_\phi^{t,i}(s,a^i)$$
$$\stackrel{(b)}{\geq}\sum_{a^i\in B_0^{s,i}(a_+^i)}\pi^{t,i}(a^i|s)A_\phi^{t,i}(s,a^i)+\sum_{a^i\in \Bar{B}_0^{s,i}(a_+^i)}\pi^{t,i}(a^i|s)A_\phi^{t,i}(s,a^i)+\pi^{t,i}(a^i_+|s)\frac{\Delta}{4}-\sum_{a^i\in I_-^{s,i}}\frac{\pi^{t,i}(a^i|s)}{\Phi_{\text{max}}-\Phi_{\text{min}}}$$
$$\stackrel{(c)}{\geq}\sum_{a^i\in B_0^{s,i}(a_+^i)}\pi^{t,i}(a^i|s)A_\phi^{t,i}(s,a^i)-\pi^{t,i}(a^i_+|s)\frac{\Delta}{16}+\pi^{t,i}(a^i_+|s)\frac{\Delta}{4}-\pi^{t,i}(a_+^i|s)\frac{\Delta}{8}$$
$$>\sum_{a^i\in B_0^{s,i}(a_+^i)}\pi^{t,i}(a^i|s)A_\phi^{t,i}(s,a^i)$$
where (a) uses $\forall a^i\in I_+^{s,i} \text{ and }t>T_3>T_1, A_\phi^{t,i}(s,a^i)>0$ from lemma 3, (b) uses $\forall t>T_3>T_1, A_\phi^{t,i}(s,a^i_+)>\frac{\Delta}{4}$ from lemma 3 and $A_\phi^{t,i}(s,a^i)\geq -(\Phi_{\text{max}}-\Phi_{\text{min}})$, (c) uses equation $(\ref{eq:7})$ and equation $(\ref{eq:8})$. This implies that $$\forall t>T_3, \sum_{a^i\in B_0^{s,i}(a_+^i)}\frac{\partial \Phi^{t}(\mu)}{\partial \theta_{s,a}^{i}}<0$$
which contradicts with equation $(\ref{eq:6})$ which leads to $$\text{lim}_{t\rightarrow\infty}\sum_{a^i\in B_0^{s,i}(a_+^i)}(\theta_{s,a^i}^{t,i}-\theta_{s,a^i}^{T_3,i})=\eta\sum_{t=T_3}^\infty\sum_{a^i\in B_0^{s,i}(a_+^i)}\frac{\partial \Phi^{t}(\mu)}{\partial \theta_{s,a}^{i}}\rightarrow\infty$$
Therefore, the set $I_+^{s,i}=\emptyset$.
\\Let $\theta=[\theta^{i,\infty},\theta^{-i,\infty}],\theta^\prime=[\theta^{i},\theta^{-i,\infty}]$.
$$V^{\pi_{\theta^{\prime}}}(\mu)-V^{\pi_{\theta}}(\mu)=\Phi^{\pi_{\theta^{\prime}}}(\mu)-\Phi^{\pi_{\theta^i}}(\mu)$$
$$=\E_{s_0\sim\mu}[V_\Phi^{\pi_{\theta^{\prime}}}(s_0)-V_\Phi^{\pi_{\theta}}(s_0)]$$
By performance difference lemma,
$$=\frac{1}{1-\gamma}\E_{s\sim d_\mu^{\pi_{\theta^{\prime}}}}[\E_{a\sim \pi_{\theta^{\prime}}(\cdot|s)}A^{\pi_\theta}_\Phi(s,a)]$$
$$=\frac{1}{1-\gamma}\E_{s\sim d_\mu^{\pi_{\theta^{\prime}}}}[\E_{a^i\sim \pi^i(\cdot|s)}[\E_{a^{-i}\sim\pi_{\theta^{\infty,-i}}(\cdot|s)}A^{\pi_\theta}_\Phi(s,a)]]$$
$$=\frac{1}{1-\gamma}\E_{s\sim d_\mu^{\pi_{\theta^{\prime}}}}[\E_{a^i\sim \pi^{i}(\cdot|s)}A^{\infty,i}_\Phi(s,a^i)]$$
Since $I_+^{s,i}=\emptyset$,
$$\leq\frac{1}{1-\gamma}\E_{s\sim d_\mu^{\pi_{\theta^{\prime}}}}[\E_{a^i\sim \pi^{\infty,i}(\cdot|s)}A^{\infty,i}_\Phi(s,a^i)]$$
$$=0$$
which completes the proof.

\end{proof}

\section{Proofs for Section \ref{sec:Policy gradient dynamics with log-barrier regularization}}

\subsection{Proof of Lemma \ref{lemma:Log barrier regularization's approximate first-order stationary points are near-Nash}}
\label{sec:Proof of Lemma lemma:Log barrier regularization's approximate first-order stationary points are near-Nash}
The proof extends the proof of Theorem 5.2 in \cite{agarwal2019theory} by the usage of the multi-agent performance difference lemma (Lemma C.1 in \cite{leonardos2021global}).

Fix an arbitrary agent $i\in\mathcal{N}$ and suppose it deviates from $\pi^i_{\theta^i}$ to an optimal policy $\pi^i_*(\theta^{-i})$ w.r.t.~the corresponding single-agent MDP specified by $\theta^{-i}$.
We will use $\pi^i_*$ as a shorthand for $\pi^i_*(\theta^{-i})$ and $\pi^{-i}$ as a shorthand for $\pi^{-i}_{\theta^{-i}}$.
By the definition of $\epsilon$-Nash, we need to show that $V^i_{\pi^i_*,\pi^{-i}}(\mu)-V^i_\theta(\mu)\leq2\lambda M$.

Similar to the proof of Theorem 5.2 in \cite{agarwal2019theory}, we can bound $A^i_\theta(s,a^i)\leq$ for any $(s,a^i)$-pair.
It suffices to bound $A^i_\theta(s,a^i)$ for any $(s,a^i)$ where $A^i_\theta(s,a^i) \geq 0$ (else $A^i_{\theta}(s,a^i)\leq$ is trivially true):
\begin{align*}
    \lambda/(2|\mathcal{S}||\mathcal{A}^i|)=: \epsilon_{\rm opt} \geq \frac{\partial L_\lambda(\theta)}{\partial \theta^i_{s,a^i}} \stackrel{{\rm (i)}}{=} d^{\pi_\theta}_\mu(s) \pi^i_{\theta^i}(a^i|s)A^i_\theta(s, a^i) + \frac{\lambda}{|\mathcal{S}|}\left(\frac{1}{|\mathcal{A}^i|} - \pi^i_{\theta^i}(a^i|s)\right)
    \geq \frac{\lambda}{|\mathcal{S}|}\left(\frac{1}{|\mathcal{A}^i|} - \pi^i_{\theta^i}(a^i|s)\right)
\end{align*}
where the last inequality is due to $A^i_\theta(s,a^i) \geq 0$,
and by rearranging we get $\pi^i_{\theta^i}(a^i|s)\geq 1/{2|\mathcal{A}^i|}$.
Solving {\rm (i)} for $A^i_\theta(s, a^i)$, we have
\begin{align*}
    A^i_\theta(s, a^i) =& \frac{1}{d^{\pi_\theta}_\mu(s)}\left(\frac{1}{\pi^i_{\theta^i}(a^i|s)}\frac{\partial L_\lambda(\theta)}{\partial \theta^i_{s,a^i}}+\frac{\lambda}{|\mathcal{S}|}\left(1-\frac{1}{\pi^i_{\theta^i}(a^i|s)|\mathcal{A}^i|}\right)\right)\\
    \leq& \frac{1}{d^{\pi_\theta}_\mu(s)}\left(2|\mathcal{A}^i|\epsilon_{\rm opt}+\frac{\lambda}{|\mathcal{S}|}\right) \qquad \text{($\pi^i_{\theta^i}(a^i|s)\geq 1/{2|\mathcal{A}^i|}$)}\\
    \leq& \frac{2\lambda}{{d^{\pi_\theta}_\mu(s)}|\mathcal{S}|} \qquad 
    \text{($\epsilon_{\rm opt}=\lambda/(2|\mathcal{S}||\mathcal{A}^i|)$)}
\end{align*}
We are now ready to use the multi-agent performance difference lemma on $\pi_*:=(\pi^i_*,\pi^{-i})$ and $\pi_\theta$:
\begin{align*}
    V^i_{\pi^i_*,\pi^{-i}}(\mu)-V^i_\theta(\mu) =& \E_{s\sim d^{\pi_*}_\mu}\E_{a^i\sim \pi^i_*(s)}\E_{a^{-i}\sim\pi^{-i}}\left[A^i_\theta(s,a^i,a^{-i})\right]\\
    =& \sum_s d^{\pi_*}_\mu(s)\sum_{a^i}\pi^i_*(a^i|s)A^i_\theta(s,a^i)\\
    \leq& \sum_s d^{\pi_*}_\mu(s) \frac{2
    \lambda}{d^{\pi_\theta}_\mu(s)|\mathcal{S}|} \leq 2\lambda\max_s \left(\frac{d^{\pi_*}_\mu(s)}{d^{\pi_\theta}_\mu(s)}\right)\leq 2\lambda M
\end{align*}
which concludes the proof.

\subsection{Proof of Theorem \ref{theorem:Convergence rate with log barrier regularization}}
\label{sec:Proof of Theorem theorem:Convergence rate with log barrier regularization}
Lemma \ref{lemma:smoothness_tabular_softmax} shows that $\Phi_\theta$ is $\frac{41N}{4(1-\gamma)^3}$-smooth.
Lemma D.4 in \cite{agarwal2019theory} shows that the regularizer for each agent $i$ is $\frac{2\lambda}{|\mathcal{S}|}$-smooth.
Thus, $\beta_\lambda$ is an upper bound on the smoothness of $L_\lambda(\theta)$.
Then, by standard results, we have
\begin{align*}
    \min_{t\leq T}\norm{\nabla_\theta L_\lambda(\theta^{(t)})}^2_2\leq\frac{2\beta_\lambda(L_\lambda(\theta^*)-L_\lambda(\theta_0))}{T} \leq \frac{2\beta_\lambda(\Phi_{\rm max}-\Phi_{\rm min})}{T}
    ,
\end{align*}
where the last inequality is because.
We need to choose $T$ large enough such that 
\begin{align*} \sqrt{\frac{2\beta_\lambda(\Phi_{\rm max}-\Phi_{\rm min})}{T}} \leq\lambda/(2|\mathcal{S}|\max_i|\mathcal{A}^i|)
    .
\end{align*}
Solving the above inequality we obtain $T\geq\frac{8\beta_\lambda|\mathcal{S}|^2\max_i|\mathcal{A}^i|^2(\Phi_{\rm max}-\Phi_{\rm min})}{\lambda^2}$.
By Lemma \ref{lemma:Log barrier regularization's approximate first-order stationary points are near-Nash}, we should set $\lambda=\epsilon/2M$ to achieve the specified Nash-gap of $\epsilon$.
Plugging in $\lambda=\epsilon/2M$ and $\beta_\lambda:=\frac{41N}{4(1-\gamma)^3}+\frac{2\lambda N}{|\mathcal{S}|}$, we have
\begin{align*}
    T \geq& \frac{32M^2|\mathcal{S}|^2\max_i|\mathcal{A}^i|^2\beta_\lambda(\Phi_{\rm max}-\Phi_{\rm min})}{\epsilon^2}\\
    =& \frac{328NM^2|\mathcal{S}|^2\max_i|\mathcal{A}^i|^2(\Phi_{\rm max}-\Phi_{\rm min})}{(1-\gamma)^3\epsilon^2} + \frac{64\lambda NM^2|\mathcal{S}|\max_i|\mathcal{A}^i|^2(\Phi_{\rm max}-\Phi_{\rm min})}{\epsilon^2} \\
    =&\frac{328NM^2|\mathcal{S}|^2\max_i|\mathcal{A}^i|^2(\Phi_{\rm max}-\Phi_{\rm min})}{(1-\gamma)^3\epsilon^2} + \frac{32 NM|\mathcal{S}|\max_i|\mathcal{A}^i|^2(\Phi_{\rm max}-\Phi_{\rm min})}{\epsilon}
\end{align*}
which completes the proof.

\section{Proofs for Section \ref{sec:Approximate best-response natural policy gradient dynamics}}

\subsection{Proof of Lemma \ref{lemma:NPG is effectively soft policy iteration}}
\label{sec:Proof of Lemma lemma:NPG is effectively soft policy iteration}
The proof is similar to that of the counterpart lemma for the single-agent setting (Lemma 5.1 of \cite{agarwal2019theory}).

For a vector $w\in\R^{|\mathcal{S}||\mathcal{A}^i|}$, define the error function
\begin{align*}
    L^i_\theta(w) = \E_{s\sim d^{\pi_\theta}_\mu, a^i \sim \pi^i_{\theta^i}(\cdot|s)}\left[w^\top\nabla_{\theta^i}\log\pi^i_{\theta^i}(a^i|s)-A^i_\theta(s,a^i)\right] = \norm{D^i_\theta \left((\nabla_{\theta^i}\log\pi^i_{\theta^i})w-A^i_\theta\right)}_2^2
\end{align*}
where 
$D^i_\theta\in\R^{|\mathcal{S}||\mathcal{A}^i| \times |\mathcal{S}||\mathcal{A}^i|}$ is the diagonal matrix with diagonal entries $\{d^{\pi_\theta}_\mu(s) \pi^i_{\theta^i}(a^i|s)\}_{s, a^i}$, and
$\nabla_{\theta^i}\log\pi^i_{\theta^i} \in \R^{|\mathcal{S}||\mathcal{A}^i| \times |\mathcal{S}||\mathcal{A}^i|}$ is the Jacobian matrix.
By the main property of the Moore–Penrose inverse for least squares, i.e., the minimizer of $\norm{Ax-b}_2^2$ with the smallest $\ell_2$ norm is $A^\dagger b$, we have
\begin{align*}
    w^*_\theta = \left(D^i_\theta(\nabla_{\theta^i}\log\pi^i_{\theta^i})\right)^\dagger\left(D^i_\theta A^i_\theta\right)
\end{align*}
where $w^*_\theta$ is the minimizer of $L^i_\theta(w)$ with the smallest $\ell_2$ norm.
One can verify that $w^*_\theta = $:
\begin{align*}
    (F^i_{\theta})^\dagger \nabla_{\theta^i}V^i_\theta(\mu) =& \left((\nabla_{\theta^i}\log\pi^i_{\theta^i})^\top D^i_\theta \nabla_{\theta^i}\log\pi^i_{\theta^i}\right)^\dagger \left((\nabla_{\theta^i}\log\pi^i_{\theta^i})^\top D^i_\theta A^i_\theta\right)\\
    =& \left(D^i_\theta \nabla_{\theta^i}\log\pi^i_{\theta^i}\right)^\dagger \left((\nabla_{\theta^i}\log\pi^i_{\theta^i})^\top \right)^\dagger \left((\nabla_{\theta^i}\log\pi^i_{\theta^i})^\top D^i_\theta A^i_\theta\right)\\
    =& \left(D^i_\theta \nabla_{\theta^i}\log\pi^i_{\theta^i}\right)^\dagger  \left(D^i_\theta A^i_\theta\right) \\
    =&  w^*_\theta
\end{align*}
We can then follow the same argument in the proof of Lemma 5.1 in \cite{agarwal2019theory} to show the claim of Lemma \ref{lemma:NPG is effectively soft policy iteration}.

\subsection{Proof of Theorem \ref{theorem:Convergence of approximate-best-response NPG}}
\label{sec:Proof of Theorem theorem:Convergence of approximate-best-response NPG}
Suppose the inner loop achieves $\frac{\epsilon}{2}$-near-optimal deviation, which require at most $\frac{4}{(1-\gamma)^2 \epsilon}$ inner iterations \cite{agarwal2019theory}.
Then, either the best-response iteration halts, or the total potential function is improved by at least $\frac{\epsilon}{2}$, which implies the number of outer iterations is at most $O(\frac{1}{(1-\gamma) \epsilon})$.

\section{Proofs for Section \ref{sec:Bounding the price of anarchy in smooth Markov (potential) games}}

\subsection{Proof of Theorem \ref{theorem:POA bound of maximum-gain epsilon-ratio-best-response in smooth MPGs}}
\label{sec:Proof of Theorem theorem:POA bound of maximum-gain epsilon-ratio-best-response in smooth MPGs}
We abbreviate $V^i_\pi(\mu)$ as $V^i_\pi$ and $V_\pi(\mu)$ as $V_\pi$. For any policy $\pi_t$, define $\delta^i(\pi_t) := V^i_{\pi^{i}_*,\pi^{-i}_t} - V^i_{\pi_t}$ and $\Delta(\pi_t) := \sum_i\delta^i(\pi_t)$.
We now have
\begin{align*}
    V_{\pi_t} = \sum_i V^i_{\pi_t} = \sum_i \left(V^i_{\pi^{i}_*,\pi^{-i}_t}-\delta^i(\pi_t)\right)
    \geq \alpha V_{\pi_*} - \beta V_{\pi_t} - \Delta(\pi_t)
\end{align*}
where the inequality is due to the $(\alpha,\beta)$-smoothness of the MPG, which implies
\begin{align} \label{eq:lower bound V_pi_t}
    V_{\pi_t} \geq \frac{\alpha}{1+\beta} V_{\pi_*} - \frac{1}{1+\beta}\Delta(\pi_t).
\end{align}

For a ``bad'' policy $\pi_t$ that violates \eqref{eq:good policies}, we have 
\begin{align*}
    \Delta(\pi_t) \geq& \alpha V_{\pi_*} - (1+\beta)V_{\pi_t}
    >  (1+\beta)(1+\sigma)V_{\pi_t} - (1+\beta) V_{\pi_t} = \sigma(1+\beta)V_{\pi_t} \geq \sigma(1+\beta)\Phi_{\pi_t}
\end{align*}
where the first inequality is directly from inequality \eqref{eq:lower bound V_pi_t}, the second inequality due to that $\pi_t$ is a bad policy, the third due to the assumption that $\Phi_\pi(s) \leq V_\pi(s)$.
Therefore, for the maximum-gain agent chosen to update from $t$ to $t+1$, the increase in its local value is at least $\frac{\sigma(1+\beta)}{N}\Phi_{\pi_t}$ since $\Delta(\pi_t) = \sum_i\delta^i(\pi_t)$. 
Due to the characteristic of $\Phi$ in \eqref{eq:total potential function}, we have
$\Phi_{t+1} -\Phi_{\pi_t} \geq \frac{\sigma(1+\beta)}{N}\Phi_{\pi_t}$, i.e.,
\begin{align}\label{eq:Phi increase ratio}
    \Phi_{t+1} \geq \left(1+{\sigma(1+\beta)}/{N}\right) \Phi_{\pi_t}.
\end{align}

For a good $\pi_t$ being updated, $\Phi$ can increase by a ratio of at least $\frac{1}{1-\epsilon}$ since 
\begin{align*}
   \frac{\Phi_{t+1}-\Phi_{t}}{\Phi_{t}} = \frac{V^i_{\pi_{t+1}}-V^i_{\pi_{t}}}{\Phi_{t}} = \frac{V_{\pi_{t+1}}-V_{\pi_{t}}}{\Phi_{t}} \geq
   \frac{V_{\pi_{t+1}}-V_{\pi_{t}}}{V_{\pi_{t}}} > \frac{1}{1-\epsilon} - 1
   .
\end{align*}
Let $m$ and $T-m$ be the number of bad and good policies in the sequence, respectively.
We then have $\Phi_0 (1+\frac{\sigma(1+\beta)}{N})^m(\frac{1}{1-\epsilon})^{T-m} \leq \Phi_{\rm max}$, which implies \eqref{eq:bound the number of bad, epsilon-ratio} and concludes the proof.

\subsection{Proof of Corollary \ref{corollary:Bounding the POA of NPG-BR dynamics in smooth MPGs}}
\label{sec:Proof of Corollary corollary:Bounding the POA of NPG-BR dynamics in smooth MPGs}
Similar to the proof of Theorem \ref{theorem:POA bound of maximum-gain epsilon-ratio-best-response in smooth MPGs}, we can obtain inequality \eqref{eq:Phi increase ratio} for a bad $\pi_t$, and for a good $\pi_t$, the $\epsilon/2$ increase per iteration implies
\begin{align*}
   \frac{\Phi_{t+1}-\Phi_{t}}{\Phi_{t}} = \frac{V^i_{\pi_{t+1}}-V^i_{\pi_{t}}}{\Phi_{t}} = \frac{V_{\pi_{t+1}}-V_{\pi_{t}}}{\Phi_{t}} \geq
   \frac{V_{\pi_{t+1}}-V_{\pi_{t}}}{V_{\pi_{t}}} > \frac{\epsilon/2}{1-\gamma}
   .
\end{align*}
Let $m$ and $T-m$ be the number of bad and good policies in the sequence, respectively.
We then have $\Phi_0 (1+\frac{\sigma(1+\beta)}{N})^m(1+\frac{\epsilon}{2(1-\gamma)})^{T-m} \leq \Phi_{\rm max}$, which implies \eqref{eq:bound the number of bad, NPG-BR} and concludes the proof.

\newpage\clearpage
\section{Experiment details}
\label{sec:Experiment details}
\subsection{Pseudocode for the reward function of our Coordination Game}
\begin{algorithm}
	\caption{Calculate the team reward for $N$ agents in state $s$}
	\begin{algorithmic}
	    \IF {$(N=2)$ or $(N=3)$}
		\STATE \texttt{difference\_bound=1}
        \ELSE 
        \STATE \texttt{difference\_bound=2}
		\ENDIF
		
		\IF {$\text{abs}(s.\text{count("0")}-s.\text{count("1"))}\leq \text{difference\_bound}$}
		\IF {$s.\text{count("0")}<s.\text{count("1")}$}
		\STATE reward$=1$
		\ELSE 
		\STATE reward$=0$
		\ENDIF

		\ELSIF {$s.\text{count('0')}>s.\text{count('1')}$}
		\STATE reward$=3$
        \ELSE 
        \STATE reward$=2$
		\ENDIF
	
	\end{algorithmic} 
\end{algorithm}

\subsection{Hyperparameters}
\begin{table}[h]
\centering
\caption{Hyperparameters}
\label{table:Hyperparameters}
\begin{tabular}{ll} 
\toprule
Hyperparameter                     & Value                                          \\ 
\hline
$\gamma$ (discount factor)         & 0.95                                           \\
$\mu$ (initial state distribution) & Uniform                                        \\
$\eta$ (learning rate)             & 0.1                                            \\
$\lambda$ (log barrier coefficient)                             & searched over $\{0.01, 0.1, 1.0, 10.0, 100.0\}$                                  \\
$K$ (NPG-BR inner-loop complexity)                                  & searched over $\{1,5,10,20,50\}$                                  \\
NN architecture                    & $2^N$-FC($2^N$)-FC($2^N$)-Linear($2$)-softmax  \\
\bottomrule
\end{tabular}

*The NN's input is the one-hot representation of the global state $s$.
\end{table}

\subsection{Computing resources}
The code is implemented by PyTorch, and a single run of 400 iterations took approximately 30, 50, 200 seconds for 2,3,5 agents version of the coordination game, respectively,   
using an NVIDIA Tesla V100 GPU and 32 CPU cores.

\newpage\clearpage
\section{Additional experimental results}
\subsection{Effect of $K$ for the NPG-BR dynamics}
\label{sec:Effect of $K$ for the NPG-BR dynamics}
In Figure \ref{fig:Softmax-POA-Nash-gap}, we plot the best-performing $K$ for $N=2,3,5$, respectively, in terms of the POA, with the results for each individual $K$ shown in Figure \ref{fig:effect_of_K}.
\begin{figure}[H]
    \centering
    \begin{subfigure}{\textwidth}
    \centering
    \includegraphics[width=.91\textwidth]{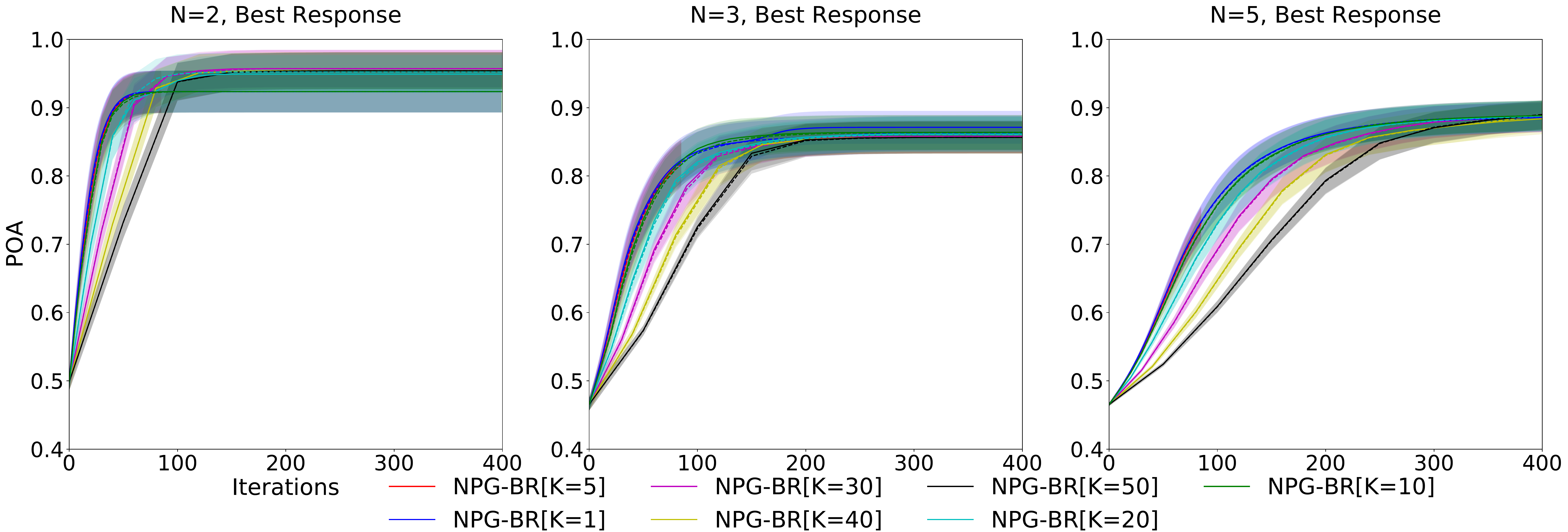}
    \end{subfigure}
    \begin{subfigure}{\textwidth}
    \centering
    \includegraphics[width=.91\textwidth]{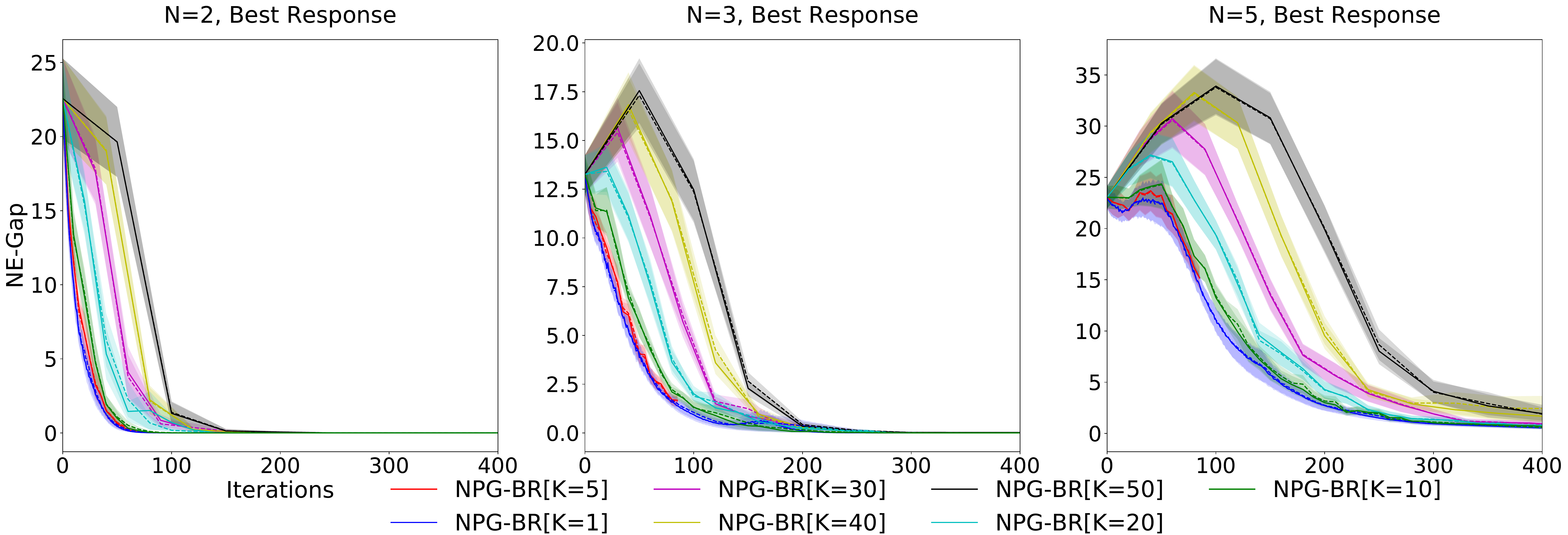}
    \end{subfigure}
    \caption{POA (top) and Nash-gap (bottom) under the tabular softmax parameterization (means and standard errors over 10 random initializations).The dashed lines are the curves of the log barrier regularized version of the algorithms with the same color. }
    \label{fig:effect_of_K}
\end{figure}

\subsection{Effect of the log barrier coefficient $\lambda$ for the PG dynamics under tabular softmax}
\begin{figure}[H]
    \centering
    \begin{subfigure}{\textwidth}
    \centering
    \includegraphics[width=.91\textwidth]{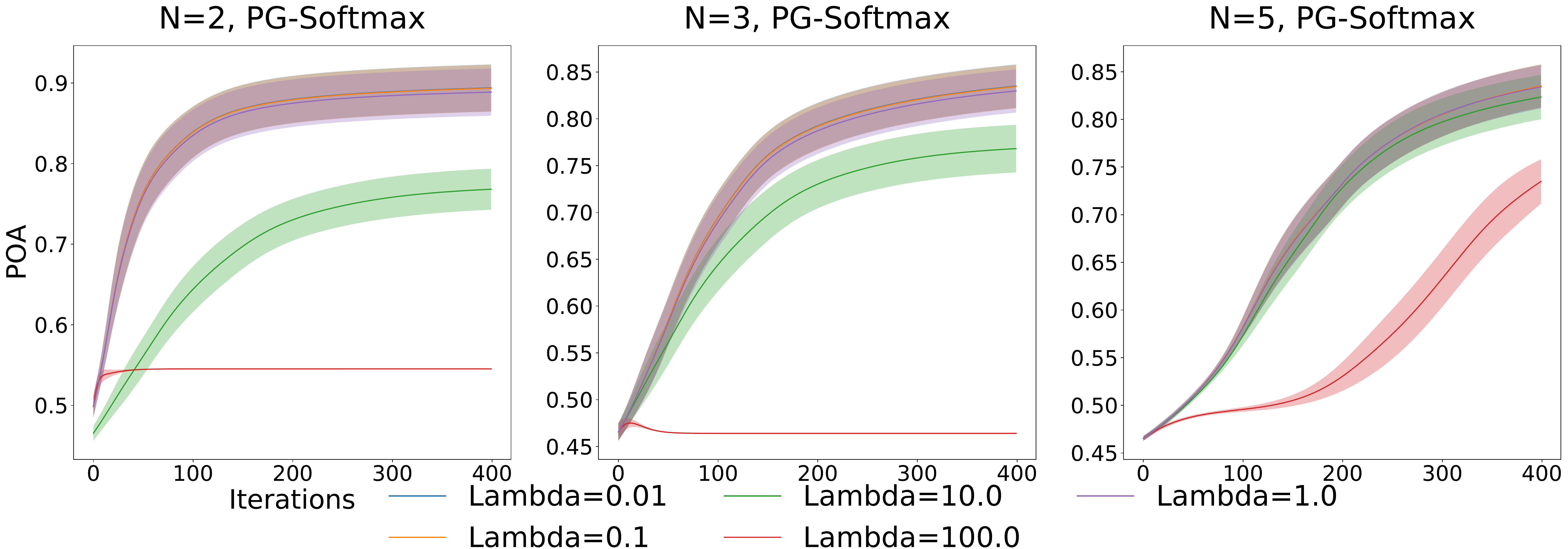}
    \end{subfigure}
    \begin{subfigure}{\textwidth}
    \centering
    \includegraphics[width=.91\textwidth]{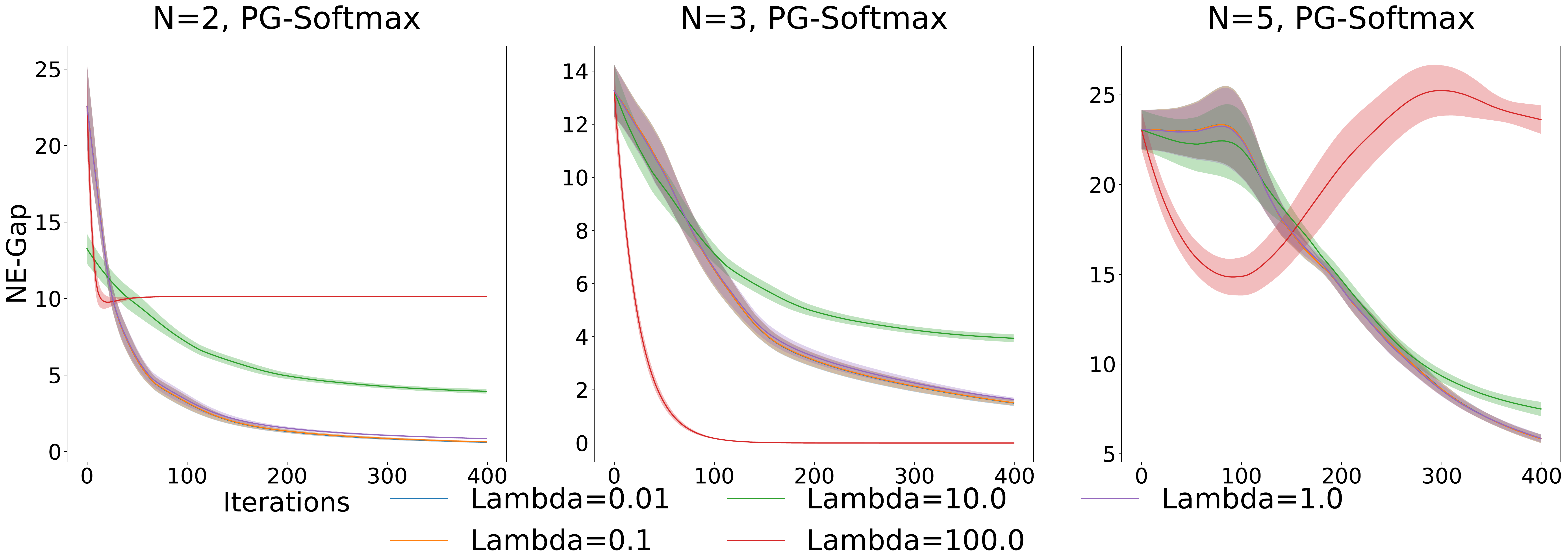}
    \end{subfigure}
    \caption{POA (top) and Nash-gap (bottom) for the PG dynamics under the tabular softmax parameterization (means and standard errors over 10 random initializations) with various choices for $\lambda$, the log barrier regularization coefficient. }
    \label{fig:effect_of_lambda}
\end{figure}
In Figure \ref{fig:Softmax-POA-Nash-gap}, we plot in the dashed lines the best-performing $\lambda$, the log barrier regularization coefficient, for the PG dynamics under tabular softmax.
Figure \ref{fig:effect_of_lambda} complement the results with the POA and Nash-gap curves with various choices for $\lambda$ we searched.

\end{document}